\newif\iflongversion\longversionfalse
\title{Complete Game Logic with Sabotage}
\author{Noah Abou El Wafa}
\affiliation{%
    \institution{Karlsruhe Institute of Technology}
    \city{Karlsruhe}
    \country{Germany}
}
\affiliation{%
    \institution{Carnegie Mellon University}
    \city{Pittsburgh}
    \country{USA}
}
\email{noah.abouelwafa@kit.edu}
\author{Andr\'{e} Platzer}
\affiliation{%
    \institution{Karlsruhe Institute of Technology}
    \city{Karlsruhe}
    \country{Germany}
}
\affiliation{%
    \institution{Carnegie Mellon University}
    \city{Pittsburgh}
    \country{USA}
}
\email{platzer@kit.edu}
    \newcommand{\apprefexp}[2]{\Cref{#1} in \Cref{#2}\xspace}
    \newcommand{\appref}[1]{\Cref{#1}\xspace}
    \newcommand{\normalinlongversion}{normal}
    \newcommand{\citefullarx}{}
    \newcommand{\apprefexp}[2]{the long version~\cite{arxivversion}\xspace}
    \newcommand{\appref}[1]{the long version~\cite{arxivversion}\xspace}
    \newcommand{\thefullversion}{the long version~\cite{arxivversion}\xspace}
    \newcommand{\normalinlongversion}{normal}
\definecolor{darkishgray}{rgb}{.35,.35,.35}
\renewcommand{\linferenceRuleNameSeparation}{~~~}
\newcommand{\Iref}[1]{\eqref{#1}}
\begin{document}
\newsavebox{\glfpsbox}%
\sbox{\glfpsbox}{\textcolor{darkishgray}{$\glfps$}}
\newsavebox{\FPglfpsbox}%
\sbox{\FPglfpsbox}{\textcolor{darkishgray}{FP${}_\glfps$}}

\newsavebox{\ggfpsbox}%
\sbox{\ggfpsbox}{\textcolor{darkishgray}{$\cev{\glfps}$}}
\newsavebox{\FPggfpsbox}%
\sbox{\FPggfpsbox}{\textcolor{darkishgray}{FP${}_{\ggfps}$}}

\newsavebox{\USarg}%
\sbox{\USarg}{$\boldsymbol{\cdot}$}

\newsavebox{\sabsymbox}%
\sbox{\sabsymbox}{$\sabsym$}
\newsavebox{\dsabsymbox}%
\sbox{\dsabsymbox}{$\dsabsym$}

\theoremstyle{acmdefinition}
\newtheorem{remark}[theorem]{Remark}
\newtheorem{claim}{Claim}

\renewcommand{\axkey}[1]{\textcolor{emphcolor}{#1}}

\begin{abstract}
    \emph{\Glsname} ($\glsshortname$) is introduced as a simple and natural extension of Parikh's \glname with a single additional primitive, which allows players to lay traps for the opponent.
    $\glsshortname$ can be used to model infinite sabotage games, in which players can change the rules during game play.
    In contrast to \glname, which is strictly less expressive, $\glsshortname$ is exactly as expressive as the modal $\mu$-calculus.
    This reveals a close connection between the entangled nested recursion inherent in modal fixpoint logics and adversarial dynamic rule changes characteristic for sabotage games.
    A natural Hilbert-style proof calculus for $\glsshortname$ is presented and proved complete using syntactic equiexpressiveness reductions.
    The completeness of a simple extension of Parikh's calculus for \glname follows.
\end{abstract}

\begin{CCSXML}
    <ccs2012>
    <concept>
    <concept_id>10003752.10003790.10003792</concept_id>
    <concept_desc>Theory of computation~Proof theory</concept_desc>
    <concept_significance>500</concept_significance>
    </concept>
    <concept>
    <concept_id>10003752.10003790.10003793</concept_id>
    <concept_desc>Theory of computation~Modal and temporal logics</concept_desc>
    <concept_significance>500</concept_significance>
    </concept>
    <concept>
    <concept_id>10003752.10003790</concept_id>
    <concept_desc>Theory of computation~Logic</concept_desc>
    <concept_significance>500</concept_significance>
    </concept>
    </ccs2012>
\end{CCSXML}

\ccsdesc[500]{Theory of computation~Logic}
\ccsdesc[500]{Theory of computation~Proof theory}
\ccsdesc[500]{Theory of computation~Modal and temporal logics}

\keywords{game logic, \mutex-calculus, proof theory, completeness, expressiveness, sabotage games}

\maketitle

\section{Introduction}

Games such as the Ehrenfeucht-Fra\"iss\'e game are invaluable tools in the study of logic \cite{Ehrenfeucht1961AnAO}
and some deep results about logic can be proved with the help of games \cite{DBLP:conf/stoc/GurevichH82,782638}.
Logic can even be given meaning via games in the form of game-theoretical semantics \cite{hintikka1982game}.

Dually, logical methods are frequently used to study games \cite{DBLP:conf/concur/ChatterjeeHP07}.
Logic and games meet most directly in logics specifically designed for the study of games,
such as \glname (\glshortname) due to Parikh \cite{DBLP:conf/focs/Parikh83}, which allows reasoning about the existence of winning strategies in a game.
This requires giving exact meaning to general games, a nontrivial task for games that are not limited to a fixed number of rounds.
Nested alternating least and greatest fixpoints can provide the correct denotational semantics for games,
when they are used to reflect the alternating responsibilities of the respective players at their decision points in the dynamic games~\cite{DBLP:conf/focs/Parikh83}.

Fixpoints also play an important role in logic,
for example in modal fixpoint logics such as the \lmuname (\lmushortname).
\lmushortname is where logic, games and fixpoints begin to converge.
In fact \glname can be expressed in the \lmuname using alternating fixpoint formulas to directly capture the semantics of alternating game play.
However this first encounter is imperfect.
After 20 years it was shown that \glshortname is in fact strictly less expressive than \lmushortname \cite{DBLP:journals/mst/BerwangerGL07}.

The purpose of this paper is to remedy this situation by unifying the three fundamental concepts of logic, games, and fixpoints
in a small and natural completion of \glname, which is shown to be \emph{equivalent} to the fixpoint logic \lmushortname and to have a complete proof calculus.
This identification of fixpoints with games \emph{eliminates} the difference between interactive game play and alternating fixpoints!
The key insight behind this paper is that, because \glname can already express sufficient adversarial dynamics to express the alternating fixpoints of \lmushortname and is merely lacking a suitable way of referring to fixpoints by their respective fixpoint variables,
this deficiency can be solved in a parsimonious and purely game-theoretic way.
This is done in \emph{\glsname} (\glsshortname), a new extension of \glshortname.
In \glsname, reference can be expressed, not through the unstructured use of fixpoint variables as in the \lmuname,
but by using a simple game operator \(\gangelswin{\gatom}\) that \emph{changes} the rules of subsequent game play.
Playing the game \(\gangelswin{\gatom}\) has the effect that the game \(\gatom\) is reserved exclusively for one player in the future.
This can be used to change the rules of a game dynamically according to rules that are explicit in the original game.
This simple and natural mechanism of imperative game play is \emph{expressively equivalent} to the functional mechanism of unstructured nested named (co)recursion with the fixpoint variables in the alternating fixpoints of \lmushortname.
The role the sabotage \(\gangelswin{\gatom}\) plays in establishing the equiexpressiveness reveals an interesting connection between games with sabotage and the nesting of fixpoints in the \lmuname which have previously been studied separately.

Formulas of the \lmuname are frequently easiest to understand through their corresponding validity or model-checking games \cite{DBLP:journals/tcs/NiwinskiW96}.
This is complicated by the unstructured goto-like action a fixpoint variable induces.
\Glsname avoids this problem, as \glsshortname formulas describe two-player games built up from simple connectives and, instead of fixpoint variables,
players only need to consider the previously committed acts of sabotage,
making \glsname a very intuitive logic with very high expressive power.
By the equivalence of \lmushortname and \glsshortname, many desirable properties of the \lmuname, such as decidability and the small model property, transfer to \glsname for free.
And as \lmushortname is precisely the bisimulation-invariant
fragment of monadic second-order logic \cite{DBLP:conf/concur/JaninW96}, this expressive completeness also holds for \glsshortname.
Moreover, completeness of an axiomatization of \glsshortname can be obtained through the translation.
This is in contrast to the original axiomatization for \glname,
for which completeness is still open after four decades \cite{kloibhofer2023note} despite considerable attention \cite{DBLP:journals/tcs/EnqvistSV18}.
\glsshortname promises to be a useful tool for understanding \glshortname.
For instance the completeness of \glsshortname yields a \emph{completion} of Parikh's
proof calculus for \glshortname, which axiomatizes \glshortname completely.
To the best of our knowledge this is the only complete proof calculus for \glname to date.
The embedding from \glsname to the \lmuname also suggests that the same property can be expressed significantly more concisely in \glsname than in the \lmuname showing that, while theoretically equivalent, they are practically very different.

The equiexpressiveness proof of \lmushortname and \glshortname is modular, which simplifies the proofs
and illuminates the differences between the modes of expression in game and fixpoint logics.
The proof builds on a homomorphic translation between fixpoint logic and an extension of \glshortname with the recursion from \lmushortname transferred to games (\glmuname \glmushortname).
However owing to the sequential composition of games,
the expressive power of \glmushortname is the same as the highly expressive \flcname (\flcnameshort), which is far beyond \lmushortname.
The next key step in the proof is the identification of the fragment of \glmushortname formulas corresponding to \lmushortname formulas.
Using sabotage, this fragment can be captured merely with simpler iteration games and no general form of recursion.
These semantic translations are then used to transform proofs from fixpoint
logics to game logics and proofs with recursive games to proofs with iteration games and sabotage.
The correctness of these transformations is shown by proving the \emph{syntactical provability} of the correctness of the translations in the proof calculi and in this way lifting \emph{semantic equiexpressiveness} proofs to \emph{syntactic completeness} proofs.

The contributions of this paper are fourfold.
Firstly, \glsshortname, a new minimal, natural, concise and intuitive extension of \glname is introduced.
Secondly, it is shown that \glsshortname is expressively exactly as powerful as the \lmuname and, consequently,
many desirable logical properties of \lmushortname transfer to \glsshortname.
Thirdly, a sound proof calculus for \glsshortname is presented, proved complete, and completeness is transferred to obtain a complete extension of Parikh's \glshortname calculus.
Fourthly, \glmushortname is introduced and proved equiexpressive to \flcnameshort.

\paragraph{Outline}
Basic definitions are recalled in \Cref{prelim}.
\Cref{secgamelogics} introduces \glsname (\glsshortname)
and another extension of \glshortname, called \glmuname (\glmushortname), that will play an intermediary role in translating between the \lmuname and \glsname and is of independent interest.
\Cref{secfplogics} briefly recalls the definitions of two modal fixpoint logics: the \lmuname and \flcname.
In \Cref{expressiveness}, the expressive power of fragments of \glmuname is
compared to modal fixpoint logics, and the equiexpressiveness of \glmuname and \flcname, as well as of the \lmuname and \glsname, are shown.
In \Cref{calculi}, an axiomatization for \glsname is presented and its completeness is proved by reduction to completeness of the \lmuname.
The completeness of a simple extension of Parikh's \glshortname calculus follows.
\iflongversion
\else
    All omitted proofs are in the full version~\cite{arxivversion}.
\fi

\paragraph{Related Work}
Sabotage games have been considered to model algorithms under adversarial conditions and in learning \cite{DBLP:conf/birthday/Benthem05, DBLP:conf/mallow/GierasimczukKV09}.
Previous work using modal logic in Sabotage Modal Logic (SML)
\cite{DBLP:conf/birthday/Benthem05,DBLP:journals/logcom/AucherBG18} differs from \glsname.
Unlike in \glsshortname, where sabotage is modelled as changing the meaning of the game described syntactically, SML describes sabotage as changing the structure of interpretation.
The sabotage \(\mu\)-calculus was investigated for modelling infinite sabotage games~
\cite{DBLP:conf/fossacs/Rohde06}.
In contrast to \glsshortname, satisfiability for SML is undecidable and lacks the finite model property \cite{DBLP:conf/fsttcs/LodingR03}.

Applications of \glname are reported elsewhere
\cite{DBLP:journals/sLogica/PaulyP03a, DBLP:journals/tocl/Platzer15,DBLP:journals/tocl/Platzer17,Platzer18}.
The relation of games, \glname, fixpoints and the \lmuname has been considered  extensively
\cite{DBLP:conf/focs/EmersonJ91,DBLP:conf/focs/Parikh83,DBLP:conf/concur/Bradfield96,DBLP:journals/mst/BerwangerGL07,DBLP:conf/fossacs/Clairambault09,FacchiniVenemaZanas-ACharacterizationTh,DBLP:journals/tcs/NiwinskiW96,DBLP:conf/tacas/Stirling96,DBLP:conf/concur/JaninW96}.
The \lmuname and its relation to model checking is well-studied \cite{BradfieldS06,DBLP:reference/mc/BradfieldW18,DBLP:conf/focs/Pratt81b,DBLP:journals/tcs/EmersonJS01}.
Completeness for game logic was conjectured \cite{DBLP:conf/focs/Parikh83}.
A completeness proof based on a cut-free calculus for \lmushortname \cite{DBLP:conf/lics/AfshariL17}
was suggested for \glname \cite{DBLP:journals/tcs/EnqvistSV18}.
It was recently shown not to work \cite{kloibhofer2023note}.
Relative completeness of differential game logic, an extension of
game logic with differential equations, has already been proved \cite{DBLP:journals/tocl/Platzer15}.
For first-order versions of game logic and the \lmuname
equiexpressiveness and relative completeness
were shown \cite{wafa2022firstorder}.

\section{Preliminaries}\label{prelim}
\pratendSetLocal{category=prelim}

\paragraph{\WRegfuncname{}s}

A monotone function \(\wreg:\wregse{\xset}\) is called an
\emph{\wregfuncname} \cite{DBLP:conf/lics/EnqvistHKMV19}.
The semantics of a game will be given in terms of an \wregfuncname{},
where \(\wreg(\xpel)\) denotes the winning region, i.e. the set of all states from which a given player can win into
the goal region \(\xpel\).
Such functions are naturally monotone, as any point in the winning region for a goal \(\xpel\)
is also in the winning region for the larger goal \(\ypel\supseteq \xpel\).
Let \(\wregs{\xset}\) be the set of all \wregfuncname{}s ordered by point-wise inclusion, i.e.
\(\wreg\subseteq\wregb\) if \(\wreg(\xpel)\subseteq\wregb(\xpel)\) for all \(\xpel\subseteq \xset\).

\begin{definition}
    \begin{enumerate}
        \item Given a set \(\xpel\subseteq\xset\), the intersection \wregfuncname is
              \(\wregtest{\xpel}(\ypel) = \xpel\cap\ypel\)
              and the constant \wregfuncname is \(\nfuncconst{\xpel}(\ypel)=\xpel\) for all \(\ypel\subseteq\xset\).
        \item For an \wregfuncname  \(\wreg\in\wregs{\xset}\) its dual is
              \(\wregdual{\wreg}(\xpel) = \xset\setminus \wreg(\xset\setminus\xpel)\)
              for all \(\xpel\subseteq\xset\).
        \item An \wregfuncname \(\wreg\in\wregs{\xset}\) is \emph{relational} if there is a relation
              \(R\subseteq \xset\times\xset\) on \(\xset\) such that
              \[\wreg(\xpel) = R \compose \xpel = \{\xel : \mexists{\yel\in\xpel} \;\xel R\yel\}.\]
        \item For an \wregfuncname \(\wreg\in\wregs{\xset}\) define
              \[\lfp{\xpel}{\wreg(\xpel)}
                  =\Intersection\{\xpel \in \pow{\xset} :\wreg(\xpel)\subseteq \xpel\}\]
        \item For a transformation \(\nftonffunc:\wregs{\xset}\to\wregs{\xset}\) on \(\wregs{\xset}\) define
              \[\lfp{\wreg}{\nftonffunc(\wreg)}
                  =\Intersection\{\wreg \in \wregs{\xset} :\nftonffunc(\wreg)\subseteq \wreg\}\]
    \end{enumerate}
\end{definition}

As usual \(\lfp{\xpel}{\wreg(\xpel)}\) and~\(\lfp{\nfuncb}{\nftonffunc(\nfuncb)}\)
are the least fixpoints of~\(\wreg\) and~\(\nftonffunc\), respectively,
provided \(\nftonffunc\) is monotone \cite{arnold2001rudiments}.
In the sequel it will be necessary to work with fixpoints of monotone functions
on \(\pow{\xset}\) and also with fixpoints of monotone functions on \(\wregs{\xset}\). Under some conditions the latter can be viewed pointwise as the former.

\begin{lemmaE}[]\label{generalfixpointconstant}
    Suppose \(\nftonffunc:\wregs{\xset}\to\wregs{\xset}\) is monotone
    and \(\nftonffunc(\wregb)(\xpel) = \nftonffunc(\nfuncconst{\wregb(\xpel)})(\xpel)\)
    for all \(\wregb\in \wregs{\xset}\) and all \(\xpel\subseteq\xset\), then
    \begin{enumerate}
        \item \((\lfp{\wregb}{\nftonffunc(\wregb)})(\xpel) =
              \lfp{\ypel}{(\nftonffunc(\nfuncconst{\ypel})(\xpel))}\)\label{generalfixpointconstantmu}
        \item \((\gfp{\wregb}{\nftonffunc(\wregb)})(\xpel) =
              \gfp{\ypel}{(\nftonffunc(\nfuncconst{\ypel})(\xpel))}\)\label{generalfixpointconstantnu}
    \end{enumerate}
\end{lemmaE}

\begin{proofE}
    \Iref{generalfixpointconstantmu}
    Let \(\wreg= \lfp{\wregb}{\nftonffunc(\wregb)}\).
    To show the \(\subseteq\) inclusion
    consider any \(\ypel\in\pow{\xset}\) with
    \(\nftonffunc(\nfuncconst{\ypel})(\xpel) \subseteq \ypel\)
    and show that \(\wreg(\xpel)\subseteq \ypel\).
    Define \(\wregb:\pow{\xset}\to\pow{\xset}\)
    \[\wregb(\zpel) =\begin{cases}
            \ypel & \text{if } \zpel\subseteq \xpel \\
            \xset & \text{otherwise}
        \end{cases}\]
    and note that it is monotone, i.e. \(\wregb\in\wregs{\xset}\).
    Moreover by the assumption of the lemma for all \(\zpel\subseteq\xpel\):
    \[\nftonffunc(\wregb)(\zpel)=\nftonffunc(\nfuncconst{\wregb(\zpel)})(\zpel)=\nftonffunc(\nfuncconst{\ypel})(\zpel)
        \subseteq \nftonffunc(\nfuncconst{\ypel})(\xpel)\subseteq \ypel=\wregb(\zpel)\]
    and consequently \(\nftonffunc(\wregb)\subseteq \wregb\).
    (The case for \(\zpel\not\subseteq\xpel\) is easy.)
    Because \(\wreg\) is the least fixpoint
    \(\wreg\subseteq\wregb\) and in particular \(\wreg(\xpel)\subseteq \wregb(\xpel)= \ypel\).

    For the \(\supseteq\) inclusion note that \(\nftonffunc(\wreg)\subseteq\wreg\) because \(\wreg\) is a fixpoint.
    Hence by the assumption of the lemma
    for all \(\xpel\):
    \[\nftonffunc(\nfuncconst{\wreg(\xpel)})(\xpel)=
        \nftonffunc(\wreg)(\xpel)\subseteq \wreg(\xpel).\]
    By minimality of the fixpoint
    \[\wreg(\xpel)\supseteq \lfp{\ypel}{(\nftonffunc(\nfuncconst{\ypel})(\xpel))}.\]

    \Iref{generalfixpointconstantnu} is analogous.
\end{proofE}

\paragraph{Neighbourhood and Kripke Structures}

Throughout the paper fix countably infinite sets \(\patoms\) of propositional constants, \(\pvars\) of fixpoint variables, and \(\gatoms\) of \atgamename{}s, respectively.

\Glname and the \lmuname can be interpreted over coalgebraic
structures \cite{DBLP:journals/corr/abs-1105-2246}.
While the \lmuname is commonly interpreted over Kripke structures,
\glname was originally interpreted over the more general class of neighbourhood
models \cite{DBLP:journals/flap/Negri17a}.
The results in this paper hold for both classes of models equally.
\begin{definition}
    A \emph{\nstname} is a triple \(\nst\) consisting of a set of states \(\nstdom{\nst}\)
    and functions
    \[\nstpv{\nst}{\cdot}:\patoms \to \pow{\nstdom{\nst}}\qquad
        \nstnf{\nst}{\cdot}:\gatoms \to \wregs{\nstdom{\nst}}\]
    assigning a valuation \(\nstpv{\nst}{\patom}\) to every atomic proposition
    \(\patom\in\patoms\) and
    an \wregfuncname \(\nstpv{\nst}{\gatom}\) to every \atgamename{} \(\gatom\in\gatoms\).
    The structure \(\nst\) is a \emph{\kstname} iff each \(\nstnf{\nst}{\gatom}\) is relational.
\end{definition}

\section{Extensions of \GLname}\label{secgamelogics}
\pratendSetLocal{category=secgamelogics}

\subsection{\GLSname}

\emph{\Glsname} (\glsshortname) is an extension of \glname defined
by adding the \sabactionname games \(\gangelswin{\gatom}\).
Formulas and games of \glsshortname are given
by the following grammar:
\begin{align*}
    \glrulessyntax
\end{align*}
where \(\patom\in\patoms\) and \(\gatom\in\gatoms\).
Syntactically, \glname (\glshortname) is the fragment without \sabactionname{}s \(\gangelswin{\gatom}\).

The formula \(\fdia{\gam}{\fml}\) expresses that player Angel has a winning strategy in the
game $\gam$ to reach one of the states in which $\fml$ is true.
The test game $\gtest{\fml}$ is lost prematurely by Angel unless the formula~$\fml$ is true in the current state.
The choice game ${\gam}\gor{\gamb}$ allows Angel to choose between playing $\gam$ or $\gamb$.
To play the sequential game $\gam;\gamb$ is
to play $\gamb$ after $\gam$ (unless a player already lost the play of $\gam$).
The repetition game $\gstar{\gam}$ allows Angel to decide after each completed round of $\gam$ whether
to stop playing \(\gam\) or to repeat \(\gam\).
The dual operator~\(\gdual{}\) flips games around to the opponent's perspective and is used to define the Demonic sabotage, test, choice, and repetition.

The additional primitive \(\gangelswin{\gatom}\) is the \sabactionname.
When \(\gangelswin{\gatom}\) is played, Angel sabotages the \atgamename \(\gatom\).
It has the effect that, should Demon try to play \(\gatom\)
at any point in the subsequent game (by reaching \(\gdual{\gatom}\)),
he loses the game prematurely.
But if Angel~plays \(\gatom\) after it has been sabotaged by \(\gangelswin{\gatom}\)
it will simply be skipped and the game continues in the same state.
The \atgamename \(\gatom\) remains sabotaged throughout the game.
The only thing that may change is the last player to commit the \sabactionname.
If Demon at some later point manages to play a \sabactionname of the same \atgamename \(\gdemonswin{\gatom}\),
the game will then be sabotaged in the dual way.
That means the next time Angel plays~\(\gatom\) she loses the game immediately.
Once a \sabactionname has been played for an \atgamename it can only change hands,
but will not be played normally again.

Sabotage may be viewed as setting a trap for the opponent.
After playing the \sabactionname game \(\gangelswin{\gatom}\)
the \emph{\atgamename} \(\gatom\) becomes a trap for Demon that Angel can simply evade.
Viewed differently, if Angel plays \(\gangelswin{\gatom}\), she
claims the \atgamename \(\gatom\) for herself.
The opponent is not allowed to play \(\gatom\) and would forfeit the game by trying, unless
he first claims \(\gatom\) for himself.
Playing \(\gdemonswin{\gatom}\) dually means the game \(\gatom\) belongs to Demon
until it returns to Angel.
The effect of the claim is that the subsequent rules for playing \(\gatom\) and \(\gdual{\gatom}\) \emph{change} as in Table~\ref{tab:deftable}.

\begin{table}[htb]
    \centering
    \caption{Effect of Rule Changes by \(\gangelswin{\gatom}\) and \(\gdemonswin{\gatom}\)}
    \begin{tabular}{ r@{~}c | l   l }
        \multicolumn{2}{c|}{Owner of $\gatom$} & Rules for $\gatom$ & Rules for $\gdual{\gatom}$                                    \\
        \hline
        Neither                                & $\playernocon$     & $\gatom$ played normally   & $\gdual{\gatom}$ played normally \\
        Angel                                  & $\playeronecon$    & $\gatom$ skipped           & Angel wins $\gdual{\gatom}$      \\
        Demon                                  & $\playertwocon$    & Demon wins  $\gatom$       & $\gdual{\gatom}$ skipped         \\
    \end{tabular}
    \label{tab:deftable}
\end{table}

\paragraph{Abbreviations and Conventions}

The dual game connectives for Demon's choice,
test and repetition are defined as in \glname.
That is \(\gam\gand\gamb\) abbreviates \(\gdualp{\gdual{\gam}\gor\gdual{\gamb}}\),
which leaves the choice of whether to play \(\gam\) or \(\gamb\) to Demon.
Analogously \(\gdtest{\fml}\) represents a test Demon needs to pass \(\gdualp{\gtest{\fml}}\)
and \(\gdstar{\gam}\) is the repetition \(\gdualp{\gstar{\gdual{\gam}}}\) controlled by Demon.
The box modality \(\fbox{\gam}{\fml}\synequiv\fnot\fdia{\gam}{\fnot\fml}\) and
the propositional connectives \(\fand,\fimply, \fequiv\) and \(\ftrue,\ffalse\) are defined as usual.
By convention sequential composition \(\gcom\) binds stronger than choice \(\gor,\gand\) and conjunction and disjunction bind stronger than implication.

\paragraph{Example: Crossing Bridges}
A simple bridge crossing game illustrates sabotage.
Suppose Angel and Demon begin on the bank of a river
with two different bridges \(\gatom\) and \(\gatomb\).
Demon begins the game by choosing one of the two bridges and sabotaging it.
Subsequently Angel chooses which one of the two bridges to cross.
If Angel chooses to cross the bridge that Demon has sabotaged, she loses.
Angel knows which bridge has been sabotaged.
The \glsshortname game
\[(\gdemonswin{\gatom}\gand\gdemonswin{\gatomb})\gcom(\gatom\gor\gatomb)\]
captures this game.
If Angel's objective is merely to cross any bridge, she has a winning strategy.
That is what it means for the formula
\[\fdia{(\gdemonswin{\gatom}\gand\gdemonswin{\gatomb})\gcom(\gatom\gor\gatomb)}{\ftrue}\]
to be satisfied in the current state.
If Angel wants to ensure that she reaches a point on the other side
in which \(\patom\) holds and only \(\gatom\) leads to such a point,
then Demon has a winning strategy to thwart Angel.
The following formula says exactly that:
\[
    (\fdia{\gatom}{\patom}\land\fdia{\gatomb}{\fnot\patom})    \fimply
    \axkey{\fbox{(\gdemonswin{\gatom}\gand\gdemonswin{\gatomb})\gcom(\gatom\gor\gatomb)}{\fnot\patom}}\]

\paragraph{Infinite Plays}
In \glsname just as in \glname it is possible for the two players to
play infinitely long.
Angel could for example choose to repeat the \atgamename \(\gatom\) in \(\gstar{({\gatom})}\) every time.
In this case the game never comes to an end.
This strategy cannot be winning, since the semantics of infinite plays is defined so that the player who causes the game to be infinite (by repeating a subgame infinitely often that is not contained in another subgame that is repeated infinitely often) loses.

\paragraph{Example: The Euler Path Game}

An example to illustrate the potential concision of \glsname
compared to the \lmuname is the Euler path game, which captures
the existence of an Euler path in a graph.
A related, but slightly simpler problem is defining, with a formula \(\fmlb\),
the states corresponding to nodes from which there is a path reaching a state
in which \(\patom\) is true using each edge \(\gatom_i\) \emph{at most once}. The formula
\[\fdia{\gstarp{\gatom_1\gcom\gdemonswin{\gatom_1}\gor\ldots\gor\gatom_k\gcom\gdemonswin{\gatom_1}}}{\patom}\]
captures this.
It can be viewed as modelling a game in which Angel, trying to make her way
across several bridges connecting a town, can choose to take any available link \(\gatom_i\) she likes.
However upon her crossing Demon sabotages the bridge (\(\gdemonswin{\gatom_i}\)), so that the next time Angel attempts to cross (\(\gatom_i\)), the bridge collapses and she loses.
Hence Angel wins exactly if she can find a way to a place where \(\patom\) is true without ever crossing the same bridge twice.

For the Euler path game the additional restriction that every bridge must be taken once needs to be added.
This can be done with a second sabotage mechanism as follows.
\[\fdia{\gdemonswin{\gatomb_1}\gcom\ldots\gdemonswin{\gatomb_k}\gcom\gstarp{\gatom_1\gcom\gdemonswin{\gatom_1}\gcom\gangelswin{\gatomb_1}\gor\ldots\gor\gatom_k\gcom\gdemonswin{\gatom_k}\gcom\gangelswin{\gatomb_k}}\gcom\gatomb_1\gcom\ldots\gcom\gatomb_k}{\ftrue}\]
The additional \atgamename{}s \(\gatomb_i\) are initially sabotaged by Demon
and ensure that Angel cannot ultimately pass the checkpoints \(\gatomb_1\gcom\ldots\gcom\gatomb_k\) unless she has
undone each of the initial Demonic \sabactionname{}s (\(\gdemonswin{\gatomb_i}\)), which
she can do only by crossing all bridges.

The Euler path example illustrates that \glsshortname and \lmushortname
are two substantially different specification languages.
In \lmushortname the shortest known formula to express the Euler path property
simply lists all potential Euler paths.
(Here \(S_k\) is the symmetric group of degree \(k\).)
\[\forbig_{\sigma\in S_k} \fdia{\gatom_{\sigma(1)}}{\fdia{\gatom_{\sigma(2)}}{\ldots\fdia{\gatom_{\sigma(k)}}{\ftrue}}}\]

This also indicates that sabotage descriptions can be significantly more concise than
the corresponding \lmushortname versions.
In the example the length of the Euler path \glshortname formula is \emph{linear} in the size of the graph \(k\),
whereas the length of the \lmushortname formula is \emph{factorial} in \(k\).

\subsection{\GLmuname{}}

The equiexpressiveness of \glsname and the \lmuname will be proved in a modular way.
Unlike \glshortname the \lmuname allows for nested recursive \emph{reference} and this
gap between \glmurlshortname and \lmushortname will be closed first by introducing an extension of \glname
that extends it with an analogous form of \emph{game} reference.
However reference added to \glshortname goes \emph{beyond} the \lmuname
and the expressive power is equivalent to \flcname (see \Cref{secfplogics})!
For a schematic overview of how the logics introduced in the sequel relate to one another
see \Cref{comparexpressiveness} in \Cref{expressiveness}.

\emph{\Glmuname} (\glmushortname), an extension of \glname,
admits arbitrarily recursive and corecursive games.
This increases the expressive power of \glname greatly and serves as the technical intermediary connecting \glsname to the \lmuname.
The syntax of \glmuname (\glmushortname) is defined by the following grammar
\begin{align*}
    \glmusyntax
\end{align*}
where \(\patom\in\patoms\), \(\gatom\in\gatoms\) and \(\gvar\in\gvars\).
Additionally the restriction is placed on games that games \(\gtest{\fml}\) are only allowed for \emph{closed}
formulas \(\fml\) and that free variables \(\gvar\) can only be bound by \(\glfp{\gvar}{\gam}\)
if \(\gvar\) appears only in the scope of an even number of dual operators~\(\gdual{\cdot}\) in \(\gam\).
Syntactically the only difference between \glmuname and \glname is that repetition games \(\gstar{\gam}\)
have been replaced by \emph{recursive subgames} of the form \(\glfp{\gvar}{\gam}\)
and that games \(\gvar\) to recursively call a subgame have been added.

Intuitively a recursive game \(\glfp{\gvar}{\gam}\) is played just like \(\gam\) until the subgame is called again because \(\gvar\) is encountered.
In this case the game is interrupted and the players begin another recursive subgame of \(\glfp{\gvar}{\gam}\).
Once the players finish playing this subgame,
they continue to play the interrupted game in the state they reached,

\subsubsection{Notation}
The formula (game) obtained from \(\fml\) (\(\gam\))
by replacing every \emph{free} appearance of \(\gvar\) by the game \(\gamb\)
is denoted \(\freplace{\fml}{\gvar}{\gamb}\) (\(\freplace{\gam}{\gvar}{\gamb}\)).
The abbreviations for the usual propositional symbols
and the Demonic connectives are defined just as in \glsshortname.
The dual \emph{corecursive} version \(\ggfp{\gvar}{\gam}\) of a recursive subgame is defined
as \(\gdualp{\glfp{\gvar}{\freplace{\gdual{\gam}}{\gvar}{\gdual{\gvar}}}}\).
This game is played similarly to \(\glfp{\gvar}{\gam}\).
The only difference is which of the players is held responsible if
the game is played infinitely long.
If the largest subgame that is repeated infinitely often during a play is a recursive game of the form \(\glfp{\gvar}{\gam}\), then Angel loses the game.
If the largest such game is of the form \(\ggfp{\gvar}{\gam}\), then Demon loses.

\subsubsection{Examples}
An example of a game with recursion is
\[\glfp{\gvar}{(\gatomc\gor\gatom\gcom\gvar\gcom\gatomb)}.\]
The recursive subgame \emph{declaration} does not require either player to make a move.
The first move of the game is Angel's and she gets to choose whether
to play \(\gatomc\) or \(\gatom\gcom\gvar\gcom\gatomb\).
In the first case \(\gatomc\) is played and the game ends.
If she chooses to play \(\gatom\gcom\gvar\gcom\gatomb\), then, after $\gatom$,
the game is continued by playing the game referred to by \(\gvar\),
which is \(\glfp{\gvar}{(\gatomc\gor\gatom\gcom\gvar\gcom\gatomb)}\) again.
However after this game is completed, \(\gatomb\) will be played still before
the full game finally ends.
A run of the game behaves like \(\gatom^{n}\gcom\gatomc\gcom\gatomb^n\) for some \(n\geq 0\),
which cannot be described in Parikh's \glname,
which lacks the facilities to retain the number of games~\(\gatomb\) that
still have to be played after Angel chooses to stop the repetition by playing \(\gatomc\).

\subsubsection{Game Logic} \label{gamelogicsubsection}
Syntactically, Parikh's \glname (\glshortname) \cite{DBLP:conf/focs/Parikh83}
is the fragment of \glmuname with restricted repetition.
Instead of arbitrary (co)recursive games \(\glfp{\gvar}{\gam}\) and \(\ggfp{\gvar}{\gam}\)
only iteration games of the form \(\gstar{\gatom}\) are permitted.
Iteration games are defined by
\[\gstar{\gam} \synequiv \glfp{\gvar}{(\gam\gcom\gvar\gor\gtest{\ftrue})}.\]
Here Angel chooses whether to play \(\gam\) and then continue with \(\gstar{\gam}\)
or whether to end the game in the current state.
The Demonic iteration game is \(\gdstar{\gam}\synequiv\gdual{{\gstarp{\gdual{\gam}}}}\).
The semantics of \glsname will be defined so that it agrees with the usual
semantics, i.e. \glsname is a genuine extension of \glname.

\subsection{Semantics of \GLname{}s}

A denotational semantics for \glmuname and \glsname
can be defined in a simple and compositional way.
Superficially both semantics are different from the usual semantics
of \glname. However it will be shown that for \glshortname formulas the semantics
of \glmushortname and \glsshortname agree with
the standard~\glshortname semantics.

\subsubsection{Semantics of \GLSname}

The semantics of a \glsshortname game depends on the \sabactionname{}s
that players have played in the run of the game so far.
To keep track of these, games and formulas of \glsname must be evaluated
in a \contextname.
A \emph{\contextname} is a function \(\cont : \gatoms \to \{\playernocon,\playeronecon,\playertwocon\}\)
indicating which player has last sabotaged an \atgamename (recall \Cref{tab:deftable}).
All \contextname{}s are assumed to have finite support,
that is \(\cont(\gatom) = \playernocon\) for all but finitely many \(\gatom\).
Let \(\contexts\) be the set of all \contextname{}s
and let \(\contzero\) be the constant \contextname without any \atgamename{}s sabotaged, i.e. \(\contzero(\gatom)=\playernocon\) for all \(\gatom\).
For any set \(U\subseteq\nstdom{\nst}\times\contexts\)
and any context \(\cont\in\contexts\)
let \(\contproj{U}{\cont}=\{\omega : (\omega,\cont)\in U\}\)
be the projection on \(\nstdom{\nst}\).
And for \(\gatom\in\gatoms\) the \(\playeronecon\)-sabotage winning region of \(\xpel\subseteq\nstdom{\nst}\times\contexts\) is
\[\setpwins{\gatom}{\playeronecon}{\xpel} = \{(\omega,\cont) : (\omega,\contmod[\cont]{\gatom}{\playeronecon})\in\xpel\}.\]

To interpret an \atgamename \(\gatom\), it is necessary to consider the context
in which it is played.
If one of the players has already sabotaged \(\gatom\), the normal rules
no longer apply.
To take this into account the semantics \(\nstnf{\nst}{\gatom}\in \wregs{\nstdom{\nst}}\)
is lifted to \(\nstnflift{\nst}{\gatom}\in\wregs{\nstdom{\nst}\times\contexts}\).
For every \(U\subseteq\nstdom{\nst}\times\contexts\) the lifting is defined by
\((\omega,\cont)\in \nstnflift{\nst}{\gatom}(U)\) iff
\begin{enumerate}
    \item \(\cont(\gatom) = \playernocon\) and \(\omega\in\nstnf{\nst}{\gatom}(\contproj{U}{\cont})\) or
    \item \(\cont(\gatom) =\playeronecon\) and \((\omega,\cont)\in U\)
\end{enumerate}
If \(\gatom\) has never been sabotaged (i.e. \(\cont(\gatom)=\playernocon\)),
Angel can win game \(\gatom\) from a position \(\omega\) in context \(\cont\) into the set \(U\) exactly if she
can win a game of \(\gatom\) played according to the usual rules into \(\contproj{U}{\cont}\).
If \(\gatom\) belongs to Angel (i.e. \(\cont(\gatom)=\playeronecon\)), she can also win exactly if the current state \(\omega\) and \contextname \(\cont\) are already in \(U\).
However if~\(\gatom\) belongs to Demon (i.e. \(\cont(\gatom)=\playertwocon\)), Angel has already lost.
This formalizes the effect of rule change as described in \Cref{tab:deftable}.
For any context~\(\cont\)  dual context~\(\contdual{\cont}\) turns
Angelic sabotages into Demonic sabotages and vice versa:
\begin{align*}
    \contdual{\cont}(\gatom)
    =
    \begin{cases}
        \playernocon  & \text{if }\cont(\gatom) = \playernocon
        \\
        \playertwocon & \text{if }\cont(\gatom) = \playeronecon
        \\
        \playeronecon & \text{if }\cont(\gatom) = \playertwocon
    \end{cases}
\end{align*}

For a set \(\xpel\subseteq W\times\contexts\) the \emph{sabotage complement}
is \(\conextcomp{\xpel}=\{(\omega,\cont) : (\omega,\contdual{\cont})\notin\xpel\}\)
and for a function \(\wreg\in \wregs{W\times\contexts}\) the \emph{\sabdual} is
\[\conextdual{\wreg}(\xpel) = \conextcomp{\wreg(\conextcomp{\xpel})}.\]
The sabotage dual extends the notion of the ordinary dual to sabotage games.
For the lifted semantics
\((\omega,\cont)\in \conextdual{(\nstnflift{\nst}{\gatom})}(U)\) iff
\begin{enumerate}
    \item \(\cont(\gatom) = \playernocon\) and \(\omega\in\wregdual{(\nstnf{\nst}{\gatom})}(\contproj{U}{\cont})\) or
    \item \(\cont(\gatom) = \playeronecon\) or
    \item \(\cont(\gatom) = \playertwocon\) and \((\omega,\cont)\in U\)
\end{enumerate}

The semantics of formulas and games of \glsname with respect to a \nstname
is defined by mutual induction on the definition of formulas and games of \glsname.\footnote{%
    Unlike for sabotage modal logic \cite{DBLP:conf/lori/AucherBG15} the semantics
    is not defined in terms of a changing model. Instead the state space is enlarged
    to contain the states of the structure and independently keep track of the
    \sabactionname{}s played.
    The definition is similar in spirit to the modified semantics for the sabotage \(\mu\)-calculus \cite{DBLP:conf/lori/AucherBG15}.
    Unlike in the definition of the \lmuname augmented with sabotage \cite{DBLP:conf/fossacs/Rohde06} the traps set persist throughout multiple repetitions of game \(\gstar{\gam}\) instead of resetting without cause.
}
\begin{definition}\label{glrulessemantics}
    For any \nstname \(\nst\)
    the \emph{semantics} of a \glsshortname formula \(\fml\)
    is a set \(\semglfc[\nst]{\fml}\in\pow{\nstdom{\nst}\times \contexts}\)
    \begin{align*}
         &
        \semglfc[\nst]{\patom}
        =
        \nstpv{\nst}{\patom}\times\contexts
         &   &
        \semglfc[\nst]{\fdia{\gam}\fml}
        =
        \semglgc[\nst]{\gam}(\semglfc[\nst]{\fml})
        \\
         &
        \semglfc[\nst]{\fnot\fml}
        =
        \conextcomp{\semglfc[\nst]{\fml}}
         &   &
        \semglfc[\nst]{\fml\for\fmlb}
        =
        \semglfc[\nst]{\fml}
        \union
        \semglfc[\nst]{\fmlb}
    \end{align*}
    and of a \glsshortname game \(\gam\) is an
    \wregfuncname \(\semglgc{\gam}\in \wregs{\nstdom{\nst}\times\contexts}\)
    \begin{align*}
         &
        \semglfc[\nst]{\gatom}
        =
        \nstnflift{\nst}{\gatom}
         &   &
        \semglgc[\nst]{\gtest{\fml}}(\xpel)
        =
        \semglfc[\nst]{\fml}\cap\xpel
        \\
         &
        \semglgc[\nst]{\gangelswin{\gatom}}(\xpel)
        =
        \setpwins{\gatom}{\playeronecon}{\xpel}
         &   &
        \semglgc[\nst]{\gam\gor\gamb}
        =
        \semglgc[\nst]{\gam}
        \union
        \semglgc[\nst]{\gamb}
        \\
         &
        \semglgc[\nst]{\gdual{\gam}}
        =
        \conextdual{\semglgc[\nst]{\gam}}
         &   &
        \semglgc[\nst]{\gam\gcom\gamb}
        =
        \semglgc[\nst]{\gam} \circ \semglgc[\nst]{\gamb}
        \\
         &
        \semglgc[\nst]{\gstar{\gam}}(\xpel)
        =
        \lfp{\ypel}{(\xpel\union\semglgc[\nst]{\gam}(\ypel))}
        \span\span
    \end{align*}
\end{definition}

The semantics of \(\gangelswin{\gatom}\) illustrates the role of the \contextname.
Playing the \sabactionname \(\gangelswin{\gatom}\) changes the context and assigns
player \(\playeronecon\) the game \(\gatom\) to keep track of the Angelic sabotage.

The interpretation of negation and dualization is subtle,
as these need to take into account the sabotage structure.
For example in the game
\(\fdia{\gangelswin{\gatom}}{\fnot\fdia{{\gatom}}{\ftrue}}\)
the negation also \emph{flips} the sabotage status.
It can \emph{not} be interpreted as saying Angel does not win \(\fdia{\gatom}{\ftrue}\)
after Angel has sabotaged \(\gatom\) by \(\gangelswin{\gatom}\).
Instead it means Angel does not win~\(\fdia{\gatom}{\ftrue}\)
if \(\gatom\) was last sabotaged by \emph{Demon}.
The equivalent formula in normal form
\(\fdia{\gangelswin{\gatom}}{\fdia{\gdual{\gatom}}{\ffalse}}\) makes this clear.
This subtlety can easily be avoided by working with games in \emph{normal form}
(\Cref{secnormalform}).

\subsubsection{Semantics of \GLmuname}

Because \glmuname contains games of the form \(\glfp{\gvar}{(\gatomc\gor\gatom\gcom\gvar\gcom\gatomb)}\)
unlike in game logic the semantics of such a game can no longer
be defined as the fixpoint of a function between power sets.
The plays of \(\gatomb\) that will take place after Angel chooses to play \(\gatomc\)
must be taken into account.

The semantics of \glmuname is defined with respect to
both a \nstname{} and a \valname.
A \emph{\valname} is a function \(\val:\gvars\to\wregs{\nstdom{\nst}}\)
assigning an interpretation to every variable \(\gvar\in\gvars\).
Given a \valname \(\val\), a variable \(\pvar\in\pvars\)
and an \wregfuncname \(\wreg\in\wregs{\nstdom{\nst}}\) let \(\valsubst[\val]{\pvar}{\wreg}\) denote the \valname that agrees with~\(\val\),
except that \(\valsubst[\val]{\pvar}{\wreg}(\pvar)=\wreg\).
\begin{definition}\label{glmusemantics}
    For any \nstname \(\nst\) and any \valname \(\val\)
    define the \emph{semantics} \(\semglf[\nst]{\val}{\fml}\in\pow{\nstdom{\nst}}\)
    and \(\semglg[\nst]{\val}{\gam}\in\wregs{\nstdom{\nst}}\)
    by mutual induction for \glmushortname formulas \(\fml\):
    \begin{align*}
         &
        \semglf[\nst]{\val}{\patom}
        =
        \nstpv{\nst}{\patom}
         &   &
        \semglf[\nst]{\val}{\fml\for\fmlb}
        =
        \semglf[\nst]{\val}{\fml}
        \union
        \semglf[\nst]{\val}{\fmlb}
        \\
         &
        \semglf[\nst]{\val}{\fnot\fml}
        =
        \setcomplement{\nstdom{\nst}}{\semglf[\nst]{\val}{\fml}}
         &   &
        \semglf[\nst]{\val}{\fdia{\gam}\fml}
        =
        \semglg[\nst]{\val}{\gam}(\semglf[\nst]{\val}{\fml})
    \end{align*}
    and for \glmushortname games \(\gam\):
    \begin{align*}
         & \semglg[\nst]{\val}{\gatom}
        =
        \nstnf{\nst}{\gatom}
         &                             &
        \semglg[\nst]{\val}{\gtest{\fml}}(\xpel)
        =
        \semglf[\nst]{\val}{\fml}\intersection\xpel
        \\
         &
        \semglg[\nst]{\val}{\gvar}
        =
        \val(\gvar)
         &                             &
        \semglg[\nst]{\val}{\gam\gor\gamb}
        =
        \semglg[\nst]{\val}{\gam}
        \union
        \semglg[\nst]{\val}{\gamb}
        \\
         &
        \semglg[\nst]{\val}{\gdual{\gam}}
        =
        \nfuncdual{({\semglg[\nst]{\val}{\gam}})}
         &                             &
        \semglg[\nst]{\val}{\gam\gcom\gamb}
        =
        \semglg[\nst]{\val}{\gam} \circ \semglg[\nst]{\val}{\gamb}\span
        \\
         &
        \semglg[\nst]{\val}{\glfp{\gvar}{\gam}}
        =
        \lfp{\wregb}{\semglg[\nst]{\valsubst[\val]{\pvar}{\wregb}}{\gam}}\span\span
    \end{align*}
\end{definition}

For closed formulas the superscript \(\val\) is dropped.
As usual the notation
\(\validglmuin{\nst}{\fml}\) means that \(\semglf[\nst]{\val}{\fml}=\nstdom{\nst}\) for all \valname{}s \(\val\).
Moreover write \(\validglmunbhd{\fml}\) if \(\validflcin{\nst}{\fml}\) for all \emph{\nstname{}s} \(\nst\),
and write \(\validglmukripke{\fml}\) if \(\validflcin{\kst}{\fml}\) for all \emph{\kstname{}s} \(\kst\).

The semantics of recursive subgames is well-defined and the meaning of games \(\glfp{\gvar}{\gam}\)
can be seen to be the least fixpoint by monotonicity of the function
\(\wregb\mapsto\semglg[\nst]{\valsubst[\val]{\pvar}{\wregb}}{\gam}\).
The proof of this (\Cref{gamesfparefp}) uses a normal form transformation for \glmurlshortname games.

\subsubsection{Normal Form} \label{secnormalform}

For some proofs it is important that negation is only applied
to propositional atoms, and the duality operator is only applied to \atgamename{}s
and free variables.
Formulas and games that satisfy this condition are said to be in \emph{normal form}.
\begin{definition}\label{glrulessyntacticneg}
    By mutual recursion on \glmushortname formulas and games
    define the \emph{syntactic complement} \(\fsnot{\fml}\)
    of an \glmushortname formula
        {\allowdisplaybreaks
            \begin{align*}
                 &
                \fsnot{\patom} = \fnot\patom
                 &   &
                \fsnot{\fnot\fml} = \fml
                 &   &
                \fsnot{\fml\for\fmlb} = \fsnot{\fml}\fand\fsnot{\fmlb}
                 &   &
                \fsnot{\fdia{\gam}\fml} = \fdia{\gsnot{\gam}}{\fsnot{\fml}}
            \end{align*}
            and the \emph{syntactic dual} \(\gsnot{\gam}\) of a \glmushortname game as follows
            \begin{align*}
                 &
                \gsnotp{{\gatom}} = \gdual{\gatom}
                 &   &
                \gsnotp{\gvar} = \gdual{\gvar}
                 &   &
                \gsnotp{{\gdual{\gam}}} = \gam
                \\
                 &
                \gsnotp{\gtest{\fml}} = \gdtest{\fml}
                 &   &
                \gsnotp{\gam\gor\gamb}  = \gsnot{\gam}\gand\gsnot{\gamb}
                 &   &
                \gsnotp{\gam\gcom\gamb} = \gsnot{\gam}\gcom\gsnot{\gamb}
                \\
                 &
                \gsnotp{\glfp{\gvar}{\gam}} = \ggfp{\gvar}{(\freplace{\gsnot{\gam}}{\gvar}{\gsnot{\gvar}})}
            \end{align*}
        }
\end{definition}

By induction on the definition the syntactic complement and dual semantically correspond to set
complements and dual functions:
\begin{lemmaE}[]\label{syntacticdualrgl}
    For any \glmushortname formula \(\fml\)
    and
    for any \glmushortname game \(\gam\):
    \[\semglf[\nst]{\val}{\fsnot{\fml}} = \setcomplement{\nstdom{\nst}}{\semglf[\nst]{{\val}}{\fml}}
        \quad\text{and}\quad
        \semglg[\nst]{\val}{\gsnot{\gam}} = \nfuncdual{({\semglg[\nst]{{\val}}{\gam}})}
    \]
\end{lemmaE}

A formula \(\fml\) and a game \(\gam\) of \glmushortname is said to be in \emph{normal form}
if negation is applied only to atomic propositions and the dual operator is applied only to
atomic games and free variables. The following grammar describes the formulas and games of \glmuname in normal form:
\begin{align*}
    \glmusyntaxnoneg
\end{align*}
with the usual assumptions that \(\fml\) in \(\gtest{\fml}\) or \(\gdtest{\fml}\)
is closed and \(\gdual{\pvar}\) does not appear in the scope of a recursive game
\(\glfp{\gvar}{\fml}\) or \(\ggfp{\gvar}{\fml}\).
For every \glmushortname formula \(\fml\) the formula \(\ffsnot{{\fml}}\) is an equivalent
formula in normal form by \Cref{syntacticdualrgl}, called the \emph{normal form of} \(\fml\).
Similarly for every \glsshortname game \(\gam\)
the game \(\gsnot{\gsnot{\gam}}\) is the equivalent \emph{normal form of} \(\gam\).

\begin{lemmaE}[][]\label{gamesfparefp}
    If \(\glfp{\gvar}{\gam}\) is an \glmushortname game, then
    \(\nftonffunc: \wregb \mapsto \semglg[\nst]{\valsubst[\val]{\pvar}{\wregb}}{\gam}\) is monotone.
\end{lemmaE}
\begin{proofE}
    The game \(\gam\) is equivalent to a game in normal form,
    in which \(\gdual{\gvar}\) does not appear, since it must be
    in the scope of an even number of \(\gdual{}\) operators in \(\gam\).
    Monotonicity of \(\nftonffunc\) follows by induction on such games.
\end{proofE}

The notions of syntactic negation and syntactic dual can be extended to \glsname by defining
\begin{align*}
    \gsnotp{\gangelswin{\gatom}} = \gdemonswin{\gatom}
     &  &
    \gsnotp{\gstar{\gatom}}  = \gdstarp{\gsnot{\gatom}}
\end{align*}
Again the definition ensures that the syntactic negation and dual coincide with the semantic notions.
\begin{lemmaE}[]
    For any \glsshortname formula \(\fml\) and
    any \glsshortname game~\(\gam\):
    \begin{align*}
        \semglfc[\nst]{\fsnot{\fml}} = \conextcomp{\semglfc[\nst]{\fml}}
         &  &
        \semglgc[\nst]{\gsnot{\gam}} = \conextdual{{\semglgc[\nst]{\gam}}}
    \end{align*}
\end{lemmaE}

Analogously to \glsname a formula \(\fml\) and a game~\(\gam\) is said to be in \emph{normal form}
if negation is only applied to atomic propositions
and the dual operator is only applied to \atgamename
and sabotage actions.
The formulas and games of \glsname in normal form are given by the following grammar
\begin{align*}
    \glrulessyntaxnonneg
\end{align*}
As was the case for \glmushortname, any \glsshortname formula \(\fml\)
has \emph{its normal form} \(\ffsnot{\fml}\)
and any game \glsshortname game \(\gam\) also
has \emph{its normal form} \(\gsnot{\gsnot{\fml}}\).

\begin{corollary}[Normal Form]\label{normalform}
    Any formula and any game of \glmushortname or \glsshortname is equivalent
    to its normal form.
\end{corollary}

\subsubsection{Semantic Compatibility}

\glshortname is a syntactic fragment of \glmushortname (\Cref{gamelogicsubsection}) and also
the fragment of \glsshortname without sabotage actions. However the definitions of the
semantics of both extensions are superficially different from the usual semantics of \glname.
In the case of \glsname the semantics of the iteration
games (\Cref{gamelogicsubsection}) is in terms of a fixpoint
of a monotone operator \(\pow{\abs{\nst}\times\contexts}\to\pow{\abs{\nst}\times\contexts}\),
whereas in \Cref{glmusemantics} they are in terms of a fixpoint of a transformation \(\wregs{\abs{\nst}}\to\wregs{\abs{\nst}}\).
The next two lemmas show that the two coincide with the definition in terms of
operators \(\pow{\abs{\nst}}\to\pow{\abs{\nst}}\),
and thus that \glmuname is indeed an extension of Parikh's \glname.
\begin{lemma}\label{fpdef}
    If \(\gam\) is a \glshortname game then
    \[\semglg[\nst]{\val}{\gstar{\gam}}(\xpel)
        =
        \lfp{\ypel}{(\xpel\union\semglg[\nst]{\val}{\gam}(\ypel))}\]
\end{lemma}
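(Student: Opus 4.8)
The plan is to unfold the abbreviation $\gstar{\gam} \equiv \glfp{\gvar}{(\gam\gcom\gvar\gor\gtest{\ftrue})}$ and reduce the statement to \Cref{generalfixpointconstant}. By \Cref{glmusemantics},
\[\semglg[\nst]{\val}{\gstar{\gam}} = \lfp{\wregb}{\nftonffunc(\wregb)}, \qquad \nftonffunc(\wregb) := \semglg[\nst]{\valsubst[\val]{\gvar}{\wregb}}{\gam\gcom\gvar\gor\gtest{\ftrue}}.\]
First I would compute $\nftonffunc(\wregb)(\xpel)$ explicitly: using the semantic clauses for $\gor$, $\gcom$, and $\gtest{}$ together with $\semglf[\nst]{\val}{\ftrue} = \nstdom{\nst}$, this gives
\[\nftonffunc(\wregb)(\xpel) = \semglg[\nst]{\valsubst[\val]{\gvar}{\wregb}}{\gam}\bigl(\wregb(\xpel)\bigr) \cup \xpel.\]
The crucial structural observation here is that $\gam$ is a \glshortname game, so $\gvar$ does not occur in $\gam$; hence $\semglg[\nst]{\valsubst[\val]{\gvar}{\wregb}}{\gam} = \semglg[\nst]{\val}{\gam}$ by a routine coincidence argument (the semantics of a game depends only on the valuation restricted to its free variables). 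Consequently $\nftonffunc(\wregb)(\xpel) = \semglg[\nst]{\val}{\gam}(\wregb(\xpel)) \cup \xpel$, so $\wregb$ enters only through the single value $\wregb(\xpel)$.

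Next I would check the two hypotheses of \Cref{generalfixpointconstant}. Monotonicity of $\nftonffunc$ is precisely \Cref{gamesfparefp} applied to the \glmushortname game $\glfp{\gvar}{(\gam\gcom\gvar\gor\gtest{\ftrue})}$, which is legal since $\gvar$ lies within zero, hence an even number, of $\gdual{}$ operators. The constancy condition $\nftonffunc(\wregb)(\xpel) = \nftonffunc(\nfuncconst{\wregb(\xpel)})(\xpel)$ is immediate from the displayed formula, since $\nfuncconst{\wregb(\xpel)}(\xpel) = \wregb(\xpel)$. With both hypotheses available, \Cref{generalfixpointconstant}(1) yields
\[\semglg[\nst]{\val}{\gstar{\gam}}(\xpel) = \bigl(\lfp{\wregb}{\nftonffunc(\wregb)}\bigr)(\xpel) = \lfp{\ypel}{\nftonffunc(\nfuncconst{\ypel})(\xpel)} = \lfp{\ypel}{\semglg[\nst]{\val}{\gam}(\ypel) \cup \xpel},\]
which is exactly the claim.

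The only genuinely delicate step is verifying the constancy condition, and it rests entirely on the right-linear shape of the unrolled body $\gam\gcom\gvar\gor\gtest{\ftrue}$: the recursion variable $\gvar$ appears only as the final game of a sequential composition, so the associated effectivity function is evaluated only at the argument $\xpel$ threaded through $\semglg[\nst]{\val}{\gstar{\gam}}(\xpel)$, never at an intermediate set produced by $\gam$. I expect most of the write-up's effort to go into making the computation of $\nftonffunc(\wregb)(\xpel)$ precise; once that is in place, the remainder is a direct application of the already-established \Cref{generalfixpointconstant} and \Cref{gamesfparefp}, plus the small observation $\semglf[\nst]{\val}{\ftrue} = \nstdom{\nst}$ that makes the $\gtest{\ftrue}$ summand contribute exactly $\xpel$.
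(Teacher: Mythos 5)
Your proof is correct. It differs from the paper's route in a small but genuine way: the paper does not argue about \(\gstar{\gam}\) directly but derives \Cref{fpdef} as a special case of the general \Cref{keylemmagl}, which establishes \(\semglg[\nst]{\val}{\gamb}(\xpel)=\semglg[\nst]{\valconst[\val]{\xpel}}{\gamb}(\xpel)\) together with the pointwise fixpoint characterizations for \emph{all} \rightlinear games \(\gamb\) by a simultaneous structural induction; that induction is needed there because in a general \rightlinear body the recursion variable can sit arbitrarily deep in tail position, so the constancy hypothesis of \Cref{generalfixpointconstant} is itself only available inductively. You instead exploit that a \glshortname game \(\gam\) contains no fixpoint variables at all, so after unfolding \(\gstar{\gam}\synequiv\glfp{\gvar}{(\gam\gcom\gvar\gor\gtest{\ftrue})}\) the variable \(\gvar\) occurs only at the top level, the explicit computation \(\nftonffunc(\wregb)(\xpel)=\semglg[\nst]{\val}{\gam}(\wregb(\xpel))\cup\xpel\) is available without induction, and the hypotheses of \Cref{generalfixpointconstant} (with monotonicity from \Cref{gamesfparefp}) can be checked directly. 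Both arguments rest on the same engine, \Cref{generalfixpointconstant}, and end with the same final computation \(\nftonffunc(\nfuncconst{\ypel})(\xpel)=\semglg[\nst]{\val}{\gam}(\ypel)\cup\xpel\); yours is more self-contained and elementary for this particular statement, while the paper's buys the stronger \Cref{keylemmagl}, which it needs anyway for the \rightlinear translations elsewhere. One point worth making explicit in a write-up is the coincidence fact \(\semglg[\nst]{\valsubst[\val]{\gvar}{\wregb}}{\gam}=\semglg[\nst]{\val}{\gam}\) for variable-free \(\gam\), which you correctly flag as a routine induction but which is the only place any induction enters your argument.
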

\iflongversion
    For a proof see \Cref{keylemmagl} in \Cref{fixpointlemmaappendix}.
\else
    For a proof see \thefullversion.
\fi

The \glsshortname semantics given to a formula, that is syntactically also a \glshortname formula, coincide with the usual semantics:

\begin{propositionE}[][\normalinlongversion]\label{semanticscoincide}
    If \(\fml\) is a formula and \(\gam\)
    a game of \glshortname then
    \[\semglf{}{\fml} = \contproj{\semglgc{\fml}}{\contzero}
        \qquad \semglg{}{\gam}(\contproj{U}{\contzero})=\contproj{\semglgc{\gam}(U)}{\contzero}.\]
    \Glsname is an extension of Parikh's \glname.
\end{propositionE}
\begin{proofE}
    This is proved by a simple mutual induction on formulas and games.
    The case of repetition games uses \Cref{fpdef}.
\end{proofE}

A \glsshortname formula \(\fml\) \emph{holds} in a structure \(\nst\) (\(\validglrulesin{\nst}{\fml}\))
if \(\semglfc[\nst]{\fml}\supseteq\nstdom{\nst}\times\{\contzero\}\).
This captures the intended semantics of \(\fml\) as being evaluated when no sabotage has taken place initially,
by requiring the formula to hold in every state in the special context~\(\contzero\) in which no atomic game has been sabotaged.
A \glsshortname formula \(\fml\) is \emph{valid} (\(\validglrulesnbhd{\fml}\))
if \(\validglrulesin{\nst}{\fml}\) for all \emph{\nstname{}s} \(\nst\).
Note that~\(\fml\) is valid iff \(\semglfc[\nst]{\fml}\supseteq\nstdom{\nst}\times\contexts\) for all
\nstname{}s \(\nst\).
Write \(\validglruleskripke{\fml}\) if \(\validglrulesin{\kst}{\fml}\) for all \emph{\kstname{}s} \(\kst\).
For formulas \(\fml\) in the common syntactic fragment the
overloading of notation for \glname formulas is justified by \Cref{semanticscoincide}.

\section{Modal Fixpoint Logics}\label{secfplogics}
\pratendSetLocal{category=secfplogics}

\paragraph{The \LMuname}
This section recalls two modal fixpoint logics.
Of particular interest is the \emph{\lmuname} (\lmushortname) \cite{BradfieldS06}, because of its desirable logical properties.
It has decidable satisfiability and model checking problems, the finite model property and comes with a natural
complete proof calculus. The syntax of \lmushortname
is given by the following grammar:
\[\lmusyntaxnonneg\]
for \(\patom\in\patoms\), \(\gatom\in\gatoms\) and \(\pvar\in\pvars\).
The \lmuname extends basic (multi)-modal logic with fixpoint operators \(\flfp{\pvar}{\fml}\)
and \(\fgfp{\pvar}{\fml}\). These denote the least and greatest fixpoints of \(\fml\)
in the sense that \(\flfp{\pvar}{\fml(\gvar)}\) is equivalent to \(\fml(\flfp{\gvar}{\fml})\).
The syntax enforces that fixpoint variables~\(\gvar\) can appear only positively in order
to ensure that the semantics of fixpoint operators \(\flfp{\gvar}{\fml}\)
denote the desired extremal fixpoints.

\paragraph{\FLCname}
An interesting extension of the \lmuname is \flcname \cite{DBLP:conf/stacs/Muller-Olm99}.
Although it lacks some of the nice properties of the \lmuname,
its high expressiveness is
useful to establish a close correspondence with the
game logics from the previous section
via a natural translation.
The following grammar defines the syntax of \flcname (\flcnameshort) \cite{DBLP:conf/stacs/Muller-Olm99}
\[\flcsyntaxnoneg\]
for \(\patom\in\patoms\), \(\gatom\in\gatoms\) and \(\pvar\in\pvars\).
Fixpoint logic with chop is conceptually close to the \lmuname.
However fixpoint variables do not range over predicates
(elements of \(\pow{\nstdom{\nst}}\)),
but over transformations (monotone functions in \(\wregs{\nstdom{\nst}}\)) instead.
Consequently formulas denote predicate transformers which admit a natural notion of concatenation \(\fcom\) and identity transformation \(\fid\).
As in the \lmuname the definition syntactically restricts to positive appearances of
\(\pvar\),
in order to ensure the well-definedness of the semantics of the fixpoint operator.
The notation for syntactic substitution \(\freplace{\fml}{\pvar}{\fmlb}\) is the same as in \glmuname.

\paragraph{Semantics of \FLCname}

The \emph{semantics} of \flcname is defined with respect to
a \nstname{} and a \valname
\(\val:\pvars\to\wregs{\nstdom{\nst}}\).
By structural induction on formulas \(\fml\) define the set \(\semflcf[\nst]{\val}{\fml}\in\wregs{\nstdom{\nst}}\)
\begin{align*}
     &
    \semflcf[\nst]{\val}{\fid}
    =
    \id
     &
     &
    \semflcf[\nst]{\val}{\fml\for\fmlb} = \semflcf[\nst]{\val}{\fml} \union \semflcf[\nst]{\val}{\fmlb}
    \\
     &
    \semflcf[\nst]{\val}{\patom} = \nstpv{\nst}{\patom}
     &
     &
    \semflcf[\nst]{\val}{\fml\fand\fmlb} = \semflcf[\nst]{\val}{\fml} \intersection \semflcf[\nst]{\val}{\fmlb}
    \\
     &
    \semflcf[\nst]{\val}{\fnot\patom} = \nstdom{\nst}\setminus \nstpv{\nst}{\patom}
     &
     &
    \semflcf[\nst]{\val}{\fdia{\gatom}{\fml}}  = \nstnf{\nst}{\gatom}\circ\semflcf[\nst]{\val}{\fml}
    \\
     &
    \semflcf[\nst]{\val}{\pvar} = \val(\pvar)
     &
     &
    \semflcf[\nst]{\val}{\fbox{\gatom}{\fml}}  = \wregdual{\nstnf{\nst}{\gatom}}\circ\semflcf[\nst]{\val}{\fml}
    \\
     &
    \semflcf[\nst]{\val}{\flfp{\pvar}{\fml}} = \lfp{\nfuncb}{\semflcf[\nst]{\valsubst[\val]{\pvar}{\nfuncb}}{\fml}}
     &
     &
    \semflcf[\nst]{\val}{\fml\fcom\fmlb} = \semflcf[\nst]{\val}{\fml} \compose \semflcf[\nst]{\val}{\fmlb}
    \\
     &
    \semflcf[\nst]{\val}{\fgfp{\pvar}{\fml}} = \gfp{\nfuncb}{\semflcf[\nst]{\valsubst[\val]{\pvar}{\nfuncb}}{\fml}}
\end{align*}
The semantics of \(\mu\) and \(\nu\) formulas denotes extremal fixpoints,
since the semantics of \flcnameshort define a monotone function:
\begin{lemma}
    The function \(F: \nfuncb \mapsto \semflcf[\nst]{\valsubst[\val]{\pvar}{\nfuncb}}{\fml}\) is monotone.
\end{lemma}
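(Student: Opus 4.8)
The plan is to prove, by structural induction on \(\fml\), the following precise form of the claim: for every \nstname \(\nst\), every variable \(\pvar\in\pvars\) and every \valname \(\val\), the map \(F:\nfuncb\mapsto\semflcf[\nst]{\valsubst[\val]{\pvar}{\nfuncb}}{\fml}\) is monotone on \(\wregs{\nstdom{\nst}}\), ordered by pointwise inclusion. Arbitrary pointwise unions and intersections of elements of \(\wregs{\nstdom{\nst}}\) remain monotone, so \(\wregs{\nstdom{\nst}}\) is a complete lattice and least and greatest fixpoints of monotone operators on it exist by the Knaster--Tarski theorem. Quantifying over \emph{all} variables and valuations in the induction hypothesis is essential: in the fixpoint case the body \(\fml\) must be invoked twice, once for the bound variable and once for \(\pvar\). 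Since the \flcnameshort grammar allows negation only on atoms, \(\pvar\) never occurs under a negation, so no companion antitonicity statement is needed.

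The base cases are immediate: \(\semflcf[\nst]{\valsubst[\val]{\pvar}{\nfuncb}}{\fid}=\id\), \(\semflcf[\nst]{\valsubst[\val]{\pvar}{\nfuncb}}{\patom}=\nstpv{\nst}(\patom)\) and \(\semflcf[\nst]{\valsubst[\val]{\pvar}{\nfuncb}}{\fnot\patom}=\nstdom{\nst}\setminus\nstpv{\nst}(\patom)\) are independent of \(\nfuncb\), while for a variable \(F\) is the identity on \(\wregs{\nstdom{\nst}}\) if the variable is \(\pvar\) and the constant \(\val(\pvarb)\) otherwise. For \(\fml\for\fmlb\) and \(\fml\fand\fmlb\) the map \(F\) is the pointwise union, respectively intersection, of the two inductively monotone maps for \(\fml\) and \(\fmlb\), hence monotone. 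The modal cases and the chop case all rest on the fact that function composition is monotone in both arguments: if \(\nfunc_1\subseteq\nfuncb_1\) and \(\nfunc_2\subseteq\nfuncb_2\) pointwise, then for every \(\xpel\), \((\nfunc_1\circ\nfunc_2)(\xpel)=\nfunc_1(\nfunc_2(\xpel))\subseteq\nfuncb_1(\nfunc_2(\xpel))\subseteq\nfuncb_1(\nfuncb_2(\xpel))=(\nfuncb_1\circ\nfuncb_2)(\xpel)\), the first inclusion using \(\nfunc_1\subseteq\nfuncb_1\) and the second using that the value \(\nfuncb_1\) is itself a monotone function. For \(\fdia{\gatom}{\fml}\) take the left factor to be the fixed monotone effectivity function \(\nstnf{\nst}(\gatom)\) and the right factor the inductively monotone map for \(\fml\); for \(\fbox{\gatom}{\fml}\) use instead \(\wregdual{\nstnf{\nst}(\gatom)}\), which is monotone because the dual of a monotone function is monotone; and for \(\fml\fcom\fmlb\) apply the induction hypotheses to both \(\fml\) and \(\fmlb\).

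The crux — and the only step requiring real work — is the fixpoint case \(\flfp{\pvarb}{\fml}\); the case \(\fgfp{\pvarb}{\fml}\) is its order dual. We may assume \(\pvarb\neq\pvar\), as otherwise the update to \(\pvar\) is overridden and \(F\) is constant. Set \(G(\nfuncb,\nfunc)=\semflcf[\nst]{\valsubst[\val]{\pvar}{\nfuncb}[\pvarb\mapsto\nfunc]}{\fml}\) (the two updates commute since \(\pvar\neq\pvarb\)). Applying the induction hypothesis to \(\fml\) with the bound variable \(\pvarb\) and valuation \(\valsubst[\val]{\pvar}{\nfuncb}\) shows \(G(\nfuncb,{-})\) is monotone, so \(F(\nfuncb)=\lfp{\nfunc}{G(\nfuncb,\nfunc)}=\Intersection\{\nfunc : G(\nfuncb,\nfunc)\subseteq\nfunc\}\) is well-defined — this is exactly the content of the \(\mu\)-clause of the semantics, so the present induction should be understood as running simultaneously with that definition. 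Applying the induction hypothesis to \(\fml\) with the variable \(\pvar\) shows \(G({-},\nfunc)\) is monotone for each fixed \(\nfunc\). Now suppose \(\nfuncb_1\subseteq\nfuncb_2\) and put \(\nfunc^{*}=F(\nfuncb_2)\), so \(G(\nfuncb_2,\nfunc^{*})=\nfunc^{*}\); then \(G(\nfuncb_1,\nfunc^{*})\subseteq G(\nfuncb_2,\nfunc^{*})=\nfunc^{*}\), so \(\nfunc^{*}\) is a prefixed point of \(G(\nfuncb_1,{-})\), whence \(F(\nfuncb_1)\subseteq\nfunc^{*}=F(\nfuncb_2)\). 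For \(\fgfp{\pvarb}{\fml}\), with \(F(\nfuncb)=\gfp{\nfunc}{G(\nfuncb,\nfunc)}\), the dual argument applies: for \(\nfuncb_1\subseteq\nfuncb_2\), the element \(\nfunc^{*}=F(\nfuncb_1)\) satisfies \(\nfunc^{*}=G(\nfuncb_1,\nfunc^{*})\subseteq G(\nfuncb_2,\nfunc^{*})\), so it is a postfixed point of \(G(\nfuncb_2,{-})\) and therefore \(F(\nfuncb_1)\subseteq F(\nfuncb_2)\). Everything outside this fixpoint case is routine bookkeeping; the heart of the matter is the observation that taking a least (or greatest) fixpoint in one argument of a separately monotone binary operator preserves monotonicity in the other argument.
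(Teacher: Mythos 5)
Your proof is correct and is essentially the argument the paper intends: the paper's one-line justification (that the syntax forbids $\fnot\pvar$, so variables occur only positively) is precisely the positivity observation whose underlying structural induction you have written out in full, including the standard prefixed-point argument for the $\mu$/$\nu$ cases. No gap; you have simply made explicit what the paper leaves implicit.
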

The semantics of a formula of \flcname is defined as a \emph{monotone} function.
To assign a truth value, the function can be evaluated at \(\emptyset\)
so that a formula \(\fml\) holds in \(\nst\) (\(\validflcin{\nst}{\fml}\)) if
\(\semflcf[\nst]{\val}{\fml}(\emptyset)=\nstdom{\nst}\) for all~\(\val\).
(The choice of \(\emptyset\) is arbitrary
but irrelevant and any \flcnameshort definable
set can be used equivalently \cite{DBLP:conf/stacs/Muller-Olm99}.)
By monotonicity of the semantics this ensures that \(\validflcin{\nst}{\fml}\) iff
\(\semflcf[\nst]{\val}{\fml}(U)=\nstdom{\nst}\) for all \(\val\) and all \(U\subseteq\nstdom{\nst}\).
Moreover write \(\validflcnbhd{\fml}\) if \(\validflcin{\nst}{\fml}\) for all \emph{\nstname{}s} \(\nst\)
and \(\validflckripke{\fml}\) if \(\validflcin{\kst}{\fml}\) for all \emph{\kstname{}s} \(\kst\).

The semantics of \lmushortname formulas with respect to the
\flcnameshort semantics coincide with the usual semantics
of the \lmuname \cite{DBLP:conf/stacs/Muller-Olm99}.

\paragraph{Negation in \FLCname}
The negation of a formula of \flcname is defined
syntactically as usual:
\begin{align*}
    \fsnot{\patom}              & =\fnot\patom
                                &
    \fsnot{\fdia{\gatom}{\fml}} & = \fbox{\gatom}{\fsnot{\fml}}
                                &
    \fsnot{\fml\for\fmlb}       & =\fsnot{\fml}\fand\fsnot{\fmlb}
                                &
    \fsnot{\flfp{\pvar}{\fml}}  & =\fgfp{\pvar}{\fsnot{\fml}}
    \\
    \fsnot{\fnot{\patom}}       & =\patom
                                &
    \fsnot{\fbox{\gatom}{\fml}} & = \fdia{\gatom}{\fsnot{\fml}}
                                &
    \fsnot{\fml\fand\fmlb}      & =\fsnot{\fml}\for\fsnot{\fmlb}
                                &
    \fsnot{\fgfp{\pvar}{\fml}}  & =\flfp{\pvar}{\fsnot{\fml}}
    \\
    \fsnot{\pvar}               & =\pvar
\end{align*}

The syntactic definition of negation corresponds semantically to complementation:

\begin{lemmaE}[][normal]
    \(\semflcfzero[\nst]{\val}{\fsnot{\fml}} =
    \setcomplement{\nstdom{\nst}}{\semflcfzero[\nst]{\valcomp{\val}}{\fml}}\)
    for all \flcnameshort formulas~\(\fml\),
    where \(\valcomp{\val}(\pvar)=\valcomp{(\val(\pvar))}\)
    is the pointwise complement of \(\val\).
\end{lemmaE}
\begin{proofE}
    By a straightforward induction on \flcnameshort-formulas.
\end{proofE}

With the syntactic negation, implication and equivalence can be defined
in \flcname. The implication \(\fml\fimply\fmlb\) is viewed as an
abbreviation for \(\fsnot{\fml}\for\fmlb\) in \flcnameshort.

\paragraph{The \LMustarname}
Restricting the fixpoints in \flcnameshort to structured ones as they appear in \glname
yields a logic we call the \lmustarname, which is the exact modal
fixpoint logic equivalent of \glname.
The syntax of the \emph{\lmustarname} (\lmustarshortname) is defined as
\[\lmustarsyntax\]
This can be viewed as a fragment of \flcnameshort by interpreting
\(\fnot\fml\) as \(\fsnot{\fml}\) and \(\flfpstar{\fml}\)
as an abbreviation for \(\flfp{\pvar}{(\fid\for\fml\fcom\pvar)}\),
where \(\pvar\) is fresh.

\section{Expressiveness}\label{expressiveness}
\pratendSetLocal{category=expressiveness}

The semantics of \glname and the \lmuname are in many ways similar
and \glname can express large parts of the \lmuname.
In particular it spans the \emph{entire} fixpoint alternation hierarchy of the \lmuname \cite{DBLP:journals/sLogica/Berwanger03}.
Nevertheless, \glname is less expressive than the \lmuname \cite{DBLP:journals/mst/BerwangerGL07}.
This section introduces natural translations to show that, at the level of \flcname
and \glmuname, modal fixpoint logics and game logics can be identified \emph{completely}.
From this identification, the relationship of the expressiveness of \glname
as a modal fixpoint logic and the expressiveness of the \lmuname as a game logic
can be determined \emph{completely}.

\Cref{comparexpressiveness} gives a schematic overview of the translations between the logics.
All inclusions in the illustration are strict. \Glname is
strictly less expressive than the \lmuname \cite{DBLP:journals/mst/BerwangerGL07}
and the \lmuname is strictly less expressive than \flcnameshort \cite{DBLP:conf/stacs/Muller-Olm99}.
The fragments \glmurlshortname and \lmusepshortname are introduced in this section
and the translations are presented and proved correct.

\begin{figure}[t!b!]
    \centering
    \includegraphics[width=0.95\linewidth]{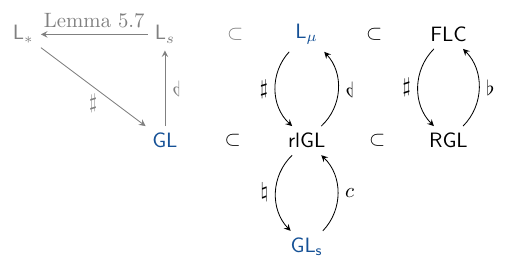}
    \setlength{\abovecaptionskip}{0pt}
    \caption{Translations between Fixpoint and Game Logics}
    \Description{An illustration of the translations.}
    \label{comparexpressiveness}
\end{figure}

A formula \(\fml\) of \glmushortname is \emph{\wellnamed{}} if it does not bind the same variable twice and no variable appears both free and bound.
Every formula is equivalent to a \wellnamed formula by bound renaming.

\subsection{Equiexpressiveness of \flcnameshort and \glmushortname}

\subsubsection{Translation from \flcname to \glmuname}
Any formula \(\fml\) of \flcnameshort can be expressed equivalently
as a \glmushortname game.
The translated \glmushortname game \(\flctoglmu{\fml}\) is defined
by induction on the syntax of \flcnameshort formula \(\fml\) as follows:
\begin{align*}
     &
    \flctoglmup{\fid} = \gtest{\ftrue}
     & \;\; &
    \flctoglmup{\patom}  = \gtest{\patom}\gcom\gdtest{\ffalse}
     & \;\; &
    \flctoglmup{\fml\for\fmlb} = \flctoglmu{\fml}\gor\flctoglmu{\fmlb}
    \\&
    \flctoglmup{\pvar}  = \pvartogvar{\pvar}
     &      &
    \flctoglmup{\fnot\patom} = \gtest{\fnot\patom}\gcom\gdtest{\ffalse}
     &      &
    \flctoglmup{\fml\fand\fmlb} = \flctoglmu{\fml}\gand\flctoglmu{\fmlb}
    \\ &
    \flctoglmup{\fdia{\gatom}\fml} = \gatom\gcom\flctoglmu{\fml}
     &      &
    \flctoglmup{\flfp{\pvar}{\fml}} = \glfp{\gvartopvar{\pvar}}{\flctoglmu{\fml}}
     &      &
    \flctoglmup{\fml\fcom\fmlb}  = \flctoglmu{\fml}\gcom\flctoglmu{\fmlb}
    \\ &
    \flctoglmup{\fbox{\gatom}\fml} = \gdual{\gatom}\gcom\flctoglmu{\fml}
     &      &
    \flctoglmup{\fgfp{\pvar}{\fml}} = \ggfp{\gvartopvar{\pvar}}{\flctoglmu{\fml}}
\end{align*}
The translation \(\flctoglmu{\fml}\) of a \flcnameshort formula~\(\fml\)
is always a \glmushortname game in normal form.
The \glmushortname formula corresponding to \(\fml\)
is  \(\flctoglmuf{\fml}\equiv\fdia{\flctoglmu{\fml}}\ffalse\).

\begin{propositionE}[\correcttranslation{\(\flctoglmufs\)}][\normalinlongversion]\label{correctsharp}
    For any \flcnameshort formula \(\fml\)
    the translation satisfies
    \(\semflcf[\nst]{\val}{\fml} = \semglg[\nst]{\val}{\flctoglmuf{\fml}}\).
    Hence
    \(\validflcin{\nst}{\fml}\) iff \(\validglmuin{\nst}{\flctoglmuf{\fml}}\).
\end{propositionE}

\begin{proofE}
    By structural induction on \(\fml\).
\end{proofE}

\subsubsection{Translation from \glmuname to \flcname}\label{flatsubsubsec}

Conversely any formula of  \glmuname can be expressed equivalently in \flcname.
To do this, fix two fresh variables \(\fvar,\fvarb\).
Intuitively the purpose of these variables is to mark the
end of the game, so that it can later be replaced by its game continuation.
The difference between the two variables is that \(\fvarb\) marks games that end in
fixpoint variables, while \(\fvar\) marks the end of all other games.
This distinction will only be important later when considering a particular subclass of formulas.

By \Cref{normalform} the translation can be defined by induction on the grammar
of formulas and games in normal form.
For any \glmushortname formula \(\fml\) and \glmushortname game \(\gam\)
define by induction an \flcnameshort formula \(\glmutoflc{\fml}\)
\begin{align*}
     & \glmutoflcp{\patom} = \patom
     &                              &
    \glmutoflcp{\fml\for\fmlb} = \glmutoflc{\fml}\for\glmutoflc{\fmlb}
     &                              &
    \glmutoflcp{\fdia{\gam}{\fml}}  = \freplace{\glmutoflc{\gam}}{\fvar,\fvarb}{\glmutoflc{\fml}}
    \\
     &
    \glmutoflcp{\fnot\patom}  = \fnot\patom
     &                              &
    \glmutoflcp{\fml\fand\fmlb}  = \glmutoflc{\fml}\fand\glmutoflc{\fmlb}
\end{align*}
and the \flcnameshort formula \(\glmutoflc{\gam}\)
\begin{align*}
     &
    \glmutoflcp{\gatom}   = \fdia{\gatom}{\fvar}
     &   &
    \glmutoflcp{\gam\gor\gamb} = \glmutoflc{\gam}\for\glmutoflc{\gamb}
     &   &
    \glmutoflcp{\gtest{\fmlb}} = \glmutoflc{\fmlb}\fand\fvar
    \\
     &
    \glmutoflcp{\gdual{\gatom}}   = \fbox{\gatom}{\fvar}
     &   &
    \glmutoflcp{\gam\gand\gamb} = \glmutoflc{\gam}\fand\glmutoflc{\gamb}
     &   &
    \glmutoflcp{\gdtest{\fmlb}} =\fnot\glmutoflcp{\fmlb}\for\fvar
    \\ &
    \glmutoflcp{\gvar} = \gvar\fcom\fvarb
     &   &
    \glmutoflcp{\glfp{\gvar}{\gam}} = (\flfp{\pvar}{\freplace{\glmutoflc{\gam}}{\fvar,\fvarb}{\fid}})\fcom\fvar
    \span
    \\ &
    \glmutoflcp{\gam\gcom\gamb} = \freplace{\glmutoflc{\gam}}{\fvar,\fvarb}{\glmutoflc{\gamb}}
     &   &
    \glmutoflcp{\ggfp{\gvar}{\gam}} = (\fgfp{\pvar}{\freplace{\glmutoflc{\gam}}{\fvar,\fvarb}{\fid}})\fcom\fvar\span
\end{align*}

Note that \(\freplace{\fml}{\fvar,\fvarb}{\fmlb}\) denotes the formula obtained by simultaneously replacing
all appearances of \(\fvar\) and \(\fvarb\) in \(\fml\) by \(\fmlb\).
This is different from successive substitution \(\freplace{\freplace{\fml}{\fvar}{\fmlb}}{\fvarb}{\fmlb}\).
The substitutions here are always admissible, that is
no fixpoint construct captures a free variable. In fact none of the variables that are
substituted (\(\fvar,\fvarb\))
even appears in the context of a fixpoint in the translation.
(The variables \(\fvar,\fvarb\) are fresh and do not appear in the original formula or game.)

\begin{propositionE}[\correcttranslation{\(\glmutoflcs\)}][]\label{translationglmutoflc}
    For any \wellnamed
    \glmushortname formula~\(\fml\)
    and any \wellnamed
    \glmushortname game~\(\gam\) in normal form
    \begin{enumerate}
        \item \(\semglf[\nst]{\val}{\fml} = \semflcf[\nst]{\val}{\glmutoflc{\fml}}(\xpel)\)
              for any \(\xpel\subseteq\nstdom{\nst}\)
        \item \(\semglf[\nst]{\val}{\gam}\fcom\wreg = \semflcf[\nst]{\valsubst{\fvar,\fvarb}{\wreg}}{\glmutoflc{\gam}}\)\label{extendedflattranslation}
    \end{enumerate}
    Hence
    \(\validglmuin{\nst}{\fml}\) iff \(\validflcin{\nst}{\glmutoflc{\fml}}\).
\end{propositionE}

\begin{proofE}
    This is shown by a mutual induction on formulas \(\fml\) and games
    \(\gam\) of \glmushortname.
    Most cases of the induction are straightforward and only the interesting cases are presented.

    \begin{caselist}
        \case{\(\fdia{\gam}\fmlb\)}
        \begin{align*}
            \semflcf[\nst]{\val}{\glmutoflcp{\fdia{\gam}{\fmlb}}}(\xpel)
             & =
            \semflcf[\nst]{\val}{\freplace{\glmutoflc{\gam}}{\fvar,\fvarb}{\glmutoflc{\fmlb}}}(\xpel)
            =
            \semflcf[\nst]{\valsubst[\val]{\fvar,\fvarb}{\semflcf[\nst]{\val}{\glmutoflc{\fml}}}}{\glmutoflc{\gam}}(\xpel)
            \\
             & =
            \semglg[\nst]{\val}{\gam}(\semflcf[\nst]{\val}{\glmutoflc{\fml}}(\xpel))
            =
            \semglg[\nst]{\val}{\gam}(\semglf[\nst]{\val}{\fml})
            \\
             & =
            \semglf[\nst]{\val}{\fdia{\gam}{\fmlb}}
        \end{align*}
        The \wellnamed{}ness assumption is used to ensure that
        the substitution does not capture free variables in the second equality.

        \case{\(\gam\gcom\gamb\)} This is very similar to the case for \(\fdia{\gam}\fmlb\).

        \case{\(\pvar\)} Simply note that
        \[\semglf[\nst]{\val}{\pvar}\fcom\wreg = \val(\pvar)\circ \wreg =
            \semflcf[\nst]{\valsubst{\fvar,\fvarb}{\wreg}}{\pvar\fcom\fvarb} =\semflcf[\nst]{\valsubst{\fvar,\fvarb}{\wreg}}{\glmutoflc{\pvar}}\]

        \case{\(\glfp{\pvar}{\gam}\)}
        With the induction hypothesis compute
        \begin{align*}
            \semflcf[\nst]{\valsubst{\fvar,\fvarb}{\wreg}}{\glmutoflcp{\glfp{\pvar}{\gam}}}
             & =
            \semflcf[\nst]{\valsubst{\fvar,\fvarb}{\wreg}}{\flfp{\pvar}{\freplace{\glmutoflc{\gam}}{\fvar,\fvarb}{\fid}}\fcom\fvar}
            \\
             & =
            (\lfp{\wreg}{\semflcf[\nst]{\valsubst{\fvar,\fvarb}{\wreg}}{\freplace{\glmutoflc{\gam}}{\fvar,\fvarb}{\fid}}})\circ\wreg
            \\
             & =
            (\lfp{\wreg}{\semflcf[\nst]{\valsubst[{\valsubst{\fvar,\fvarb}{\id}}]{\pvar}{\wreg}}{\glmutoflc{\gam}}})\circ\wreg
            \\
             & =
            (\lfp{\wreg}{\semflcf[\nst]{\valsubst{\pvar}{\wreg}}{\gam}}\circ\id)\circ\wreg
            \\
             &
            = \semglf[\nst]{\val}{\glfp{\pvar}{\gam}}\fcom\wreg\qedhere
        \end{align*}
    \end{caselist}
\end{proofE}

\begin{theorem}[Equiexpressiveness for \flcnameshort]
    \Glmuname (\glmushortname) and \flcname (\flcnameshort) are equiexpressive.
\end{theorem}

\subsection{The \LMuname as a Game Logic}

In this section the precise extension of \glname that corresponds to the \lmuname is identified.
The lack of the fixpoint variables of the \lmuname in \glname was remedied by introducing recursive subgames.
This allows the \lmuname to be understood as a game logic where games are played in a tail-recursive
way, which captures the regularity of the \lmuname in the context of \glmuname.

A game \(\gam\) of \glmuname is \emph{\rightlinear} in \(\gvar\) if
it has no subgame \(\gamb\gcom\gamc\) where \(\gvar\) is free in \(\gamb\).
A game \(\gam\) is \emph{\rightlinear} if it contains a subgame \(\glfp{\gvar}{\gamb}\)
only if \(\gamb\) is \rightlinear in \(\gvar\).
A formula~\(\fml\) of \glmuname is \emph{\rightlinear} if all its subgames
are \rightlinear.
The fragment of \glmushortname consisting only of \rightlinear formulas and games
is called \emph{\glmurlname} (\glmurlshortname).

The translation \(\flctoglmus\) transforms formulas of the \lmuname
to \glmurlname, since the game \(\gam\) in
all sequential games \(\gam\gcom\gamb\) introduced in the translation \(\flctoglmus\)
is of the form \(\gatom,\gdual{\gatom},\gtest{\patom}\) or \(\gtest{\fsnot\patom}\).
For the converse, the translation~\(\glmutoflcs\) can be modified to ensure that it only produces \(\lmushortname\)
formulas.
For any \glmurlshortname formula \(\fml\) and any \glmurlshortname game~\(\gam\), the
\lmushortname formulas \(\rlglmutoflc{\fml}\) and \(\rlglmutoflc{\gam}\) are defined
by structural induction.
The definition
of \(\rlglmutoflc{\gam}\) is identical to the definition of \(\glmutoflc{\gam}\) in \Cref{flatsubsubsec},
except for the following cases
\begin{align*}
    \rlglmutoflcp{\gvar}              & = \gvar
                                      &
    \rlglmutoflcp{\gam\gcom\gamb}     & = \freplace{\rlglmutoflc{\gam}}{\fvar}{\rlglmutoflc{\gamb}}
                                      &
    \rlglmutoflcp{\glfp{\gvar}{\gam}} & = \flfp{\pvar}{\rlglmutoflc{\gam}}
\end{align*}

Note that \(\rlglmutoflc{\fml}\) is a \lmuname formula,
as it does not mention~\(\fcom\).
This is a generalization of a prior translation \cite{DBLP:journals/tcs/EnqvistSV18}.

\begin{propositionE}[\correcttranslation{\(\rlglmutoflcs\)}][] \label{translationoftranslation}
    The translation \(\rlglmutoflc{\fml}\) of a \wellnamed \glmurlshortname formula \(\fml\) in normal form
    satisfies \(\semflcf[\nst]{\val}{\rlglmutoflc{\fml}} = \semflcf[\nst]{\val}{\glmutoflc{\fml}}.\)
\end{propositionE}

\begin{proofE}
    For the purposes of this proof a \valname \(\val\) is \emph{constant}
    if \(\val(\pvar)\) is constant for all variable \(\pvar\in\pvars\) except \(\fvar\) and \(\fvarb\).
    Prove by structural induction on \emph{all} \wellnamed
    \glmushortname formulas \(\fml\) and
    all \wellnamed \glmushortname games \(\gam\)
    which are \rightlinear in \emph{all} variables,
    that
    for all constant \valname{}s \(\val\):
    \[\semflcf[\nst]{\val}{\rlglmutoflc{\fml}} = \semflcf[\nst]{\val}{\glmutoflc{\fml}}
        \qquad\text{and}\qquad
        \semflcf[\nst]{\val}{\rlglmutoflc{\gam}} = \semflcf[\nst]{\val}{\glmutoflc{\gam}}.
    \]
    The interesting cases of the induction are shown.

    \begin{caselist}
        \case{\(\fdia{\gam}{\fml}\)} By the induction hypothesis
        \begin{align*}
            \semflcf[\nst]{\val}{\rlglmutoflcp{\fdia{\gam}{\fml}}}
             & =
            \semflcf[\nst]{\val}{\freplace{\rlglmutoflc{\gam}}{\fvar}{\rlglmutoflc{\fml}}}
            =
            \semflcf[\nst]{\valsubst{\fvar,\fvarb}{\semflcf{\val}{\rlglmutoflc{\fml}}}}{\rlglmutoflc{\gam}}
            \\
             &
            =
            \semflcf[\nst]{\valsubst{\fvar,\fvarb}{\semflcf{\val}{\glmutoflc{\fml}}}}{\glmutoflc{\gam}}
            =\semflcf[\nst]{\val}{\glmutoflcp{\fdia{\gam}{\fml}}}
        \end{align*}
        The \wellnamed{}ness property of the formula ensures that \(\gam\)
        does not bind a variable that is free in \(\fml\), so that the substitution above does
        not capture variables.

        \case{\(\pvar\)}
        Because the \valname \(\val\) is constant
        \[\semflcf[\nst]{\val}{\rlglmutoflc{\pvar}} = \val(x) = \val(x)\circ\val(\fvarb )=
            \semflcf[\nst]{\val}{\pvar\fcom\fvarb} = \semflcf[\nst]{\val}{\glmutoflc{\pvar}}.\]

        \case{\(\gam\gcom\gamb\)} This case is similar to the case for formulas \(\fdia{\gam}{\fml}\).

        \case{\(\flfp{\pvar}{\gam}\)}
        Using \Cref{keylemmamu,keylemmagl} the fixpoint can be computed pointwise
        \begin{align*}
            \semflcf[\nst]{\val}{\rlglmutoflcp{\flfp{\pvar}{\gam}}}(\xpel)
             & =
            \semflcf[\nst]{\val}{\glfp{\pvar}{\rlglmutoflc{\gam}}}(\xpel)
            = \lfp{\ypel}{(\semflcf[\nst]{\valsubst[\val]{\pvar}{\nfuncconst{\ypel}}}{\rlglmutoflc{\gam}}(\xpel))}
            \\
             &
            = \lfp{\ypel}{(\semflcf[\nst]{\valsubst[\val]{\pvar}{\nfuncconst{\ypel}}}{\glmutoflc{\gam}}(\xpel))}
            \\
             &
            = \lfp{\ypel}{(\semflcf[\nst]{\valsubst[\val]{\pvar}{\nfuncconst{\ypel}}}{\gam}(I(\fvar)(\xpel)))}
            \\
             &
            =\semflcf[\nst]{\val}{\flfp{\pvar}{\gam}}(I(\fvar)(\xpel))
            \\
             &
            =\semflcf[\nst]{\val}{\glmutoflcp{\flfp{\pvar}{\gam}}}(\xpel)
        \end{align*}
        The third equality is by the induction hypothesis.
        The fourth and the sixth equalities are by
        \Cref{translationglmutoflc}
        \Iref{extendedflattranslation}.
    \end{caselist}

    To remove the assumption that the valuation is constant, note that
    free variables in \(\fml\) behave just like \atgamename{}s.
    More explicitly, for every free variable \(\gvar\) of \(\fml\)
    let \(\gatom_\gvar\) be a fresh \atgamename and let \(\tilde{\fml}\)
    be obtained from \(\fml\) by replacing every free \(\gvar\) by
    \(\gatom_\gvar\).
    Then because \(\tilde{\fml}\) is closed by construction
    \[\semflcf[\nst]{\val}{\rlglmutoflc{\fml}}
        =\semflcf[\nst_\val]{}{\rlglmutoflc{\tilde{\fml}}}
        =\semflcf[\nst_\val]{}{\rlglmutoflc{\tilde{\fml}}}
        = \semflcf[\nst]{\val}{\glmutoflc{\fml}}.\]
    where \(\nst_\val\) is the modification of \(\nst\)
    with \(\nstnf{\nst_\val}{\gatom_\gvar} =\val(\gvar)\).
\end{proofE}

\begin{theorem}[Equiexpressiveness for \lmushortname]\label{glmurllmuequiexpressive}
    \Glmurlname (\glmurlshortname) and the \lmuname (\lmushortname) are equiexpressive.
\end{theorem}

\begin{proof}
    As noted \(\flctoglmuf{\fml}\) is a formula of \glmurlname provided \(\fml\) is a
    formula in the \lmuname. This shows that \glmurlname is at least as expressive as the \lmuname.
    The converse follows from \Cref{normalform} and \Cref{translationoftranslation}.
\end{proof}

The next result is a consequence of \Cref{correctsharp,translationoftranslation,translationglmutoflc}
and captures that the translations are inverse to each other.

\begin{corollary}[\semanticroundtrip] \label{thereandbackvalid}
    The formulas
    \(\validflcnbhd{\fml\fequiv{\flctoflc{\fml}}}\) and
    \(\validglmunbhd{\fmlb\fequiv\glmutoglmuf{\fmlb}}\)
    are valid
    for all \wellnamed \lmushortname formulas \(\fml\) and all \wellnamed
    \glmurlshortname formulas \(\psi\) in normal form.
\end{corollary}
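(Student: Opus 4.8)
The plan is to derive both halves of the roundtrip purely by composing the two translation-correctness results already established, namely \Cref{translationglmutoflc} (correctness of \(\glmutoflcs\)) and the preceding correctness proposition for \(\flctoglmus\), together with the routine substitution lemma for \flcnameshort; no fresh induction is required. One preliminary observation to record is that \(\flctoglmus\) keeps variable names unchanged (\(\pvartogvar{\cdot}\) and \(\gvartopvar{\cdot}\) are the identity) and maps each fixpoint binder to a corresponding game binder, so it does not alter the binding structure and therefore maps \wellnamed \flcnameshort formulas to \wellnamed \glmushortname games; hence \(\flctoglmuf{\fml}=\fdia{\flctoglmu{\fml}}{\ffalse}\) is a \wellnamed \glmushortname formula whenever \(\fml\) is, which is exactly what licenses applying \Cref{translationglmutoflc} to \(\flctoglmuf{\fml}\).

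For the \glmushortname roundtrip \(\validglmunbhd{\fmlb\fequiv\flctoglmuf{\glmutoflc{\fmlb}}}\), fix a \nstname \(\nst\) and a valuation \(\val\). \Cref{translationglmutoflc}(1) at the argument \(\emptyset\) gives \(\semglf[\nst]{\val}{\fmlb}=\semflcfzero[\nst]{\val}{\glmutoflc{\fmlb}}\), and the correctness of \(\flctoglmus\) applied to the \flcnameshort formula \(\glmutoflc{\fmlb}\) gives \(\semflcfzero[\nst]{\val}{\glmutoflc{\fmlb}}=\semglf[\nst]{\val}{\flctoglmuf{\glmutoflc{\fmlb}}}\). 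Chaining these for all \(\nst,\val\) yields \(\semglf[\nst]{\val}{\fmlb}=\semglf[\nst]{\val}{\flctoglmuf{\glmutoflc{\fmlb}}}\), which is the asserted validity since \glmushortname formulas denote predicates.

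For the \flcnameshort roundtrip \(\validflcnbhd{\fml\fequiv\glmutoflc{\flctoglmuf{\fml}}}\), I unfold \(\flctoglmuf{\fml}=\fdia{\flctoglmu{\fml}}{\ffalse}\), so that \(\glmutoflc{\flctoglmuf{\fml}}=\freplace{\glmutoflc{\flctoglmu{\fml}}}{\fvar,\fvarb}{\glmutoflc{\ffalse}}\) by the definition of \(\glmutoflcs\) on a diamond. The \flcnameshort substitution lemma, \Cref{translationglmutoflc}(2) applied to the game \(\flctoglmu{\fml}\), and the correctness of \(\flctoglmus\) (which gives \(\semglg[\nst]{\val}{\flctoglmu{\fml}}=\semflcf[\nst]{\val}{\fml}\)) then yield \(\semflcf[\nst]{\val}{\glmutoflc{\flctoglmuf{\fml}}}=\semflcf[\nst]{\val}{\fml}\fcom\semflcf[\nst]{\val}{\glmutoflc{\ffalse}}\). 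Since \(\semglf[\nst]{\val}{\ffalse}=\emptyset\), \Cref{translationglmutoflc}(1) shows \(\semflcf[\nst]{\val}{\glmutoflc{\ffalse}}\) is the constant transformer \(\xpel\mapsto\emptyset\), so \(\semflcf[\nst]{\val}{\glmutoflc{\flctoglmuf{\fml}}}\) is the constant transformer with value \(\semflcfzero[\nst]{\val}{\fml}\); evaluating at \(\emptyset\) gives \(\semflcfzero[\nst]{\val}{\glmutoflc{\flctoglmuf{\fml}}}=\semflcfzero[\nst]{\val}{\fml}\) for all \(\nst,\val\), i.e.\ the claimed equivalence.

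The step that needs care is the last one: the composite translation \(\flctoglmus\) followed by \(\glmutoflcs\) does \emph{not} return the original formula up to equality of predicate transformers, because \(\glmutoflc{\flctoglmuf{\fml}}\) is always a \emph{constant} transformer — the wrapper \(\fdia{\cdot}{\ffalse}\) reintroduces \(\fml\) only after feeding \(\emptyset\) into it — whereas \(\semflcf[\nst]{\val}{\fml}\) is in general non-constant. The equivalence therefore genuinely holds only at the level of the induced truth-set semantics \(\semflcfzero[\nst]{\val}{\cdot}\), which is exactly what \(\validflcnbhd{\cdot}\) asks for; the whole \flcnameshort half must accordingly be phrased in terms of evaluation at \(\emptyset\), and one should not attempt to prove the stronger transformer-level identity, which is false. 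Everything else (the \flcnameshort substitution lemma and the \wellnamed{}ness remark) is routine bookkeeping.
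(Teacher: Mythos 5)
Your proof is correct and is essentially the argument the paper intends: the corollary is stated without proof as an immediate composition of the two correctness propositions (Correct \(\flctoglmus\) and \Cref{translationglmutoflc}), which is exactly what you carry out, including the necessary evaluation at \(\emptyset\) on the \flcnameshort side. Your remark that the composite \(\glmutoflcs\circ\flctoglmufs\) yields only a constant predicate transformer, so the equivalence holds at the level of \(\semflcfzero{\val}{\cdot}\) rather than as a transformer identity, is an accurate and worthwhile clarification of why the statement is phrased via \(\validflcnbhd{\cdot}\).
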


\subsection{\GLname as a Fixpoint Logic}

Recall from \Cref{secfplogics} that the \lmustarname is the fragment of \flcname,
which contains no fixpoints
except in the form \(\flfpstar{\fml}\).
Because the fixpoint structure in the \lmustarname mirrors the structure
in \glname, the translations between \glmushortname and \flcnameshort
also show the equiexpressiveness of the \lmustarname and \glshortname.
This identifies the exact modal fixpoint logic corresponding to Parikh's original \glname.

The technical notion of formula separability will be used for the proof.
A formula \(\fml\) of the \lmuname
is \emph{\separable} if it contains fixpoints only in the forms
\(\flfp{\pvar}{(\fmlb\for\fmlc)}\)
and \(\fgfp{\pvar}{(\fmlb\fand\fmlc)}\) where \(\fmlc\) does not mention
\(\pvar\) and \(\fmlb\) has no variable other than \(\pvar\) free.
Let \lmusepshortname denote the set of \separable
formulas of the \lmuname.

\begin{lemmaE}[][]\label{triangletranslation}
    \begin{enumerate}
        \item If \(\fml\) is a \lmustarshortname formula, then \(\flctoglmuf{\fml}\) is a \glshortname formula. \label{triangletranslationstartosharp}
        \item Any \lmusepshortname formula is equivalent to an \lmustarshortname formula. \label{triangletranslationseptostar}
        \item If \(\fml\) is a \wellnamed \glshortname formula in normal form,
              then \(\rlglmutoflc{\fml}\) is an \lmusepshortname formula. \label{triangletranslationgltols}
    \end{enumerate}
\end{lemmaE}

\begin{proofE}
    \begin{enumerate}
        \item[\eqref{triangletranslationstartosharp}] The \(\flctoglmufs\)
            translation can only give rise to a non-\glshortname
            formula is through translation of fixpoints.
            The translation of the \(*\)-fixpoints to
            \lmustarshortname formulas are into recursive games.

        \item[\eqref{triangletranslationseptostar}]
            Prove by structural induction on a formula
            of the \lmuname that, if it is separable, then there is an equivalent \lmustarshortname~formula.
            The only interesting case is for fixpoint
            formulas so consider a \separable least fixpoint formula \(\fml\).
            Pick \separable formulas \(\fmlb,\fmlc\) such that
            \(\fml \fequiv \flfp{\pvar}{(\fmlc\for\fmlb)}\)
            where \(\pvar\)
            is not free in \(\fmlc\) and only \(\pvar\)
            is free in \(\fmlb\).
            By renaming ensure that \(\pvar\) is not bound in
            \(\fmlb\).
            By \Cref{lemmaaddremovex} semantically \(\fmlb\semequiv\freplace{\fmlb}{\pvar}{\fid}\fcom\pvar\).

            By the induction hypothesis pick
            \lmustarshortname formulas
            \(\fmlb',\fmlc'\) equivalent to
            \(\fmlb,\fmlc\) respectively.
            We claim that \(\fml\) is equivalent to the \lmustarshortname formula \(\flfpstar{(\freplace{\fmlb'}{\pvar}{\fid})}\fcom\fmlc'\).
            Semantically
            \begin{align*}
                \flfpstar{(\freplace{\fmlb'}{\pvar}{\fid})}\fcom\fmlc'
                 & \equiv\flfp{\pvar}{(\fid\for\freplace{\fmlb'}{\pvar}{\fid}\fcom\pvar)}\fcom\fmlc
                \\
                 & \equiv\flfp{\pvar}{(\fid\for\freplace{\fmlb}{\pvar}{\fid}\fcom\pvar)}\fcom\fmlc
                \\
                 & \equiv\flfp{\pvar}{(\fid\for\fmlb)}\fcom\fmlc
            \end{align*}
            Because \(\fid\for\fmlb\) is a \lmuname formula by \Cref{keylemmamu} and \Cref{lemmaaddremovex}
            we compute
            \begin{align*}
                \qquad & \semflcf{\val}{\flfp{\pvar}{(\fid\for\fmlb)}\fcom\fmlc}(\xpel)
                \\
                       & =
                \flfp{\ypel}{(\semflcf{\valsubst{\pvar}{\nfuncconst{\ypel}}}{\fid\for\fmlb}(\semflcf{\val}{\fmlc}(\xpel)))}
                \\
                       & =
                \flfp{\ypel}{(\semflcf{\val}{\fmlc}(\xpel) \cup \semflcf{\valsubst{\pvar}{\nfuncconst{\ypel}}}{\freplace{\fmlb}{\pvar}{\fid}\fcom\pvar}(\semflcf{\val}{\fmlc}(\xpel)))}
                \\
                       & =
                \flfp{\ypel}{(\semflcf{\valsubst{\pvar}{\nfuncconst{\ypel}}}{\fmlc}(\xpel) \cup \semflcf{\valsubst{\pvar}{\nfuncconst{\ypel}}}{\freplace{\fmlb}{\pvar}{\fid}\fcom\pvar}(\xpel))}
                \\
                       & =
                \flfp{\ypel}{(\semflcf{\valsubst{\pvar}{\nfuncconst{\ypel}}}{\fmlc\for\fmlb}(\xpel))}
                \\
                       & =
                \semflcf{\val}{\flfp{\pvar}{(\fmlc\for\fmlb)}}(\xpel)
            \end{align*}
            The third equality holds because \(\pvar\) is not free in \(\fmlc\)
            and by the fact that
            \(\semflcf{\valsubst{\pvar}{\nfuncconst{\ypel}}}{\pvar}(\zpel)=\ypel\)
            is constant.

        \item[\eqref{triangletranslationgltols}]
            First observe that if \(\fml\) and \(\fmlb\)
            are \separable then \(\freplace{\fml}{\fvar}{\fmlb}\)
            is also separable if the free
            variables of \(\fmlb\) are never bound in \(\fml\)
            and \(\fvar\) is not bound in \(\fml\).
            With this prove by mutual induction on the definition of
            \glshortname formulas \(\fml\) and \glshortname games \(\gam\)
            in normal form
            that the translations \(\rlglmutoflc{\fml}\) and \(\rlglmutoflc{\gam}\) are \separable
            \lmushortname formulas.
            The interesting cases are presented.

            \begin{caselist}
                \case{\(\fdia{\gam}\fml\)}
                This is immediate by the induction hypothesis and
                the observation above, which applies by the assumption that \(\fml\)
                is \wellnamed.

                \case{\(\gam\gcom\gamb\)}
                Similar to the case for \(\fdia{\gam}\fml\).

                \case{\(\gstar{\gam}\)}
                By definition
                \(\gstar{\gam}\synequiv\glfp{\pvar}{(\gtest{\ftrue}\gor\gam\gcom\pvar)}\)
                for some \(\pvar\) which does not appear in \(\gam\).
                The translation is
                \[\rlglmutoflcp{\gstar{\gam}}=\flfp{\pvar}{(\fvar\for\freplace{\rlglmutoflc{\gam}}{\fvar}{\pvar})}.\]
                Because \(\gam\) is a \glshortname game it has no free variables,
                only \(\fvar\) is free in \(\rlglmutoflc{\gam}\).
                Hence the translation is \separable.

                \case{\(\gdstar{\gam}\)} Similar to the case for \(\gstar{\gam}\).
            \end{caselist}
            \qedhere
    \end{enumerate}
\end{proofE}

\begin{theorem}[Equiexpressiveness for \glshortname]\label{corstarequ}
    \Glname (\glshortname), the \lmustarname (\lmustarshortname), and the \separable fragment of the \lmuname (\lmusepshortname)
    are equiexpressive.
\end{theorem}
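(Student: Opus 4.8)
The plan is to deduce the theorem purely from the three translations $\flctoglmufs,\glmutoflcs,\rlglmutoflcs$ together with \Cref{triangletranslation}, by closing a cycle of expressiveness inclusions
\[
\glshortname \ \le\ \lmusepshortname \ \le\ \lmustarshortname \ \le\ \glshortname ,
\]
where $\mathcal{L}_1 \le \mathcal{L}_2$ abbreviates ``every formula of $\mathcal{L}_1$ is semantically equivalent to some formula of $\mathcal{L}_2$''. Since ``equivalent to'' is transitive, a cycle of length three gives all three pairwise mutual inclusions, which is exactly equi-expressiveness. Throughout I use that the $\glmushortname$ semantics of a $\glshortname$ formula coincides with the standard $\glname$ semantics (\Cref{fpdef} and the surrounding remark) and that the $\flcnameshort$ semantics of an $\lmushortname$ formula coincides with the usual $\lmuname$ semantics, so that ``equivalent'' can be read uniformly across the four semantic brackets involved; I also freely rename bound variables so that every formula in play is \wellnamed and in normal form.

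\emph{$\glshortname \le \lmusepshortname$.} Given a $\glshortname$ formula $\fml$, first rename the fixpoint variables arising from reading each $\gstar{\gam}$ as $\glfp{\gvar}{(\gam\gcom\gvar\gor\gtest{\ftrue})}$ so that $\fml$ becomes \wellnamed; this changes neither membership in $\glshortname$ nor the semantics. Viewed in $\glmushortname$, $\fml$ is closed and \rightlinear, because the only subgame of shape $\gamb\gcom\gamc$ whose left component could carry a free fixpoint variable is $\gam\gcom\gvar$ inside such an unfolding, and the fresh $\gvar$ does not occur in the $\glshortname$ game $\gam$. Hence \Cref{translationoftranslation} applies to $\fml$, yielding $\semflcf[\nst]{}{\rlglmutoflc{\fml}} = \semflcf[\nst]{}{\glmutoflc{\fml}}$ and $\semglf[\nst]{}{\fml} = \semflcfzero[\nst]{}{\glmutoflc{\fml}}$ for every structure $\nst$; evaluating the first equality at $\emptyset$ and chaining with the second gives $\semglf[\nst]{}{\fml} = \semflcfzero[\nst]{}{\rlglmutoflc{\fml}}$. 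By \Cref{triangletranslation}(2) the formula $\rlglmutoflc{\fml}$ lies in $\lmusepshortname$, so it is the desired \separable formula equivalent to $\fml$.

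\emph{$\lmusepshortname \le \lmustarshortname$} is precisely \Cref{triangletranslation}(3). \emph{$\lmustarshortname \le \glshortname$:} given an $\lmustarshortname$ formula $\fml$, \Cref{triangletranslation}(1) says $\flctoglmuf{\fml}$ is a $\glshortname$ formula, and correctness of the $\flctoglmufs$ translation gives $\semflcfzero[\nst]{}{\fml} = \semglf[\nst]{}{\flctoglmuf{\fml}}$, the right-hand side being the $\glmushortname$ — hence, by \Cref{fpdef}, the standard $\glname$ — semantics. So $\flctoglmuf{\fml}$ is a $\glshortname$ formula equivalent to $\fml$, which closes the cycle. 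Since all three translations act uniformly on \nstname{}s and on \kstname{}s, the equi-expressiveness holds over either class of models.

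The genuinely nontrivial content has already been isolated into \Cref{triangletranslation} and the correctness propositions for the translations, which may be assumed; the only care needed in assembling the theorem is checking that $\glshortname$ formulas meet the hypotheses of \Cref{translationoftranslation} (\wellnamed, closed, \rightlinear, normal form) and keeping track of which of the four semantic brackets is intended at each step. Accordingly, the main obstacle is not in this theorem but upstream, in \Cref{triangletranslation}(3): rewriting a \separable fixpoint $\flfp{\pvar}{(\fmlc\for\fmlb)}$ as a structured Kleene-style $\flfpstar{}$-fixpoint composed with the tail $\fmlc$ carries the expressive weight, and everything in the present proof is then bookkeeping.
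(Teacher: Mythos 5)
Your proposal is correct and matches the paper's intended argument: the theorem is stated immediately after \Cref{triangletranslation} with no separate proof, precisely because it follows by closing the cycle $\glshortname \le \lmusepshortname \le \lmustarshortname \le \glshortname$ from the three parts of that lemma together with the correctness of $\flctoglmufs$ and $\rlglmutoflcs$ and the agreement of the semantic brackets. Your additional checks (that a \wellnamed{} $\glshortname$ formula is closed and \rightlinear{} so that \Cref{translationoftranslation} applies) are exactly the bookkeeping the paper leaves implicit.
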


The equivalence between the separable fragment of the \lmuname
and \glname has been shown
\cite[Theorem 3.3.10]{carreiro2015fragments}.
\Cref{corstarequ} adds to this equivalence the \lmustarname.
It is still open whether \glname is equivalent to the two variable fragment of \lmushortname.
By \Cref{corstarequ} this can be reduced to the question of whether every two-variable \lmushortname formula is expressible in \lmustarshortname.

\subsection{\GLSname as Right-linearity}

Although \sabactionname{}s are far from naturally expressible in \glmurlshortname,
they do not add expressive power.
This shows that \glsname is, like \glname, a fragment of the \lmuname.
The difficulty in embedding \glsshortname into \glmurlshortname
is that the ownership information about previously committed acts of
sabotage must be taken into account. This can be done by coding this information
on the sabotaged \atgamename{}s into the nesting structure of the  fixpoint variables.
To simplify this coding, it uses simultaneous fixpoints, which
do not add to the expressive power. This is captured by
the following rendition adapting \Beckic's Theorem to \glmurlshortname.
\begin{theorem}[\Beckic]\label{bekic}
    For variables \(\gvar_1,\ldots,\gvar_n\)
    and \glmurlshortname  games \(\gam_1,\ldots,\gam_n\) there
    are \glmurlshortname  games \(\gamb_1,\ldots,\gamb_n\) such that
    \[
        \begin{pmatrix}
            \semglg{\val}{\gamb_1} \\
            \semglg{\val}{\gamb_2} \\
            \vdots                 \\
            \semglg{\val}{\gamb_m}
        \end{pmatrix}
        = \mu
        \begin{pmatrix}
            \wreg_{1} \\
            \wreg_{2} \\
            \vdots    \\
            \wreg_{n}
        \end{pmatrix}.
        \begin{pmatrix}
            \semglg{\valsubst[\val]{\vec{\gvar}}{\vec{\wreg}}}{\gam_1} \\
            \semglg{\valsubst[\val]{\vec{\gvar}}{\vec{\wreg}}}{\gam_2} \\
            \vdots                                                     \\
            \semglg{\valsubst[\val]{\vec{\gvar}}{\vec{\wreg}}}{\gam_m}
        \end{pmatrix}
    \]
    Let \(\gvlfp[i]{\pvar_1,\ldots,\pvar_n}{\gam_1,\ldots,\gam_n}\)
    denote the \glmurlshortname game \(\gamb_i\).
\end{theorem}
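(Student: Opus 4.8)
The plan is to separate the statement into its semantic content --- a componentwise least fixpoint in the complete lattice $\wregs{\nstdom{\nst}}^n$ (ordered pointwise) --- and its syntactic content, namely realizing that fixpoint as a $\glmurlshortname$ game. For the semantic half I would simply invoke the abstract Bekić lemma \cite[Lemma 1.4.2]{arnold2001rudiments} in $\wregs{\nstdom{\nst}}^n$; it applies because each map $\vec{\wreg}\mapsto\semglg[\nst]{\valsubst[\val]{\vec{\gvar}}{\vec{\wreg}}}{\gam_j}$ is monotone by \Cref{gamesfparefp}, which is exactly what guarantees that the displayed vector fixpoint exists and agrees with the value of whatever nested single fixpoint we build. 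So the actual work is the explicit construction of the games $\gamb_i$.

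I would construct the $\gamb_i$ by induction on $n$. For $n=1$ take $\gamb_1\synequiv\glfp{\gvar_1}{\gam_1}$. For the step, treat $\gvar_n$ as a parameter: then $\gam_1,\ldots,\gam_{n-1}$ form a system in the variables $\gvar_1,\ldots,\gvar_{n-1}$ with $\gvar_n$ free, so the induction hypothesis yields $\glmurlshortname$ games $\gamc_1,\ldots,\gamc_{n-1}$ (still with $\gvar_n$ free) whose semantics is the simultaneous least fixpoint of the first $n-1$ equations with $\gvar_n$ held fixed. Then set $\gamb_n\synequiv\glfp{\gvar_n}{\greplace{\gam_n}{\gvar_1,\ldots,\gvar_{n-1}}{\gamc_1,\ldots,\gamc_{n-1}}}$ and $\gamb_i\synequiv\greplace{\gamc_i}{\gvar_n}{\gamb_n}$ for $i<n$. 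Correctness is precisely the two-block instance of Bekić's identity (block $(\gvar_1,\ldots,\gvar_{n-1})$ against block $\gvar_n$): $x_B^{*}=\mu x_B.\,f_B(\mu x_A.\,f_A(x_A,x_B),x_B)$ and $x_A^{*}=\mu x_A.\,f_A(x_A,x_B^{*})$, combined with the compositionality of \Cref{glmusemantics} and the routine semantic substitution identity $\semglg[\nst]{\val}{\greplace{\gam}{\gvar}{\gamb}}=\semglg[\nst]{\valsubst[\val]{\gvar}{\semglg[\nst]{\val}{\gamb}}}{\gam}$.

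The step that needs care --- and the only place where the restriction to $\glmurlshortname$ rather than $\glmushortname$ matters --- is verifying that the construction stays inside \glmurlname, i.e.\ that every substitution above preserves right-linearity and the requirement that a bound variable occurs only under an even number of $\gdual{}$ operators. The key observation is that a free fixpoint variable $\gvar_i$ in a right-linear game $\gam_n$ never occurs to the left of a composition, so substituting a right-linear $\gamc_i$ for it places each copy of $\gamc_i$ only to the right of compositions; the only newly created subgames of $\greplace{\gam_n}{\vec{\gvar}}{\vec{\gamc}}$ of the form $\gamb'\gcom\gamc'$ lie inside some copy of $\gamc_i$, where right-linearity holds by the induction hypothesis, and the only free variable of $\gamc_i$ (namely $\gvar_n$) cannot appear in the left component of such a composition for the same reason. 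Even-dual positivity is preserved because all of $\gvar_1,\ldots,\gvar_n$ are assumed to occur only under an even $\gdual{}$-nesting throughout, and substituting positively-occurring games for them keeps this true; wrapping a $\glfp{\gvar_n}{\cdot}$ around a right-linear game again yields a right-linear game. I expect this bookkeeping about right-linearity under substitution to be the main obstacle; everything else reduces to the abstract fixpoint lemma and the compositionality of the semantics.
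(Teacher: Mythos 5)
Your proposal is correct. Note that the paper itself gives no proof of this statement at all: it is imported by citation from Arnold and Niwi\'nski, so there is no in-paper argument to compare against. Your argument is the standard Gaussian-elimination proof of Beki\'c's identity (solve the first $n-1$ equations with $\gvar_n$ as a parameter, close the last equation with a single $\glfps$, then back-substitute), and the two-block identity you quote is exactly what makes the induction go through, together with the substitution lemma (\Cref{dotsubstitution}) and monotonicity. The genuinely useful part of your write-up is the piece the citation does \emph{not} cover, namely that the constructed $\gamb_i$ remain \glmurlshortname games: your observation that a \rightlinear game never has a free variable in the left component of a composition, so that substituting \rightlinear games for free variables creates no new offending compositions outside the substituted copies, is the right argument, and positivity is automatic once everything is in normal form. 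The only bookkeeping you leave implicit is capture-avoidance when substituting $\gamb_n$ (which may have free ambient variables) back into $\gamc_i$, handled by bound renaming as usual; and you might note that $\gamc_i$ can have free variables besides $\gvar_n$ inherited from the ambient context, which does not affect the argument since \rightlinear{}ity already keeps all of them out of left components.
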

\begin{proof}
    An adaptation of \Beckic's Theorem \cite[Lemma 1.4.2]{arnold2001rudiments}.
\end{proof}

Fix for every possible context \(\cont\in\contexts\) a fresh variable \(\contvar{\cont}\).
For any formula \(\fml\) and any game \(\gam\) of \glsname
a translation~\(\glrulestorl{\gam}{\cont}\)
depending on the context \(\cont\) is defined.
The context allows the translation to depend on the state
of sabotage of \atgamename{}s.
Moreover the translation of games will contain free variables~\(\contvar{\cont}\).
Those mark the end of the game and keep track of the context in which this end has been reached.
This allows a compositional definition of the translation.
For a context \(\cont\), a \glsshortname formula~\(\fml\) and a \glsshortname game~\(\gam\) in normal form,
the \glmurlshortname games \(\glrulestorl{\fml}{\cont}\) and \(\glrulestorl{\gam}{\cont}\)
are defined by mutual induction on the \glsshortname formulas \(\fml\) and games \(\gam\):
\begin{align*}
     &
    \glrulestorlp{\patom}{\cont} = \gtest{\patom}\gcom\gdtest{\ffalse}
     &   &
    \glrulestorlp{\fml\for\fmlb}{\cont} = \glrulestorl{\fml}{\cont}\gor \glrulestorl{\fmlb}{\cont}
    \\
     &
    \glrulestorlp{\fnot\patom}{\cont} = \gtest{\fnot\patom}\gcom\gdtest{\ffalse}
     &   &
    \glrulestorlp{\fml\fand\fmlb}{\cont} = \glrulestorl{\fml}{\cont}\gand \glrulestorl{\fmlb}{\cont}
\end{align*}
\begin{align*}
     & \glrulestorlp{\gatom}{\cont}
    =
    \begin{cases}
        \gatom\gcom\contvar{\cont} & \text{if }\cont(\gatom)=\playernocon  \\
        \contvar{\cont}            & \text{if }\cont(\gatom)=\playeronecon \\
        \gtest{\ffalse}            & \text{if }\cont(\gatom)=\playertwocon
    \end{cases}
     & \quad                        &
    \glrulestorlp{{\gdual{\gatom}}}{\cont}
    =
    \begin{cases}
        \gdual{\gatom}\gcom\contvar{\cont} & \text{if }\cont(\gatom)=\playernocon  \\
        \gdtest{\ffalse}                   & \text{if }\cont(\gatom)=\playeronecon \\
        \contvar{\cont}                    & \text{if }\cont(\gatom)=\playertwocon
    \end{cases}
\end{align*}
\begin{align*}
     &
    \glrulestorlp{\gangelswin{\gatom}}{\cont} = \contvar{{\contmod[\cont]{\gatom}{\playeronecon}}}
     &   &
    \glrulestorlp{\gtest{\fml}}{\cont} = {\glrulestorl{\fml}{\cont}}\gand\contvar{\cont}
     &   &
    \glrulestorlp{\gam\gor\gamb}{\cont} = \glrulestorl{\gam}{\cont}\gor \glrulestorl{\gamb}{\cont}
    \\
     &
    \glrulestorlp{\gangelswin{\gdual{\gatom}}}{\cont}  = \contvar{{\contmod[\cont]{\gatom}{\playertwocon}}}
     &   &
    \glrulestorlp{\gdtest{\fml}}{\cont} = {\glrulestorl{\fsnot{\fml}}{\cont}}\gor\contvar{\cont}
     &   &
    \glrulestorlp{\gam\gand\gamb}{\cont} = \glrulestorl{\gam}{\cont}\gand \glrulestorl{\gamb}{\cont}
\end{align*}
The translations of \atgamename{}s and sabotage
games illustrates the importance of translating relative to a context.
The translation of formulas \(\fdia{\gam}{\fml}\) and games \(\gam\gcom\gamb\) and \(\gstar{\gam},\gdstar{\gam}\) is slightly more involved. For the first two define
\begin{align*}
    \glrulestorlp{\fdia{\gam}\fml}{\cont}
     & =
    \freplace{\glrulestorl{\gam}{\cont}}{\contvarrep}{\glrulestorl{\fml}{\repdot}\gcom\gtest{\ffalse}}
     &   &
    \glrulestorlp{\gam\gcom\gamb}{\cont}
    =
    \freplace{\glrulestorl{\gam}{\cont}}{\contvarrep}{\glrulestorl{\gamb}{\repdot}}
\end{align*}
where the notation \(\freplace{\glrulestorl{\gam}{\cont}}{\contvarrep}{\glrulestorl{\gamb}{\repdot}}\)
means that any instance of a variable \(\contvar{\contb}\) is replaced by \(\glrulestorl{\gamb}{\contb}\),
the translation of \(\gamb\) with respect to \contextname~\(\contb\).
This shows the role of the variables \(\contvar{\cont}\) as placeholders for the continuation of the game.
In the translation
\(\glrulestorlp{\fmlb}{\cont}\gand\glrulestorlp{\fml}{\cont}\gcom\gtest{\ffalse}\)
of formula \(\glrulestorlp{\fdia{\gtest{\fmlb}}\fml}{\cont}\),
variable \(\contvar{\cont}\) is a placeholder for the formula \(\glrulestorlp{\fml}{\cont}\).

For the translation of an iteration game, all possible ways of playing this game,
depending on what has been sabotaged and how, are considered \emph{simultaneously}.
To this end, fix for a context \(\cont\) and fixpoint games \(\gstar{\gam}\) and \(\gdstar{\gam}\)
a list of all contexts \(\cont_1,\ldots,\cont_m\) that
satisfy \(\cont_i(\gatom)=\playernocon\) if \(\gatom\) and
\(\gdual{\gatom}\)
do not appear in \(\gam\).
The translation of the repetition games is defined simultaneously for all \(\cont_i\)
\begin{align*}
    \glrulestorlp{\gstar{\gam}}{\cont_i}  & =  \gvlfp[i]{\contvarfp{\cont_1},\ldots,\contvarfp{\cont_n}}{
        \contvar{\cont_1}\gor\freplace{\glrulestorl{\gam}{\cont_1}}{\contrep}{\contvarfp{\contrep}},
        \ldots,\contvar{\cont_n}\gor
        \freplace{\glrulestorl{\gam}{\cont_n}}{\contvar{{\contrep}}}{\contvarfp{\contrep}}}
    \\
    \glrulestorlp{\gdstar{\gam}}{\cont_i} & =  \gvgfp[i]{\contvarfp{\cont_1},\ldots,\contvarfp{\cont_n}}{
        \contvar{\cont_1}\gand\freplace{\glrulestorl{\gam}{\cont_n}}{\contvar{\contrep}}{\contvarfp{\contrep}},\ldots,\contvar{\cont_n}\gand
        \freplace{\glrulestorl{\gam}{\cont_n}}{\contvar{\contrep}}{\contvarfp{\contrep}}}
\end{align*}
where the \(\contvarfp{\contrep}\) are fresh variables.
Observe that the translation of any \glsshortname game is a \glmurlname game
and the translation of any \glsshortname formula is a \emph{closed}
\glmurlname game.

\begin{textAtEnd}
    The next lemma requires some notation.
    For \(\gvar_1,\ldots,\gvar_n\in\pvars\) and \(f_1,\ldots,f_n\in\nfuncs{\nstdom{\nst}}\) the notation \(\valsubst{\gvarrep}{f_\repdot}\) denotes the \valname that agrees with \(\val\) everywhere, except that \(\valsubst{\gvarrep}{f_\repdot}(\gvar_i)=f_i\) for all \(1\leq i\leq n\).
\end{textAtEnd}

\begin{lemmaE}[Substitution][all end] \label{dotsubstitution}
    If \(\gam,\gamb_1,\ldots, \gamb_n\) are \glmurlshortname games
    in normal form such that
    \(\gvar_i\) is free for \(\gamb_i\) in \(\gam\) (that is \(\gvar_i\)
    does not appear in \(\gam\) in a context where a free variable of \(\gamb_i\) is bound),
    then
    \[\semglg[\nst]{\valsubst{\gvarrep}{\semglg[\nst]{I}{\gamb_\repdot}}}{\gam}=\semglg[\nst]{\val}{\freplace{\gam}{\gvarrep}{\gamb_\repdot}}.\]
\end{lemmaE}

\begin{proofE}
    By a straightforward induction on \(\gam\).
\end{proofE}

The next proposition shows that the translation is correct.
\begin{propositionE}[\correcttranslation{\(\glrulestorl{\cdot}{\cont}}\)][]\label{rulestorltranslation}
    Let \(\val(\contvar{\contb})=\contproj{U}{\contb}\).
    For any \glsshortname formula \(\fml\)
    and any \glsshortname game \(\gam\)
    in normal form
    \begin{align*}
        \contproj{\semglfc[\nst]{\fml}}{\cont}=
        \semglg[\nst]{}{\glrulestorl{\fml}{\cont}}(\emptyset)
        \qquad
        \contproj{\semglgc{\gam}(U)}{\cont}
        =
        \semglg{\val}{\glrulestorl{\gam}{\cont}}(\emptyset)
    \end{align*}
\end{propositionE}

\begin{proofE}
    This is proved by induction on the definitions of formulas and the games in normal form. Most cases are straightforward
    and the interesting ones are presented.

    \begin{caselist}
        \case{\(\gtest{\fml}\)}
        This is immediate from the induction hypothesis,
        since the translation \(\glrulestorlp{\fml}{\cont}\) is closed.

        \case{\(\fdia{\gam}{\fml}\)}
        By the induction hypothesis
        \[
            \contproj{\semglfc[\nst]{\fdia{\gam}{\fml}}}{\cont}
            =
            \semglg{\val}{\glrulestorl{\gam}{\cont}}(\emptyset)
        \]
        where \(\val(\contvar{\contb}) = \semglg{}{\glrulestorl{\fml}{\contb}}(\emptyset)=\semglg{}{\glrulestorl{\fml}{\contb}\gcom\gtest{\ffalse}}\).
        Hence \(\contproj{\semglfc[\nst]{\fdia{\gam}{\fml}}}{\cont} = \semglf{}{\freplace{\glrulestorl{\gam}{\cont}}{\contvarrep}{\glrulestorl{\fml}{\repdot}\gcom\gtest{\ffalse}}}(\emptyset) \)
        by \Cref{dotsubstitution} as required.

        \case{\(\gam\gcom\gamb\)}
        This is similar to the case for \(\fdia{\gam}{\fml}\).
        By the induction hypothesis
        \[
            \contproj{\semglfc[\nst]{\gam\gcom\gamb}}{\cont}
            =
            \semglg{\val_1}{\glrulestorl{\gam}{\cont}}(\emptyset)
        \]
        where \(\val_1(\contvar{\contb}) = \contproj{\semglfc[\nst]{\gamb}}{\contb} =\semglg{\val_2}{\glrulestorl{\gamb}{\contb}}\)
        and \(\val_2(\contvar{\contb}) = \contproj{U}{\contb}\).
        Then by \Cref{dotsubstitution}
        \[
            \semglg{\val_1}{\glrulestorl{\gam}{\cont}}(\emptyset)
            =
            \semglg{\valsubst[\val_2]{\contvarrep}{\semglg{\val_2}{\glrulestorl{\gamb}{\contb}}}}{\glrulestorl{\gam}{\cont}}(\emptyset)
            =
            \semglg{\val}{\freplace{\glrulestorl{\gam}{\cont}}{\contvarrep}{\glrulestorl{\gamb}{\repdot}}}(\emptyset)
        \]

        \case{\(\gstar{\gam}\)}
        List the contexts \(\cont_1,\ldots,\cont_m\) as in the definition
        for \(\glrulestorlp{\gstar{\gam}}{\cont}\).
        The \(\subseteq\) inclusion is proved first.
        Fix \(\ypel_i = \semglg{\val}{\glrulestorlp{\gstar{\gam}}{\cont_i}}(\emptyset)\).
        By definition of the translation, \Cref{dotsubstitution} and the inductive hypothesis
        \begin{align*}
            \ypel_i
             & = \semglg{\val}{
                \contvar{\cont_i}\gor\freplace{\glrulestorl{\gam}{\cont_i}}{\contvarrep}{\glrulestorlp{\gstar{\gam}}{\contrep}}}(\emptyset)
            \\
             & = \contproj{U}{\cont_i} \union \semglg{\valsubst{\contvarrep}{B_\repdot}}{\glrulestorl{\gam}{\cont_i}}(\emptyset)
            \\
             & =
            \contproj{U}{\cont_i} \union
            \contproj{\semglgc{\gam}(\ypel)}{\cont_i}
        \end{align*}
        where \(\ypel =\Union_{i=1}^n \ypel_i\times \{\cont_i\}\).
        Hence \(\ypel = U \cup \semglgc{\gam}(\ypel)\) and so by the definition of the semantics
        of \(\gstar{\gam}\) as the least fixpoint \(\ypel \subseteq \semglgc{\gstar{\gam}}(U)\).
        By projection \(\contproj{\semglgc{\gstar{\gam}}(U)}{\cont_i}\subseteq \ypel_i\) follows.

        For the \(\supseteq\) inclusion define \(\xpel_i = \contproj{\semglgc{\gstar{\gam}}(U)}{\cont_i}\)
        and let \(\val(\contvar{\cont_i}) = \xpel_i\) for all \(i\).
        By the induction hypothesis
        \begin{align*}
            \xpel_i & = \contproj{U}{\cont_i}\cup \contproj{\semglgc{\gam}(\semglgc{\gstar{\gam}}(U))}{\cont_i}
            =
            \contproj{U}{\cont_i}\cup \semglg{\val}{\gam}(\emptyset)
        \end{align*}
        The \(\supseteq\) inclusion follows from \Cref{bekic} by minimality.
    \end{caselist}
\end{proofE}

The translation of a \glsname formula into a formula of \glmurlname
potentially grows very quickly.
The upper bound on the length of the translation of a fixpoint game
obtained from the above proof is\footnote{%
    The up-arrow notation \(n\uparrow\uparrow m\) denotes \(m\)-fold iterated exponentiation, i.e. \(n\uparrow\uparrow 0 =1\) and \(n\uparrow\uparrow (m+1) = n^{n\uparrow\uparrow m}\).}
\[\abs{\glrulestorl{\gam}{\cont}}\leq {(K\cdot\abs{\gam})^{(3^\ell)\uparrow\uparrow d}},\]
where \(K\) is a constant, \(d\) is the fixpoint nesting depth of \(\gam\) and \(\ell\) is the number of atomic games for which there
are \sabactionname{}s in~\(\gam\).
This comes from the fact that the translation of any game \(\gam\) considers all of the
up to \(3^{\ell}\)-many relevant contexts. The only known transformation
from vectorial fixpoints to non-vectorial nested fixpoints as in \Cref{bekic}
grows exponentially in the formula size.
Consequently every fixpoint leads to a doubly exponential blow-up in length.
In \cite{DBLP:journals/igpl/BruseFL15} it is shown that reducing vectorial fixpoints
to non-vectorial fixpoints is at least as hard as solving parity games,
for which the existence of a polynomial time algorithm is a longstanding open question.
It has been conjectured \cite{DBLP:reference/mc/BradfieldW18} that a vectorial fixpoint formula can
be exponentially smaller than the shortest equivalent non-vectorial formula.

While it is unclear to what extent this upper bound is optimal,
it suggests that complex formulas of the \lmuname may be expressed much more succinctly in \glsname.

\subsection{Sabotage Memory}
\label{sabmemsec}

This section shows how a \glsshortname game can use sabotage to model memory.
This expressive power will be used to translate from \glmurlshortname into \glsshortname.
It is straightforward to store Boolean information by sabotaging a fresh \atgamename.
However as it will be necessary for both players to retrieve the information stored in the game,
two \atgamename{}s are used to store a bit.
This makes it possible to define a game \(\sabreadtrue{\gatom}\) that Angel can skip if the value
associated to \(\gatom\) is true. If the value is false, Angel loses the game prematurely.

To encode this formally in \glsshortname, consider a list \(\gatomvec[\gatom] = \gatomvece{n}\) of \atgamename{}s
and define a composite game \(\sabset[\gatom]{i}\) by
\begin{align*}
     & \gdemonswin{\gatom_1}\gcom\ldots\gcom\gdemonswin{\gatom_n}\gcom\gangelswin{\gatom_i}
\end{align*}
for all \(1\leq i \leq n\).
In this context, \(\sabread[\gatom]{i}\) is synonymous notation for~\(\gatom_i\).
As long as sabotage actions for any \atgamename \(\gatom_i\) only appear in the context
of \(\sabset[\gatom]{\cdot}\) then, after a game of the form \(\sabset[\gatom]{i}\) has been played once,
Angel can win the game \(\sabread[\gatom]{j}\) exactly if the last time a subgame of the form \(\sabset[\gatom]{j}\) has been played was with \(i=j\).

In case \(n=2\) the list \(\gatomvec[\gatom] = (\gatom_1,\gatom_2)\) is used to memorize a binary value.
Writing \(\sabsettrue{\gatom}\) for \(\sabset[\gatom]{1}\)
is viewed as setting the value of \(\gatomvec\) to true.
Similarly the game \(\sabsetfalse{\gatom}\) representing \(\sabset[\gatom]{2}\)
is understood to set the value of \(\gatomvec[\gatom]\) to false.
Then the game \(\sabreadtrue{\gatom}\) can be defined as \(\sabread[\gatomvec]{1}\)
and has the desired property described above.
Dually, define~\(\sabreadfalsedual{\gatom}\) to stand for~\(\gdual{\gatom_2}\).
This game is skipped if the value for \(\gatomvec\) is true and Demon loses otherwise.
Similarly Demon tests if the value for \(\gatomvec\) is false by playing
\(\sabreadtrueedual{\gatom}\), defined as \(\gdual{\gatom_1}\).
Note that \(\gdualp{\sabsetfalse{\gatom}}\) is equivalent to \(\sabsettrue{\gatom}\),
\(\gdualp{\sabreadfalse{\gatom}}\) is equivalent to~\(\sabreadfalsedual{\gatom}\)
and~\(\gdualp{\sabreadtrue{\gatom}}\) is equivalent to~\(\sabreadtrueedual{\gatom}\).

\subsection{\GLmurlname as Sabotage}

With the help of sabotage memory, it is possible to express every \glmurlname
formula in \glsname and consequently also every \lmuname formula.
This shows that \glsname is an \emph{expressive completion} of \glname
as a fragment of the \lmuname.

The challenge of the converse translation from \glmurlshortname to \glsshortname
is that the arbitrarily nested recursive games of \glmurlshortname need to be turned into
structured repetition games of \glsname.
Using sabotage,
players can force the behaviour of nested recursive games onto structured repetition games.
To facilitate this, fix fresh atomic games \(\gmod{\gvar}_{1}\) and \(\gmod{\gvar}_{1}\)
for every variable \(\gvar\) and let \(\gmod{\gvar}=(\gmod{\gvar}_{1},\gmod{\gvar}_{2})\) as sabotage memory.
By mutual induction define the translation of a \glmurlshortname formula
in normal form
into a \glsshortname formula \(\rlglmutoglrulesf{\fml}\)
\begin{align*}
     & \rlglmutoglrulesfp{\patom}
    =
    \patom
     &                            &
    \rlglmutoglrulesfp{\fml\for\fmlb}
    =
    \rlglmutoglrulesf{\fml}\for\rlglmutoglrulesf{\fmlb}
     &                            &
    \rlglmutoglrulesfp{\fdia{\gam}{\fml}}
    =
    \fdia{\rlglmutoglrules{\gam}}{\rlglmutoglrulesf{\fml}}
    \\
     &
    \rlglmutoglrulesfp{\fnot\patom}
    =
    \fnot\patom
     &                            &
    \rlglmutoglrulesfp{\fml\fand\fmlb}
    =
    \rlglmutoglrulesf{\fml}\fand\rlglmutoglrulesf{\fmlb}
\end{align*}
and the translation of a normal form \glmurlshortname game
into a \glsshortname game~\(\rlglmutoglrules{\gam}\)
\begin{align*}
     &
    \rlglmutoglrulesp{\gatom} = \gatom
     & \;\; &
    \rlglmutoglrulesp{\gtest{\fml}} = \gtest{\rlglmutoglrulesf{\fml}}
     & \;\; &
    \rlglmutoglrulesp{\gam\gor\gamb} = \rlglmutoglrules{\gam}\gor\rlglmutoglrules{\gamb}
    \\
     &
    \rlglmutoglrulesp{{\gdual{\gatom}}} = \gdual{\gatom}
     &      &
    \rlglmutoglrulesp{\gdtest{\fml}} = \gdtest{\rlglmutoglrulesf{\fml}}
     &      &
    \rlglmutoglrulesp{\gam\gand\gamb} = \rlglmutoglrules{\gam}\gand\rlglmutoglrules{\gamb}
    \\
     &
    \rlglmutoglrulesp{\gam\gcom\gamb} = \rlglmutoglrules{\gam}\gcom\rlglmutoglrules{\gamb}
     &      &
    \rlglmutoglrulesp{\glfp{\gvar}{\gam}}
    =
    \sabsetfalsenv{\gmod{\gvar}}\gcom\gstar{(\sabreadfalsenv{\gmod{\gvar}}\gcom\sabsettruenv{\gmod{\gvar}}\gcom
        \freplace{\rlglmutoglrules{\gam}}{\gvar}{
            \sabsetfalsenv{\gmod{\gvar}}
        })};\sabreadtruenv{\gmod{\gvar}}\span\span
    \\
     &
    \rlglmutoglrulesp{\gvar} = \gvar
     &      &
    \rlglmutoglrulesp{\ggfp{\gvar}{\gam}}
    =
    \sabsettruenv{\gmod{\gvar}}\gcom\gdstar{(\sabreadfalsedualnv{\gmod{\gvar}}\gcom\sabsetfalsenv{\gmod{\gvar}}\gcom
        \freplace{\rlglmutoglrules{\gam}}{\gvar}{
            \sabsettruenv{\gmod{\gvar}}
        })};\sabreadtrueedualnv{\gmod{\gvar}}\span\span
\end{align*}
The translation is into a \glsshortname game with variables,
where in \glsshortname the variables are viewed as \atgamename{}s.
This is necessary for a compositional definition.
For a \glmurlshortname game \(\gam\) the \glshortname formula \(\rlglmutoglrulesf{\gam}\) is defined to be \(\fdia{\rlglmutoglrules{\gam}}{\ffalse}\).

Intuitively the translation of the fixpoints uses rule changes
to remove the choices from the repetition game. In a normal \(\gstar{\gam}\)
game it is Angel's choice whether to continue playing the game \(\gam\)
or not. However in general \(\glfp{\gvar}{\gam}\)
games, this choice is made differently. If the
variable \(\gvar\) is reached the game \emph{must} be repeated.
If the game ends without reaching this variable it \emph{must not} be repeated.
The translation enforces this deterministic behaviour of the repetition game in
a \(\gstar{}\) iteration game via tests.
Although the \(\gstar{}\)~iteration game theoretically allows Angel to stop
prematurely, Angel is constrained by the complementary tests \(\sabreadtruenv{\gmod{\gvar}}\)
and \(\sabreadfalsenv{\gmod{\gvar}}\) at the beginning of the loop body and after the loop.
Throughout the play of \(\gam\) the players use \(\gmod{\gvar}\) to \emph{memorize}
whether a variable \(\gvar\) has been reached in which case \(\sabreadtruenv{\gmod{\gvar}}\)
stops Angel from ending the game prematurely.
Otherwise Angel cannot safely repeat the game.

\begin{textAtEnd}
    The proof of \Cref{glmurltosgl} uses some auxilliary notation.
    For the purposes of the next proofs call context \(\cont\in\contexts\) \emph{natural} if \(\cont(\gatom)=\playernocon\) for all \(\gatom\) which are not \(\gmod{\gvar_1},\gmod{\gvar_2}\) for \(\gvar\in\pvars\).
    Let \(\contextsb\) be the set of natural contexts.
    For a set \(D\subseteq \contexts\) write \(\xpel_D = \bigcup_{\contc\in D}\contproj{\xpel}{\contc}\times \{\contc\}\) for a set \(\xpel\subseteq \nstdom{\nst}\times\contexts\).

    The proof of \Cref{glmurltosgl} requires notation for the change of the interpretation of the \atgamename{}s \(\rlglmutoglrules{\gvar}\) in \glsshortname.
    For \(\xpel\subseteq\nstdom{\nst}\) write \(\nst_{\gvar\mapsto \xpel}\) for the \nstname obtained from \(\nst\) with the only difference that \(\nstpv{\nst_{\gvar\mapsto \xpel}}{\gvar} =\nfuncconst{\xpel}\).
\end{textAtEnd}
\begin{lemmaE}[][all end]\label{naturalsubstitution}
    Let \(\fml\) be a \wellnamed normal-form \glmurlshortname formula containing only games which are \rightlinear in \(\gvar\), \(\fmlb\) a \glsshortname formula and \(\xpel=\contproj{\semglfc[\nst]{\fmlb}}{\contb}\) for all \(\contb\in\contextsb\).
    Then
    \[
        \contproj{\semglfc[\nst]{\freplace{\rlglmutoglrulesf{\fml}}{\gvar}{\gtest{\fmlb}\gcom\gdtest{\ffalse}}}}{\contzero}
        =
        \contproj{\semglfc[\nst_{\gvar\mapsto \xpel}]{\rlglmutoglrulesf{\fml}}}{\contzero}
    \]
\end{lemmaE}

\begin{proofE}
    First observe that the only \atgamename{}s of the form \(\gmod{\gvar}\) are sabotaged in a \(\rlglmutoglrulesfs\) translation.
    More formally prove
    \begin{equation}
        \contproj{\semglfc[\nst_{\gvar\mapsto \xpel}]{
                \freplace{\rlglmutoglrules{\gam}}{\gvarb_\repdot}{\sabsettruefalse{\gmod{\gvarb}_\repdot}}
            }(\xpel)}{\contb}
        =
        \contproj{\semglfc[\nst_{\gvar\mapsto \xpel}]{
                \freplace{\rlglmutoglrules{\gam}}{\gvarb_\repdot}{\sabsettruefalse{\gmod{\gvarb}_\repdot}}
            }(\xpel_\contextsb)}{\contb}
        \label{contextpullin}
    \end{equation}
    for all \(\contb\in\contextsb\).
    Here \(\sabsettruefalsenv{\gmod{\gvarb_i}}\) means either \(\sabsettruenv{\gmod{\gvarb_i}}\) is substituted for all appearances of \(\gvarb_i\) or \(\sabsetfalsenv{\gmod{\gvarb_i}}\) is substituted for all appearances of \(\gvarb_i\) (not mixed).
    This is proved by a straightforward induction on \(\gam\).

    To prove the lemma, by induction on \rightlinear, \wellnamed normal-form \glmurlshortname formulas \(\fml\) and games \(\gam\), such that all games in \(\gam\) and \(\fml\)  are \rightlinear in \(\gvar\) show
    for all \(\contb\in\contextsb\) and all \(\gvarb_1,\ldots,\gvarb_i\in\pvars\):
    \begin{align*}
         & \contproj{\semglfc[\nst]{\freplace{\rlglmutoglrulesf{\fml}}{\gvar}{\gtest{\fmlb}\gcom\gdtest{\ffalse}}}}{\contb}
        =
        \contproj{\semglfc[\nst_{\gvar\mapsto \xpel}]{\rlglmutoglrulesf{\fml}}}{\contb}
        \\
         & \contproj{\semglfc[\nst]{\freplace{
                    \freplace{\rlglmutoglrules{\gam}}{\gvarb_\repdot}{\sabsetfalse{\gmod{\gvarb}_\repdot}}
                }{\gvar}{\gtest{\fmlb}\gcom\gdtest{\ffalse}}}}{\contb}
        =
        \contproj{\semglfc[\nst_{\gvar\mapsto \xpel}]{
                \freplace{\rlglmutoglrules{\gam}}{\gvarb_\repdot}{\sabsetfalse{\gmod{\gvarb}_\repdot}}
            }}{\contb}
    \end{align*}
    The interesting cases are done explicitly.

    \begin{caselist}
        \case{\(\fdia{\gam}{\fml}\)}
        \begin{align*}
            \contproj{\semglfc[\nst]{\freplace{\rlglmutoglrulesfp{\fdia{\gam}{\fml}}}{\gvar}{\gtest{\fmlb}\gcom\gdtest{\ffalse}}}}{\contb}
             & =
            \contproj{\semglfc[\nst_{\gvar\mapsto \xpel}]{\rlglmutoglrules{\gam}}(
                \semglfc[\nst]{\freplace{\rlglmutoglrulesf{\fml}}{\gvar}{\gtest{\fmlb}\gcom\gdtest{\ffalse}}}
                )}{\contb}
            \\
             & =\contproj{\semglfc[\nst_{\gvar\mapsto \xpel}]{\rlglmutoglrules{\gam}}(
            (\semglfc[\nst_{\gvar\mapsto \xpel}]{\rlglmutoglrulesf{\fml}})_\contextsb
            )}{\contb}
            \\
             & =\contproj{\semglfc[\nst_{\gvar\mapsto \xpel}]{\rlglmutoglrules{\gam}}(
            (\semglfc[\nst_{\gvar\mapsto \xpel}]{\rlglmutoglrulesf{\fml}})_\contextsb
            )}{\contb}
            \\
             & =
            \contproj{\semglfc[\nst_{\gvar\mapsto \xpel}]{\rlglmutoglrulesfp{\fdia{\gam}{\fml}}}}{\contb}
        \end{align*}
        The first and the third equalties are by the induction hypothesis on \(\gam\) and \(\fml\) respectively.
        The second and fourth equalities are by \eqref{contextpullin}.

        \case{\(\gvar\)} Immediate by definition.

        \case{\(\gam\gcom\gamb\)}
        Very similar to the case for \(\fdia{\gam}{\fml}\).

        \case{\(\glfp{\gvarc}{\gam}\)}
        Let \(\contexts_\gvarc = \{\cont\in\contexts :\cont(\gmod{\gvarc_1}) =\playeronecon\}\).
        It suffices to show
        \begin{align*}
             & \contproj{\semglfc[\nst]{
                    \gstarp{\sabreadfalsenv{\gmod{\gvarc}}\gcom\sabsettruenv{\gmod{\gvarc}}\gcom
                        \freplace{
                            \freplace{
                                \freplace{\rlglmutoglrules{\gam}}{\gvarc}{\sabsettruenv{\gmod{\gvarc}}}
                            }{\gvarb_\repdot}{\sabsetfalse{\gmod{\gvarb}_\repdot}}
                        }{\gvar}{\gtest{\fmlb}\gcom\gdtest{\ffalse}}
                    }}(\xpel_{\contexts_\gvarc})
            }{\contb}                    \\
             & =
            \contproj{
                \semglfc[\nst_{\gvar\mapsto \xpel}]{
                    \gstarp{\sabreadfalsenv{\gmod{\gvarc}}\gcom\sabsettruenv{\gmod{\gvarc}}\gcom
                        \freplace{
                            \freplace{\rlglmutoglrules{\gam}}{\gvarc}{\sabsettruenv{\gmod{\gvarc}}}
                        }{\gvarb_\repdot}{\sabsettruefalse{\gmod{\gvarb}_\repdot}}
                    }
                }(\xpel_{\contexts_\gvarc})
            }{\contb}
        \end{align*}
        for all \(\contb\in\contextsb\).
        Define the fixpoint iterations
        \begin{align*}
             & B_0
            = A_{\contexts_\gvarc}
             &
            B_{\gamma+1}
            = B_0\cup
            \semglfc[\nst]{
                \sabreadfalsenv{\gmod{\gvarc}}\gcom\sabsettruenv{\gmod{\gvarc}}\gcom
                \freplace{
                    \freplace{
                        \freplace{\rlglmutoglrules{\gam}}{\gvarc}{\sabsettruenv{\gmod{\gvarc}}}
                    }{\gvarb_\repdot}{\sabsetfalse{\gmod{\gvarb}_\repdot}}
                }{\gvar}{\gtest{\fmlb}\gcom\gdtest{\ffalse}}
            }(B_\gamma)
            \\
             & C_0
            = A_{\contexts_\gvarc}
             & C_{\gamma+1}
            = C_0\cup
            \semglfc[\nst_{\gvar\mapsto \xpel}]{
                \sabreadfalsenv{\gmod{\gvarc}}\gcom\sabsettruenv{\gmod{\gvarc}}\gcom
                \freplace{
                    \freplace{\rlglmutoglrules{\gam}}{\gvarc}{\sabsettruenv{\gmod{\gvarc}}}
                }{\gvarb_\repdot}{\sabsettruefalse{\gmod{\gvarb}_\repdot}}
            }(C_\gamma)
        \end{align*}
        with and \(B_{\lambda} = {\bigcup_{\gamma<\lambda}} B_\gamma\) and \(C_{\lambda} = {\bigcup_{\gamma<\lambda}} C_\gamma\) for limit ordinals \(\lambda\).
        By induction it is easy to see that \(\contproj{B_\gamma}{\contb}=\contproj{C_\gamma}{\contb}\) for all \(\gamma\) and all \(\contb\in\contextsb\) as required.
        The case for successor ordinals \(\gamma+1\) uses the induction hypothesis on \(\gam\) and \eqref{contextpullin} with the induction hypothesis on \(\gamma\) in the form \((B_\gamma)_\contextsb = (C_\gamma)_\contextsb\).

        \case{\(\ggfp{\gvar}{\gam}\)}
        Analogous to the case for recursive games.
    \end{caselist}
\end{proofE}

\begin{propositionE}[\correcttranslation{\(\rlglmutoglrulesfs\)}][]\label{glmurltosgl}
    For any \wellnamed formula~\(\fml\) of \glmurlshortname in normal form, the translation \(\rlglmutoglrulesf{\fml}\)
    is a \glsshortname formula  with
    \[\contproj{\semglfc[\nst]{\rlglmutoglrulesf{\fml}}}{\contzero}=
        \semglf[\nst]{}{\fml}.\]
\end{propositionE}

\begin{proofE}
    The proposition can be proved by \emph{semantically} mirroring the \emph{syntactic} proofs in \Cref{rulesprovesfixpoint} below.
    Since however this lemma and soundness of the \glmurlshortname calculus (\Cref{rlglsoundness}) do \emph{not} rely on \Cref{glmurltosgl} we instead \emph{assume} these results (\Cref{technicaldeltalem,rulesprovesfixpoint}, \Cref{rlglsoundness} and \Cref{derivedglrlaxioms}) in this proof for convenience.

    Here the more general equality \(\contproj{\semglfc[\nst]{\rlglmutoglrulesf{\fml}}}{\contb}=
    \semglf[\nst]{}{\fml}\) for all \(\contb\in\contextsb\) is proved by induction on the \emph{\rankname} of \(\fml\).
    (See \Cref{ranksection}.)
    For formulas of \rankname \(0\) (i.e. literals) the equality is immediate.
    Assume then that the equality holds for all formulas of \rankname at most \(n\) and consider a formula of \rankname \(n+1\).
    Distinguish based on the shape of the formula.
    If the formula is a conjunction \(\fml\fand\fmlb\) or a disjunction \(\fml\for\fmlb\) the equality follows immediately by the induction hypothesis applied to the formulas \(\fml\) and \(\fmlb\).
    Suppose the formula is of the form \(\fdia{\gam}{\fmlb}\) then distinguish based on the shape of the game \(\gam\):

    \begin{caselist}
        \case{\(\fdia{\gatom}{\fml}\) or \(\fdia{\gdual{\gatom}}{\fml}\)}
        Immediate by definition.

        \case{\(\fdia{\gtest{\fml}}{\fmlb}\) or \(\fdia{\gdtest{\fml}}{\fmlb}\)}
        Immediate by induction hypothesis on the equivalent lower rank formulas \(\fml\fand\fmlb\) and \(\fsnot{\fml}\for\fmlb\).

        \case{\(\fdia{\gam\gor\gamb}{\fml}\) or \(\fdia{\gam\gand\gamb}{\fml}\)}
        Immediate by induction hypothesis on the equivalent lower rank formulas \(\fdia{\gam}{\fml}\for\fdia{\gamb}{\fml}\) and \(\fdia{\gam}{\fml}\fand\fdia{\gamb}{\fml}\).

        \case{\(\fdia{\gvar}{\fml}\)}
        Immediate by definition.
        Recall that for the purposes of the \(\rlglmutoglruless\) translation free variables are viewed as \atgamename{s}.

        \case{\(\fdia{\gam\gcom\gamb}{\fml}\)}
        Immediate by induction hypothesis on the equivalent lower rank formula \(\fdia{\gam}{\fdia{\gamb}{\fml}}\).

        \case{\(\fdia{\glfp{\gvar}{\gam}}{\fml}\)}
        Note that by \Cref{rulesprovesfixpoint} (\ref{naturaltranslatefixpointaxiom}), soundness (\Cref{sglsoundness}) and the (derived) axiom \irref{ax_gl_rl} (\Cref{derivedglrlaxioms})
        \begin{align*}
            Z & =\contproj{\semglfc[\nst]{\rlglmutoglrulesfp{\fdia{\glfp{\gvar}{\gam}}{\fml}}}}{\contb}
            \\
              &
            =\contproj{\semglfc[\nst]{\fdia{\freplace{\rlglmutoglrules{\gam}}{\gvar}{\rlglmutoglrulesp{\glfp{\gvar}{\gam}}\gcom\gtest{\rlglmutoglrulesf{\fml}}\gcom\gdtest{\ffalse}}}{\rlglmutoglrulesf{\fml}}}}{\contb}
            \\
              &
            =\contproj{\semglfc[\nst]{\freplace{\rlglmutoglrules{\gam}}{\gvar}{\rlglmutoglrulesp{\glfp{\gvar}{\gam}}\gcom\gtest{\rlglmutoglrulesf{\fml}}\gcom\gdtest{\ffalse}}}(\semglfc[\nst]{\rlglmutoglrulesf{\fml}})}{\contb}
            \\
              &
            =\contproj{\semglfc[\nst_{\gvar\mapsto Z}]{\rlglmutoglrules{\gam}}(\semglfc[\nst]{\rlglmutoglrulesf{\fml}})}{\contb}
              & \text{by \Cref{naturalsubstitution}}
            \\
              & = \contproj{\semglfc[\nst_{\gvar\mapsto Z}]{\rlglmutoglrulesfp{\fdia{\gam}{\fml}}}}{\contb}
              & \text{\(\gvar\notin\fml\)}
            \\
              & = \semglf[\nst_{\gvar\mapsto Z}]{}{\fdia{\gam}{\fml}}
              & \text{by I.H.}
        \end{align*}
        By minimality of the fixpoint it follows that
        \[\semglf[\nst]{}{\fdia{\glfp{\gvar}{\gam}}{\fml}}
            \subseteq Z = \contproj{\semglfc[\nst]{\rlglmutoglrulesfp{\fdia{\glfp{\gvar}{\gam}}{\fml}}}}{\contb}\]

        For the reverse inclusion pick a fresh atomic proposition \(\patom\in\patoms\) and assume that \(\nstpv{\nst}{\patom} = \semglf[\nst]{}{\fdia{\glfp{\gvar}{\gam}}{\fml}}\).
        Observe first
        \begin{align*}
            \contproj{\semglfc[\nst]{\fdia{\freplace{\rlglmutoglrules{\gam}}{\gvar}{
                            \gtest{\patom}\gcom\gdtest{\ffalse}
                        }}{\rlglmutoglrulesf{\fml}}}}{\contb}
             &
            = \contproj{\semglfc[\nst_{\gvar\mapsto\nstpv{\nst}{\patom}}]{\fdia{\rlglmutoglrules{\gam}}{\rlglmutoglrulesf{\fml}}}}{\contb}
             & \text{\Cref{naturalsubstitution}, \(\gvar\notin\fml\)}
            \\
             & = \semglf[\nst_{\gvar\mapsto\nstpv{\nst}{\patom}}]{}{\fdia{{\gam}}{\fml}}
             & \text{I.H.}
            \\
             & = \semglf[\nst]{}{\fdia{{\glfp{\gvar}{\gam}}}{\fml}}
             & \text{def. of \(\glfps\)}
        \end{align*}
        Hence
        \begin{equation}\label{eq:firstfix}
            \contproj{\semglfc[\nst]{\fdia{\freplace{\rlglmutoglrules{\gam}}{\gvar}{
                            \gtest{\patom}\gcom\gdtest{\ffalse}
                        }}{\rlglmutoglrulesf{\fml}}}}{\contb}
            \subseteq
            \contproj{\semglfc[\nst]{\patom}}{\contb}
        \end{equation}
        Note that \(\gtest{\patom}\gcom\gdtest{\ffalse}\) can be replaced by \(\sabsettruenv{\gmod{\gvar}}\gcom\gtest{\patom}\gcom\gdtest{\ffalse}\) by \Cref{naturalsubstitution}.
        Let \(\rho\synequiv \fdia{\sabreadtruenv{\gmod{\gvar}}}{\rlglmutoglrulesf{\fml}}\for
        \fdia{\sabreadfalsenv{\gmod{\gvar}}\gcom\sabsettruenv{\gmod{\gvar}}}{\patom}\) and note that by \eqref{eq:firstfix} then
        \[
            \contproj{\semglfc[\nst]{
                    \fdia{\sabreadtruenv{\gmod{\gvar}}}{\rlglmutoglrulesf{\fml}}\for
                    \fdia{\sabreadfalsenv{\gmod{\gvar}}\gcom\sabsettruenv{\gmod{\gvar}}\gcom\freplace{\rlglmutoglrules{\gam}}{\gvar}{\sabsetfalsenv{\gmod{\gvar}}\gcom
                            \gtest{\patom}\gcom\gdtest{\ffalse}
                        }}{\rlglmutoglrulesf{\fml}}}}{\contb}
            \subseteq
            \contproj{\semglfc[\nst]{\rho}}{\contb}\]
        By soundness (\Cref{sglsoundness}) and \Cref{technicaldeltalem}
        \begin{equation}
            \contproj{\semglfc[\nst]{
                    \fdia{\sabreadtruenv{\gmod{\gvar}}}{\rlglmutoglrulesf{\fml}}\for
                    \fdia{\sabreadfalsenv{\gmod{\gvar}}\gcom\sabsettruenv{\gmod{\gvar}}\gcom\freplace{\rlglmutoglrules{\gam}}{\gvar}{\sabsetfalsenv{\gmod{\gvar}}
                        }}{\rho}}}{\contb}
            \subseteq
            \contproj{\semglfc[\nst]{\rho}}{\contb}\label{naturalinclusioneq}
        \end{equation}
        using again \Cref{naturalsubstitution}.
        Define the \(\gstar{}\) fixpoint iteration
        \[
            B_0 =\semglfc[\nst]{
                \fdia{\sabreadtruenv{\gmod{\gvar}}}{\rlglmutoglrulesf{\fml}}}
            \qquad
            B_{\gamma+1} =
            B_0\cup
            \semglfc[\nst]{
                \sabreadfalsenv{\gmod{\gvar}}\gcom\sabsettruenv{\gmod{\gvar}}\gcom\freplace{\rlglmutoglrules{\gam}}{\gvar}{\sabsetfalsenv{\gmod{\gvar}}
                }}(B_\gamma)
        \]
        and \(B_\lambda =\bigcup_{\gamma<\lambda}B_\gamma\) for limit ordinals \(\gamma\).
        By induction on \(\gamma\) prove \(\contproj{B_\gamma}{\contb}\subseteq \contproj{\semglfc[\nst]{\rho}}{\contb}\) for all \(\contb\in\contextsb\).
        For the successor case observe by monotonicity
        \begin{align*}
            \contproj{B_{\gamma+1}}{\contb}
                      & =
            \contproj{B_{0}}{\contb} \cup
            \contproj{
                \semglfc[\nst]{
                    \sabreadfalsenv{\gmod{\gvar}}\gcom\sabsettruenv{\gmod{\gvar}}\gcom\freplace{\rlglmutoglrules{\gam}}{\gvar}{\sabsetfalsenv{\gmod{\gvar}}
                    }}(B_\gamma)
            }{\contb}
            \\
                      &
            = \contproj{B_{0}}{\contb} \cup
            \contproj{
            \semglfc[\nst]{
                \sabreadfalsenv{\gmod{\gvar}}\gcom\sabsettruenv{\gmod{\gvar}}\gcom\freplace{\rlglmutoglrules{\gam}}{\gvar}{\sabsetfalsenv{\gmod{\gvar}}
                }}((B_\gamma)_\contextsb)
            }{\contb} &
            \\
                      &
            \subseteq \contproj{B_{0}}{\contb} \cup
            \contproj{
            \semglfc[\nst]{
                \sabreadfalsenv{\gmod{\gvar}}\gcom\sabsettruenv{\gmod{\gvar}}\gcom\freplace{\rlglmutoglrules{\gam}}{\gvar}{\sabsetfalsenv{\gmod{\gvar}}
                }}((\semglfc[\nst]{\rho})_\contextsb)
            }{\contb} & \text{I.H.}
            \\
                      &
            = \contproj{B_{0}}{\contb} \cup
            \contproj{
                \semglfc[\nst]{
                    \sabreadfalsenv{\gmod{\gvar}}\gcom\sabsettruenv{\gmod{\gvar}}\gcom\freplace{\rlglmutoglrules{\gam}}{\gvar}{\sabsetfalsenv{\gmod{\gvar}}
                    }}(\semglfc[\nst]{\rho})
            }{\contb}
            \\
                      &
            \subseteq
            \contproj{\semglfc[\nst]{\rho}}{\contb}
                      & \text{\eqref{naturalinclusioneq}}
        \end{align*}
        The second and the forth line hold by \eqref{contextpullin} of \Cref{naturalsubstitution}.
        This shows
        \[
            B_\infty = \contproj{\semglfc[\nst]{
                    \fdia{\gstarp{\sabreadfalsenv{\gmod{\gvar}}\gcom\sabsettruenv{\gmod{\gvar}}\gcom\freplace{\rlglmutoglrules{\gam}}{\gvar}{\sabsetfalsenv{\gmod{\gvar}}
                            }}\gcom\sabreadtruenv{\gmod{\gvar}}}{\rlglmutoglrulesf{\fml}}}}{\contb}
            \subseteq
            \contproj{\semglfc[\nst]{\rho}}{\contb}\]
        and hence
        \[\contproj{\semglfc[\nst]{\rlglmutoglrulesfp{\fdia{\glfp{\gvar}{\gam}}{\fml}}}}{\contb}
            \subseteq
            \contproj{\semglfc[\nst]{\patom}}{\contb}
            =\semglf[\nst]{}{\fdia{\glfp{\gvar}{\gam}}{\fml}}\]

        \case{\(\fdia{\ggfp{\gvar}{\gam}}{\fml}\)}
        The case for corecursive games is analogous.
    \end{caselist}
\end{proofE}

\begin{theorem}[Equiexpressiveness]\label{rulesequi}
    \Glsname, \glmurlname and the \lmuname are equiexpressive.
\end{theorem}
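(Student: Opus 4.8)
The plan is to assemble the theorem from translations that have already been proved correct. By \Cref{glmurllmuequiexpressive}, \glmurlname and the \lmuname are equiexpressive via the translations $\flctoglmufs$ and $\rlglmutoflcs$, so it suffices to add \glrulesname to the picture by exhibiting semantics-preserving translations between \glrulesname and \glmurlname in both directions; composing these with $\flctoglmufs$ and $\rlglmutoflcs$ then relates \glrulesname to the \lmuname as well.

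To show every \glmurlshortname formula is expressible in \glrulesname I would invoke \Cref{glmurltosgl}: every closed \glmurlshortname formula is equivalent to a closed \wellnamed one by bound renaming, and for a closed \wellnamed \glmurlshortname formula $\fml$ the \glrulesshortname formula $\rlglmutoglrules{\fml}$ satisfies $\contproj{\semglfc[\nst]{\rlglmutoglrules{\fml}}}{\cont} = \semglf[\nst]{}{\fml}$ for every \nstname $\nst$ (hence also for every \kstname), so that $\semglf[\nst]{}{\fml} = \nstdom{\nst}$ iff $\validglrulesin{\nst}{\rlglmutoglrules{\fml}}$ and $\rlglmutoglrules{\fml}$ expresses $\fml$. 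For the converse, that every \glrulesshortname formula is expressible in \glmurlname, I would use \Cref{rulestorltranslation}: given a \glrulesshortname formula $\fml$, its translation $\glrulestorl{\fml}{\contzero}$ is a closed \rightlinear \glmurlshortname game, and wrapping it in a diamond gives the closed \glmurlshortname formula $\fmlc \synequiv \fdia{\glrulestorl{\fml}{\contzero}}{\ffalse}$, whose semantics is $\semglf[\nst]{}{\fmlc} = \semglg[\nst]{}{\glrulestorl{\fml}{\contzero}}(\emptyset) = \contproj{\semglfc[\nst]{\fml}}{\contzero}$ by \Cref{rulestorltranslation}. Since $\validglrulesin{\nst}{\fml}$ abbreviates $\semglfc[\nst]{\fml}\supseteq\nstdom{\nst}\times\{\contzero\}$, that is $\contproj{\semglfc[\nst]{\fml}}{\contzero} = \nstdom{\nst}$, the formula $\fmlc$ is valid on exactly the structures on which $\fml$ is, so $\fmlc$ expresses $\fml$. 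Combining the two directions yields equiexpressiveness of \glrulesname and \glmurlname, and \Cref{glmurllmuequiexpressive} closes the triangle.

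I do not expect a genuine obstacle, since the substantive work is already done in \Cref{rulestorltranslation}, \Cref{glmurltosgl} and \Cref{glmurllmuequiexpressive}. The one point needing care is the bookkeeping around contexts: the \glrulesshortname semantics lives over $\nstdom{\nst}\times\contexts$ whereas the \glmurlshortname and \lmushortname semantics live over $\nstdom{\nst}$, so one must check that the intended notion of truth of a \glrulesshortname formula --- evaluation in the empty context $\contzero$ --- is exactly what the game translations preserve. This is precisely the content of \Cref{rulestorltranslation} and \Cref{glmurltosgl} specialised to $\cont = \contzero$, so no additional argument is required beyond invoking them.
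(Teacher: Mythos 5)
Your proposal is correct and follows essentially the same route as the paper, whose proof of this theorem is precisely the one-line combination of \Cref{glmurltosgl}, \Cref{rulestorltranslation} and \Cref{glmurllmuequiexpressive} that you assemble. Your elaboration of the context bookkeeping (evaluating at $\contzero$ and wrapping the translated game as $\fdia{\glrulestorl{\fml}{\contzero}}{\ffalse}$) matches how the paper uses these propositions elsewhere, so nothing is missing.
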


\begin{proof}
    By \Cref{glmurltosgl,rulestorltranslation} and \Cref{glmurllmuequiexpressive}.
\end{proof}

This completes the picture of the relative expressiveness of the
game logics and fixpoint logics in \Cref{comparexpressiveness}.
The equiexpressiveness of \glsshortname and the \lmushortname means that
\glsname inherits many of the nice properties of the \lmuname
for free.

\begin{theorem}[Meta properties]
    \Glsname has the small model property and its satisfiability
    problem is decidable.
\end{theorem}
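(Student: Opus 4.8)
The plan is to derive both properties at once from \Cref{rulesequi} by reducing the satisfiability problem of \glrulesshortname to that of the \lmuname, which has the finite model property and a decidable satisfiability problem (both recalled in \Cref{secfplogics}, see \cite{BradfieldS06}). The crucial point is that the translations already established are not merely abstractly validity preserving but \emph{truth preserving over a fixed structure}, so they in particular preserve satisfiability \emph{and} the size of witnessing models.

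Concretely, I would compose the embedding $\glrulestorl{\cdot}{\contzero}$ of \glrulesshortname into \glmurlshortname from \Cref{rulestorltranslation} (preceded by the bound renaming that makes the intermediate game \wellnamed) with the translation $\rlglmutoflc{\cdot}$ of \glmurlshortname into the \lmuname from \Cref{translationoftranslation}. This produces, for every \glrulesshortname formula $\fml$, an \lmushortname formula $\fml^{T}$ computable from $\fml$ such that $\contproj{\semglfc[\nst]{\fml}}{\contzero}=\semflcfzero[\nst]{}{\fml^{T}}$ holds in \emph{every} \nstname{} $\nst$, and the same identity holds when attention is restricted to \kstname{}s. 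The key observation, immediate from \Cref{rulestorltranslation,translationoftranslation}, is that these translations only rewrite the syntactic object and never touch the carrier $\nstdom{\nst}$. Hence $\fml$ is satisfiable over some \nstname{} (resp.\ over some \kstname{}) if and only if $\fml^{T}$ is, and in that case over a structure with the same carrier.

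Both claims follow from this. \textbf{Small model property:} if $\fml^{T}$ is satisfiable, the finite model property of the \lmuname — which in fact provides a computable bound on the size of a smallest model — yields a model of cardinality bounded by a computable function of $|\fml^{T}|$; since the carrier is not changed by the reduction, this is also a model of $\fml$, and since $|\fml^{T}|$ is bounded by a computable function of $|\fml|$, the formula $\fml$ has a model of size bounded by a computable function of $|\fml|$. \textbf{Decidability:} to decide whether $\fml$ is satisfiable, compute $\fml^{T}$ and run the \lmuname satisfiability procedure; the answers coincide by the reduction. Unsatisfiability of $\fml$ (dually, validity of its syntactic complement) is handled the same way.

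The main obstacle is not correctness, which is essentially a repackaging of \Cref{rulesequi} together with the observation about carriers above, but the \emph{quality} of the bounds. As noted after \Cref{rulestorltranslation}, the embedding $\glrulestorl{\cdot}{\cont}$ incurs a non-elementary, tower-of-exponentials blow-up, one layer of which stems solely from the reduction of vectorial to non-vectorial fixpoints in \Cref{bekic}; consequently the model-size bound and the decision procedure obtained in this way are non-elementary and almost certainly far from optimal. A genuinely efficient statement would instead require a direct small-model construction for \glrulesshortname. A minor secondary point to check is uniformity of the whole composition across \nstname{}s and \kstname{}s and that the distinguished initial context $\contzero$ is tracked consistently throughout, but this is routine given the semantic correctness lemmas already in hand.
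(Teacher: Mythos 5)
Your proposal is correct and takes essentially the same route as the paper, which likewise derives both properties by transferring them from the \lmuname via the equiexpressiveness of \Cref{rulesequi}; your additional observation that the translations preserve the carrier (so satisfiability and model size transfer directly) is exactly the implicit content of that one-line argument.
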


\begin{proof}
    The \lmuname has these properties \cite{DBLP:journals/iandc/StreettE89,DBLP:conf/focs/Pratt81b},
    and they transfer to \glsname by \Cref{rulesequi}.
\end{proof}

According to the definition of \glmushortname and \glsshortname,
games may contain tests of arbitrary formulas.
For example the formula \(\fdia{\gtest{(\fdia{\gatom}{\patom})}}{\patom}\)
is a well-formed \glname formula. This \emph{rich-test}
version is in contrast to the \emph{poor-test}
version in which only tests of Boolean combinations of literals (i.e. formulas \(\patom\) and \(\fnot\patom\))
are allowed.
As a corollary of the equiexpressiveness results it follows that
rich-tests  (and even anything beyond literals) do not add expressive power.
\begin{corollary}[Tests]
    The poor-test versions of \glname, \glmurlname, \glmuname and \glsname
    are equiexpressive with their respective rich-test versions.
\end{corollary}
\begin{proof}
    For \glname and \glmurlname this can be seen by translating into the corresponding fragment
    of \flcname via \(\glmutoflcs\) or \(\rlglmutoflcs\),
    since the backward translation via \(\flctoglmufs\) yields
    an equivalent (\Cref{thereandbackvalid}) poor-test formula,
    since the translation \(\flctoglmufs\) only introduces
    tests of literals.
    If \(\fml\) is a \glsshortname formula then \(\glrulestorl{\fml}{\contzero}\)
    is an equivalent \glmurlshortname formula. By the above there is an equivalent
    poor-test \glmurlshortname formula \(\fmlb\) and hence \(\rlglmutoglrulesf{\fmlb}\)
    is a poor-test \glsshortname formula equivalent to \(\fml\).
    In fact, these merely test literals.
\end{proof}

\section{Axiomatization}\label{calculi}
\pratendSetLocal{category=calculi}

This section introduces natural proof calculi for
\glmurlshortname and \glsshortname.
Kozen's original
calculus for \lmushortname and its monotone variant are recalled,
since completeness for these game logics is
obtained from completeness for the \lmuname.

\subsection{Proof Calculi for the \LMuname}

Because this paper is also concerned with the \lmuname interpreted  over neighbourhood structures,
the \emph{monotone \lmuname \(\monlmu\)} \cite{DBLP:journals/apal/EnqvistSV19}, the restriction
of Kozen's calculus for the \lmuname for neighbourhood structures, is recalled here.
The monotone \lmuname
consists of all propositional tautologies together
with all instances of the following axioms:
\begin{center}
    \begin{calculuscollection}
        \begin{calculus}
            \cinferenceRule[mu_fp|$\mu$]{$\mu$ limit unrolling}
            {
                \freplace{\fml}{\pvar}{\flfp{\pvar}{\fml}}\fimply \axkey{\flfp{\pvar}{\fml}}
            }{}
            \cinferenceRule[mu_alpha|$\alpha$]{bound renaming axiom}
            {
                \axkey{\fxfp{\pvar}{\fml}}\fequiv\fxfp{\pvarb}{\freplace{\fml}{\pvar}{\pvarb}}\quad
            }{$\pvarb$ fresh, $\sigma\in \{\mu,\nu\}$}
        \end{calculus}
    \end{calculuscollection}
\end{center}
The rules of the proof calculus are:
\begin{center}
    \begin{calculuscollection}
        \begin{calculus}
            \cinferenceRule[mu_mp|MP]{Modus Ponens}
            {
                \linferenceRule[sequent]
                {
                    \fml
                    &\fml\fimply\fmlb
                }
                {\fmlb}
            }{}
        \end{calculus}
        \qquad
        \begin{calculus}
            \cinferenceRule[mu_mon|M${}_a$]{monotonicity rule}
            {
                \linferenceRule[sequent]
                {\fml\fimply\fmlb}
                {\fdia{\gatom}\fml\fimply\fdia{\gatom}\fmlb}
            }{}
        \end{calculus}
        \qquad
        \begin{calculus}
            \cinferenceRule[mu_mu_rule|FP${}_\mu$]{$mu$ minimality rule}
            {
                \linferenceRule[sequent]
                {\freplace{\fml}{\pvar}{\fmlb}\fimply\fmlb}
                {\axkey{\flfp{\pvar}{\fml}} \fimply \fmlb}
            }{}
        \end{calculus}
    \end{calculuscollection}
\end{center}

Write \(\provlmum{\fml}\) if there is a Hilbert style proof of \(\fml\) in the monotone \lmuname.
Note that this is a subset of Kozen's proof calculus for the \lmuname \cite{DBLP:journals/tcs/Kozen83}.
Adding the following two axioms yields the full Kozen calculus.

\begin{center}
    \begin{calculuscollection}
        \begin{calculus}
            \cinferenceRule[mu_box_true|{$[]\ftrue$}]{box true axiom}
            {
            \axkey{\fbox{\gatom} \ftrue}
            }{}
        \end{calculus}
        \qquad
        \begin{calculus}
            \cinferenceRule[mu_box_and|{$[]\fand$}]{box and axiom}
            {
            \fbox{\gatom}{\fml}\fand \fbox{\gatom}{\fmlb}\fimply\axkey{\fbox{\gatom}{(\fml\fand\fmlb)}}
            }{}
        \end{calculus}
    \end{calculuscollection}
\end{center}
Write \(\provlmug{\fml}\) if there is a Hilbert-style proof in this calculus
of the formula \(\fml\).
The reverse implication of \irref{mu_box_and} is derivable by \irref{mu_mon}.
\iflongversion
    Kripke's distribution axiom (\irref{mu_kripke}) is also derivable in this calculus.
    \begin{center}
        \cinferenceRule[mu_kripke|K]{Kripke axiom}
        {
            \fbox{\gatom}{(\fml\fimply\fmlb)} \fimply (\fbox{\gatom}{\fml}\fimply\fbox{\gatom}\fmlb)
        }{}
    \end{center}
    The derivation is straightforward by propositional rearrangement:
    \begin{sequentdeduction}[]
        \linfer[]
        {\linfer[mu_box_and]
            {\linfer[mu_mon]
                {\linfer[]
                    {*
                    }
                    {\lsequent{}{\fbox{\gatom}{\fmlb}\fimply\fbox{\gatom}\fmlb}}
                }
                {\lsequent{}{\fbox{\gatom}{((\fml\fimply\fmlb) \fand \fml)}\fimply\fbox{\gatom}\fmlb}}
            }
            {\lsequent{}{(\fbox{\gatom}{(\fml\fimply\fmlb)} \fand \fbox{\gatom}{\fml})\fimply\fbox{\gatom}\fmlb}}
        }
        {\lsequent{}{\fbox{\gatom}{(\fml\fimply\fmlb)} \fimply (\fbox{\gatom}{\fml}\fimply\fbox{\gatom}\fmlb)}}
    \end{sequentdeduction}
\else
    Kripke's distribution axiom ($K$) is also derivable in this calculus. (See \thefullversion.)
\fi
The two proof calculi for \lmushortname are complete:
\begin{proposition}[Walukiewicz \cite{DBLP:conf/lics/Walukiewicz95}] \label{mucompleteforkripke}
    Kozen's calculus is sound and complete with respect to \kstname{}s.
    That is
    \(\validflckripke{\fml}\)
    iff
    \(\provlmug{\fml}\)
    for \lmushortname formulas~\(\fml\).
\end{proposition}
\begin{proposition}[Enqvist, Seifan, Venema \cite{DBLP:journals/apal/EnqvistSV19}] \label{mucompletefornbhd}
    The monotone \lmuname is sound and complete with respect to \nstname{}s.
    That is
    \(\validflcnbhd{\fml}\)
    iff
    \(\provlmum{\fml}\)
    for \lmushortname formulas~\(\fml\)
\end{proposition}

\subsection{Proof Calculi for \GLname{}s}

\subsubsection{Parikh's Proof Calculus for \GLname}\label{secparikhcalc}
Parikh proposed a similar Hilbert-style proof calculus for \glname \cite{DBLP:conf/focs/Parikh83}. It consists of all propositional tautologies as axioms together with the axioms
\begin{center}
    \begin{calculuscollection}
        \begin{calculus}
            \cinferenceRule[gl_choice|$\gor$]{choice axiom}
            {
                \axkey{\fdia{\gam\gor\gamb}{\fml}}\fequiv
                \fdia{\gam}{\fml}\for\fdia{\gamb}{\fmlb}
            }{}
            \cinferenceRule[gl_compose|$\gcom$]{composition axiom}
            {
                \axkey{\fdia{\gam\gcom\gamb}{\fml}}\fequiv
                \fdia{\gam}{\fdia{\gamb}{\fml}}
            }{}
            \cinferenceRule[gl_star_axiom|$*$]{fixpoint game axiom}
            {
                \fml\for \fdia{\gam}{\fdia{\gstar{\gam}}{\fml}}\fimply
                \axkey{\fdia{\gstar{\gam}}{\fml}}
            }{}
        \end{calculus}
        \qquad\qquad
        \begin{calculus}
            \cinferenceRule[gl_dual|${}^d$]{dual axiom}
            {\axkey{\fdia{\gsnot{\gam}}{\fml}}\fequiv \fnot\fdia{\gam}{\fnot\fml}}
            {}
            \cinferenceRule[gl_test|?]{test axiom}
            {\axkey{\fdia{\gtest{\fml}}{\fmlb}} \fequiv \fml\fand\fmlb}
            {}
        \end{calculus}
    \end{calculuscollection}
\end{center}
and the following rules
\begin{center}
    \begin{calculuscollection}
        \begin{calculus}
            \cinferenceRule[gl_mp|MP${}_G$]{Modus Ponens}
            {
                \linferenceRule[sequent]
                {
                    \fml
                    &\fml\fimply\fmlb
                }
                {\fmlb}
            }{}
        \end{calculus}
        \quad\hfill
        \begin{calculus}

            \cinferenceRule[gl_mon|M${}_G$]{monotonicity rule}
            {
                \linferenceRule[sequent]
                {\fml\fimply\fmlb}
                {\fdia{\gam}{\fml}\fimply\fdia{\gam}{\fmlb}}
            }{}
        \end{calculus}
        \quad\hfill
        \begin{calculus}
            \cinferenceRule[gl_star_rule|FP${}_*$]{* minimality rule}
            {
                \linferenceRule[sequent]
                {\fmlc \for\fdia{\gam}{\fmlb}\fimply\fmlb}
                {\axkey{\fdia{\gstar{\gam}}{\fmlc}} \fimply \fmlb}
            }{}
        \end{calculus}
    \end{calculuscollection}
\end{center}
If a \glshortname formula is provable in this calculus, write \(\provglm{\fml}\).
If \(\fml\) is provable in the same calculus with the additional two axioms
\irref{mu_box_true} and \irref{mu_box_and}, write \(\provglg{\fml}\).

\subsubsection{A Proof Calculus for \GLmurlname}
The proof calculus for \glname can be extended to a proof calculus for \glmurlname
by adding the following axioms and rule:
\begin{center}
    \begin{calculus}
        \cinferenceRule[gl_alpha|BR]{bound renaming axiom}
        {
            \axkey{\fdia{\gxfp{\gvar}{\gam}}{\fml}}\fequiv\fdia{\gxfp{\pvarb}{\freplace{\gam}{\gvar}{\gvarb}}}{\fml}\quad
        }{$\gvarb$ fresh, $\sigma\in \{\glfps,\ggfps\}$}
        \cinferenceRule[gl_fp_axiom|$\usebox{\glfpsbox}$]{fixpoint game axiom}
        {
            \fdia{\glfpunroll{\gvar}{\gam}}{\fml}\fimply
            \axkey{\fdia{\glfp{\gvar}{\gam}}{\fml}}
        }{}
        \cinferenceRule[gl_fp_rule|\usebox{\FPglfpsbox}]{$mu$ minimality rule}
        {
            \linferenceRule[sequent]
            {\fdia{\freplace{\gam}{\gvar}{\gamb\gcom\gtest{\fmlb}\gcom\gdtest{\ffalse}}}{\fml}\fimply\fdia{\gamb}{\fmlb}}
            {\axkey{\fdia{\glfp{\gvar}{\gam}}{\fml}} \fimply \fdia{\gamb}{\fmlb}}\quad
        }{$\gam$ \rightlinear in $\gvar$}
    \end{calculus}
\end{center}
\newcommand{\refglfpaxiom}{\hyperref[ir:gl_fp_axiom]{$\glfps$}\xspace} 
\newcommand{\refglfprule}{\hyperref[ir:gl_fp_rule]{\textrm{FP}${}_\glfps$}\xspace} 

Note that the axiom \irref{gl_star_axiom} and the rule \irref{gl_star_rule}
do not need to be added to the calculus explicitly
as these are derivable from \refglfpaxiom and \refglfprule respectively.
The rule \refglfprule is a version of the least fixpoint rule for \rightlinear games.
As before for a \glmurlshortname formula \(\fml\)
write \(\provrlglm{\fml}\) if~\(\fml\) is provable in this calculus
and \(\provrlglg{\fml}\) if \(\fml\) is provable with the two additional axioms
\irref{mu_box_true} and \irref{mu_box_and}.

A more general proof calculus for full \glmuname is of interest as well.
Because there cannot be a recursive and complete such calculus,
only the calculus for \glmurlname is considered here.

\begin{remark} \label{remmonotonicity}
    In the three calculi restricting rule \irref{gl_mon} to range only
    over atomic games \(\gatom\), dual atomic games \(\gdual{\gatom}\) and variables
    \(\gvar\) does not weaken the proof calculi, since the more general rule is derivable.
    With the more general \irref{gl_mon} it is clear that if \(\provglm{\fml}\),
    then \(\provglm{\freplace{\fml}{\gvar}{\gam}}\) by substituting free occurrences of \(\gvar\) across the entire proof.
\end{remark}

\begin{theoremE}[\glmurlshortname Soundness][normal]\label{rlglsoundness}
    For any \glmurlshortname formula \(\fml\)
    \begin{enumerate}
        \item \(\validglmunbhd{\fml}\) if \(\provrlglm{\fml}\)
        \item \(\validglmukripke{\fml}\) if \(\provrlglg{\fml}\)
    \end{enumerate}
\end{theoremE}

\begin{proofE}
    The proof is a straightforward extension of the soundness proof
    for Parikh's \glname calculus. Soundness of the rule \refglfprule
    \iflongversion
        follows by \Cref{keylemmagl}.
    \else
        is shown in \thefullversion.
    \fi
\end{proofE}

\begin{propositionE}[][]\label{derivedglrlaxioms}
    The following are derivable in the \glmurlshortname calculus:
    \begin{center}
        \begin{calculuscollection}
            \begin{calculus}
                \cinferenceRule[gl_dtest|?`]{test axiom}
                {\axkey{\fdia{\gdtest{\fml}}{\fmlb}} \fequiv ({\fml}\fimply\fmlb)}
                {}
                \cinferenceRule[gl_dchoice|$\gand$]{choice axiom}
                {
                    \axkey{\fdia{\gam\gand\gamb}{\fml}}\fequiv
                    \fdia{\gam}{\fml}\fand\fdia{\gamb}{\fmlb}
                }{}
            \end{calculus}
        \end{calculuscollection}
        \\
        \begin{calculuscollection}
            \begin{calculus}
                \cinferenceRule[gl_fp_reverse_axiom|\usebox{\ggfpsbox}]{fixpoint game reverse axiom}
                {
                    \axkey{\fdia{\glfp{\gvar}{\gam}}{\fml}}
                    \fimply
                    \fdia{\glfpunroll{\gvar}{\gam}}{\fml}
                }{}
                \cinferenceRule[gl_dfp_rule|\usebox{\FPggfpsbox}]{$nu$ maximality rule}
                {
                    \linferenceRule[sequent]
                    {\fdia{\gamb}{\fmlb}
                        \fimply \fdia{\freplace{\gam}{\gvar}{\gamb\gcom\gtest{\fmlb}\gcom\gdtest{\ffalse}}}{\fmlc}}
                    {\fdia{\gamb}{\fmlb} \fimply \axkey{\fdia{\ggfp{\gvar}{\gam}}{\fmlc}}}\quad
                }{$\gam$ \rightlinear in $\gvar$}
                \cinferenceRule[ax_gl_rl|RL]{right linearity axiom}
                {
                    \axkey{\fdia{\gam}{\fml}}\fequiv
                    \fdia{\freplace{\gam}{\gvar}{\gvar\gcom\gtest{\fml}\gcom\gdtest{\ffalse}}}{\fml}\quad
                }{$\gam$ \rightlinear in $\gvar$}
            \end{calculus}
        \end{calculuscollection}
    \end{center}
\end{propositionE}
\newcommand{\refglrevfpaxiom}{\hyperref[ir:gl_fp_reverse_axiom]{$\cev{\glfps}$}\xspace} 
\newcommand{\refgdfprule}{\hyperref[ir:gl_dfp_rule]{\textrm{FP}${}_{\ggfps}$}\xspace} 

\begin{proofE}
    Axioms \irref{gl_dtest} and \irref{gl_dchoice} derive with \irref{gl_dual}, \irref{gl_test} and \irref{gl_choice}
    respectively. Similarly for rule \refgdfprule with \refglfprule.

    To derive axiom \irref{ax_gl_rl} we may assume by \Cref{provablenormalform}
    that \(\gam\) is in normal form.
    We prove the equivalence for all games \(\gam\)
    in normal form by induction on the definition of normal form \glmurlshortname
    games. Most cases are straightforward, the interesting cases are:

    \begin{caselist}
        \case{\(\gtest{\fmlb}\) and \(\gdtest{\fmlb}\)}
        By \rightlinear{}ity \(\fmlb\) does not mention \(\gvar\).

        \case{\(\gam\gcom\gamb\)} The variable \(\gvar\) does not appear in \(\gam\)
        by \rightlinear{}ity.
        The equivalence can be proved by the induction hypothesis and~\irref{gl_mon}.

        \case{\(\glfp{\gvar}{\gamb}\)} The forward direction can be reduced to an instance
        of axiom \refglfpaxiom with an instance of axiom \refglfprule.
        For the backward direction an instance of rule \refglfprule
        reduces to showing
        \[\provrlglm{\fdia{\freplace{\freplace{\gamb}{\gvar}{\gvar\gcom\gtest{\fml}\gcom\gdtest{\ffalse}}}{\gvarb}{\glfp{\gvarb}{\gamb}}}{\fml}\fimply\fdia{\glfp{\gvarb}{\gamb}}{\fml}}.\]
        By the induction hypothesis on \(\gamb\) this reduces to an instance of axiom
        \refglfpaxiom.
    \end{caselist}

    For axiom \refglrevfpaxiom we may assume by \Cref{provablenormalform}
    that \(\gam\) is in normal form.
    We first prove by induction that
    \[\provrlglm{\fdia{\freplace{\gamb}{\gvar}{\freplace{\gam}{\gvar}{\glfp{\gvar}{\gam}}}}{\fml} \fimply\fdia{\freplace{\gamb}{\gvar}{\glfp{\gvar}{\gam}}}{\fml}}.\]
    for any game \(\gamb\) which is \rightlinear in \(\gvar\),
    in normal form and does not mention \(\gdual{\gvar}\).
    The interesting cases are:

    \begin{caselist}
        \case{\(\gamb\synequiv\gvar\)} The equivalence holds by \refglfpaxiom.

        \case{\(\gamb\synequiv\gamc\gcom\gamd\)} The equivalence follows
        by the induction hypothesis for \(\gamd\) and \irref{gl_mon}.

        \case{\(\gamb\synequiv\glfp{\gvarb}{\gamc}\)} By the induction hypothesis
        applied to \(\gamc\) and uniformly substituting \(\gvarb\) in the proof note
        \[\fdia{\freplace{\freplace{\gamc}{\gvar}{\freplace{\gam}{\gvar}{\glfp{\gvar}{\gam}}}}{\gvarb}{\glfp{\gvarb}{\freplace{\gamc}{\gvar}{\glfp{\gvar}{\gam}}}}}{\fml} \fimply\fdia{\freplace{\freplace{\gamc}{\gvar}{\glfp{\gvar}{\gam}}}{\gvarb}{\glfp{\gvarb}{\freplace{\gamc}{\gvar}{\glfp{\gvar}{\gam}}}}}{\fml}.\]
        An instance of axiom \refglfpaxiom derives
        \[\fdia{\freplace{\freplace{\gamc}{\gvar}{\freplace{\gam}{\gvar}{\glfp{\gvar}{\gam}}}}{\gvarb}{\glfp{\gvarb}{\freplace{\gamc}{\gvar}{\glfp{\gvar}{\gam}}}}}{\fml} \fimply\fdia{\glfp{\gvarb}{\freplace{\gamc}{\gvar}{\glfp{\gvar}{\gam}}}}{\fml}.\]
        And consequently the implication follows by
        an application of rule \refglfprule.
    \end{caselist}

    \noindent Now letting \(\gamb\synequiv\gam\)
    with \irref{ax_gl_rl} yields provability of
    \[\fdia{\freplace{\gam}{\gvar}{\freplace{\gam}{\gvar}{\glfp{\gvar}{\gam}}\gcom\gtest{\fml}\gcom\gdtest{\ffalse}}}{\fml} \fimply\fdia{\glfpunroll{\gvar}{\gam}}{\fml}\]
    in the calculus.
    Axiom \refglrevfpaxiom follows immediately with rule~\refglfprule.
\end{proofE}

Axioms \irref{gl_dtest}, \irref{gl_dchoice} and rule \refgdfprule
are the dual versions to the Angelic axioms \irref{gl_test}, \irref{gl_choice} and
the Angelic rule \refglfprule.
Axiom \refglrevfpaxiom is the reverse version of \refglfpaxiom
and axiom \irref{ax_gl_rl} captures the \rightlinear{}ity of \(\gam\) in \(\gvar\) syntactically.

\subsection{Completeness for \GLmurlname}
\label{subscalcproof}

The translations between \glmurlname and the \lmuname show not only equiexpressiveness,
but also that the  proof calculi are equivalent.
This enables the transfer of completeness from the \lmuname to \glmurlname.

The translations between \glmurlname and the \lmuname have been proved sound semantically.
In order to use these to relate the proof calculi, the soundness of the translation needs
to be proved in the calculus itself. Since each calculus can only talk about formulas
in its respective language the relevant soundness here is that of \Cref{thereandbackvalid}.
This is proved by induction on a well-order on all formulas
defined in \appref{ranksection}.

\begin{lemmaE}[\provablecorrectnessrgl][]\label{provableequivalentthereandback}
    \begin{enumerate}
        \item \(\provrlglm{\fml\fequiv{\glmutoglmuf{\fml}}}\) for any
              \wellnamed \glmurlshortname formula \(\fml\)
              in normal form \label{provableequivalentgltomuandback}
        \item \(\provlmum{\fml\fequiv\flctoflc{{\fml}}}\) for any \wellnamed \lmushortname formula \(\fml\) \label{provableequivalentmutoglandback}
    \end{enumerate}
\end{lemmaE}

\begin{proofE}
    For \Iref{provableequivalentgltomuandback} we prove only the backward implication.
    The forward direction will later follow from \Cref{rlglcompletenbhd} and \Cref{thereandbackvalid} and
    unlike the backward implication is not
    required for the proof of \Cref{rlglcompletenbhd}.
    The backward implication of \Iref{provableequivalentgltomuandback} is proved
    for all \glmurlshortname formulas
    by induction on the \rankname of \(\fml\). (See \Cref{ranksection}.)
    If \(\fml\) is a an atomic proposition \(\patom\) then
    \(\provrlglm{\fdia{\gtest{\patom}\gcom\gdtest{\ffalse}}\fimply \patom}\) can be derived
    by axioms \irref{gl_compose}, \irref{gl_test} and \irref{gl_dtest}.
    Similarly for \(\fnot\patom\).
    This shows the equivalence for formulas of \rankname \(0\).
    Suppose inductively the equivalence is derived for all formulas of \rankname at most \(n\).
    If the formula is a conjunction \(\fml\for\fmlb\) or a disjunction \(\fml\fand\fmlb\) then the equivalence is obtained from the inductive hypothesis applied for the lower rank formulas \(\fml\) and \(\fmlb\) with an instance of axiom \irref{gl_choice} or \irref{gl_dchoice}.
    Suppose then that \(\fml\) is of the form \(\fdia{\gam}{\fmlb}\).
    Distinguish on the shape of \(\gam\):

    \begin{caselist}
        \case{\(\gatom\) and \(\gdual{\gatom}\)}
        The required implication
        \[\provrlglm{\fdia{\gatom}{\glmutoglmuf{\fmlb}}\fimply\fdia{\gatom}{\fmlb}}\]
        is derivable with an application of rule \irref{gl_mon} from the induction hypothesis
        \(\provrlglm{{\glmutoglmuf{{\fmlb}}}\fimply{\fmlb}}\)
        and similarly for \(\gdual{\gatom}\).

        \case{\(\gvar\) and \(\gdual{\gvar}\)} Similar to the case for \(\gatom\).

        \case{\(\gtest{\fml}\) and \(\gdtest{\fml}\)}
        Observe that \(\rlglmutoflcp{\fdia{\gtest{\fmlc}}{\fmlb}}=\rlglmutoflcp{\fmlc\fand\fmlb}\) and \(\provrlglm{\fdia{\gtest{\fmlc}}{\fmlb}\fequiv\fmlc\fand\fmlb}\).
        Hence by the induction hypothesis
        on the lower rank formula \(\fmlc\fand\fmlb\).
        \[\provrlglm{\glmutoglmuf{\fdia{\gtest{\fmlc}}{\fmlb}}\fimply \fmlc\fand\fmlb}.\]
        The required implication follows with axiom \irref{gl_test}.
        Analogously for \(\gdtest{\fmlc}\).

        \case{\(\gam\gor\gamb\) and \(\gam\gand\gamb\)}
        Note that
        \[\provrlglm{\rlglmutoflcp{\fdia{\gam\gor\gamb}{\fmlb}}\fequiv
                \rlglmutoflcp{\fdia{\gam}{\fmlb}}\for
                \rlglmutoflcp{\fdia{\gamb}{\fmlb}}}\]
        and hence the claim follows with \irref{gl_choice} from
        induction hypothesis on the lower \rankname formula
        \(\fdia{\gam}{\fmlb}\for\fdia{\gamb}{\fmlb}\).
        The case \(\gamb\gand\gamc\) is similar.

        \case{\(\gam\gcom\gamb\)}
        By definition of the translation
        \[\provrlglm{\rlglmutoflcp{\fdia{\gam\gcom\gamb}{\fmlb}}\fequiv
                \rlglmutoflcp{\fdia{\gam}{\fdia{\gamb}{\fmlb}}}}\]
        Hence by the induction hypothesis on the
        lower rank formula \(\fdia{\gam}{\fdia{\gamb}{\fmlb}}\)
        the implication is derivable with an application of axiom \irref{gl_compose}.

        \case{\(\glfp{\pvar}{\gam}\)}
        Note that by the inductive hypothesis applied to the lower rank formula
        \(\fdia{\gam}{\fmlb}\)
        \[\provrlglm{\fdia{\freplace{\glmutoglmug{\gam}}{\fvar}{\glmutoglmug{\fmlb}}}{\ffalse} \fimply \fdia{\gam}{\fmlb}}\]
        By substituting \(\pvar\) by \(\glfp{\pvar}{\gam}
        \gcom\gtest{\fmlb}\gcom\gdtest{\ffalse}\)
        \[\provrlglm{\fdia{\freplace{\freplace{\glmutoglmug{\gam}}{\fvar}{\glmutoglmug{\fmlb}}}{\pvar}{\glfp{\pvar}{\gam}
                        \gcom\gtest{\fmlb}\gcom\gdtest{\ffalse}}}{\ffalse} \fimply \fdia{\freplace{\gam}{\pvar}{\glfp{\pvar}{\gam}\gcom\gtest{\fmlb}
                        \gcom\gdtest{\ffalse}}}{\fmlb}}\]
        and an application of rule \irref{ax_gl_rl} and axiom \refglfpaxiom yields
        \[\provrlglm{\fdia{\freplace{\freplace{\glmutoglmug{\gam}}{\fvar}{\glmutoglmuf{\fmlb}}}{\pvar}{\glfp{\pvar}{\gam}\gcom\gtest{\fmlb}\gcom\gdtest{\ffalse}}}{\ffalse}
                \fimply\fdia{\glfp{\pvar}{\gam}}{\fmlb}}\]
        The desired implication is derived with an application of rule \refglfprule,
        since the translation of \(\fdia{\glfp{\pvar}{\gam}}{\fmlb}\) is
        \[\glmutoglmuf{(\fdia{\glfp{\pvar}{\gam}}{\fmlb})}
            \synequiv \fdia{\glfp{\pvar}{\freplace{\glmutoglmug{\gam}}{\fvar}{\glmutoglmug{\fmlb}}} }{\ffalse}.\]

        \case{\(\ggfp{\pvar}{\gam}\)}
        By definition of the translations
        \[\provrlglm{
                \glmutoglmuf{(\fdia{\ggfp{\pvar}{\gam}}\fmlb)}
                \fimply
                \fdia{\freplace{\freplace{\glmutoglmug{\gam}}{\fvar}{\glmutoglmug{\fmlb}}}{\gvar}{
                        \glmutoglmuf{(\fdia{\ggfp{\pvar}{\gam}}\fmlb)}}}}{\ffalse}\]
        is a consequence of axiom \refglrevfpaxiom
        and consequently
        \[\provrlglm{
                \glmutoglmuf{(\fdia{\ggfp{\pvar}{\gam}}\fmlb)}
                \fimply
                \fdia{\freplace{\freplace{\glmutoglmug{\gam}}{\fvar}{\glmutoglmug{\fmlb}}}{\gvar}{
                        \glmutoglmug{(\fdia{\ggfp{\pvar}{\gam}}\fmlb)}\gcom\gtest{\ffalse}\gcom\gdtest{\ffalse}}}}{\ffalse}\]
        by axiom \irref{ax_gl_rl}.
        By the induction hypothesis applied to \(\fdia{\gam}{\fmlb}\)
        and substituting \(\gvar\) by
        \( \glmutoglmug{(\fdia{\ggfp{\pvar}{\gam}}\fmlb)}\gcom\gtest{\ffalse}\gcom\gdtest{\ffalse}\) over the proof (where \(\gvar\) does not appear in \(\fmlb\) by \wellnamed{}ness)
        this implies
        \[\provrlglm{
                \glmutoglmuf{(\fdia{\ggfp{\pvar}{\gam}}\fmlb)}
                \fimply
                \fdia{\freplace{\gam}{\gvar}{
                        \glmutoglmug{(\fdia{\ggfp{\pvar}{\gam}
                            }\fmlb)}\gcom\gtest{\ffalse}\gcom\gdtest{\ffalse}}}}{\fmlb}\]
        Hence by rule \refgdfprule the desired implication is derivable.
    \end{caselist}

    The equivalence in \Iref{provableequivalentmutoglandback} follows immediately from
    \Cref{mucompletefornbhd} and \Cref{thereandbackvalid}.
\end{proofE}

\begin{lemmaE}[][all end]\label{sharpandnegationprovably}
    \(\provrlglm{\flctoglmuf{(\fsnot{\fml})}\fequiv\fnot{(\flctoglmuf{\fml})}}\) for a closed \lmushortname formula \(\fml\).
\end{lemmaE}

\begin{proofE}
    For a \glmushortname formula \(\fml\) and an \glmushortname game \(\gam\) let \(\tilde{\fml}\) and \(\gsnotb{\gam}\) be defined like \(\fsnot{\fml}\) and \(\gsnot{\gam}\) in \Cref{glrulessyntacticneg}
    except that \(\gsnotb{\gvar}=\gvar\) unlike \(\gsnot{\gvar}=\gdual{\gvar}\).
    Then note that by the definition of the syntactic dual for recursive games
    \(\fsnot{\fml}\synequiv\tilde{\fml}\) for all closed \glmushortname formulas
    and \(\gsnot{\gam}\synequiv\gsnotb{\gam}\) for all closed \glmushortname games.
    By induction on the formula~\(\fml\) that
    \[\provrlglm{\fdia{\flctoglmu{\fsnot{\fml}}}{\ffalse}\fequiv\fdia{\gsnotb{\flctoglmu{\fml}}}{\ftrue}}.\]

    \begin{caselist}
        \case{\(\patom\) and \(\fnot\patom\)}
        The required equivalence
        \[\provrlglm{\fdia{\gtest{\fnot{\patom}\gcom\gdtest{\ffalse}}}{\ffalse}\fequiv\fnot\fdia{\gtest{\patom}\gcom\gdtest{\ffalse}}{\ftrue}}\]
        is derived using \irref{gl_compose},\irref{gl_test}.
        Similarly for \(\fnot\patom\).

        \case{\(\pvar\)} Immediate by the definition of \(\tilde{\pvar}\).

        \case{\(\fml\for\fmlb\) and \(\fml\fand\fmlb\)} The equivalence can be derived
        immediately from the induction hypothesis with axiom \irref{gl_dchoice}.
        The case of formulas of the form \(\fml\fand\fmlb\) is similar.

        \case{\(\fdia{\gatom}{\fml}\) and \(\fbox{\gatom}{\fml}\)}
        The equivalences follow from the induction hypothesis with axiom \irref{gl_compose}
        and an application of rule~\irref{gl_mon}.

        \case{\(\flfp{\pvar}{\fmlb}\) and \(\ggfp{\gvar}{\fml}\)}
        The equivalence that must be proved is
        \[\provrlglm{\fdia{\ggfp{\pvar}{\flctoglmu{\fsnot{\fmlb}}}}{{\ffalse}}\fequiv
                \fdia{\ggfp{\pvar}{{\gsnotb{\flctoglmu{\fmlb}}}}}{\ftrue}
            }.\]
        For the forward implication
        \(\provrlglm{
            \fdia{\ggfp{\pvar}{\flctoglmu{\fsnot{\fmlb}}}}{{\ffalse}}
            \fimply
            \fdia{\freplace{\flctoglmu{\fsnot{\fmlb}}}{\gvar}{\ggfp{\pvar}{\flctoglmu{\fsnot{\fmlb}}}}}{{\ffalse}}
        }\)
        is an instance of axiom \refglrevfpaxiom.
        An application of \irref{ax_gl_rl} yields
        \(\provrlglm{
            \fdia{\ggfp{\pvar}{\flctoglmu{\fsnot{\fmlb}}}}{{\ffalse}}
            \fimply
            \fdia{\freplace{\flctoglmu{\fsnot{\fmlb}}}{\gvar}{\ggfp{\pvar}{\flctoglmu{\fsnot{\fmlb}}}\gcom \gtest{\ffalse}\gdtest{\ffalse}}}{{\ffalse}}
        }\).
        And hence by induction hypothesis
        \[\provrlglm{
                \fdia{\ggfp{\pvar}{\flctoglmu{\fsnot{\fmlb}}}}{{\ffalse}}
                \fimply
                \fdia{\freplace{\gsnotb{{\flctoglmu{\fmlb}}}}{\gvar}{\ggfp{\pvar}{\flctoglmu{\fsnot{\fmlb}}}
                        \gcom \gtest{\ffalse}\gdtest{\ffalse}
                    }}{{\ftrue}}
            }\]
        The desired implication follows with an application of
        \refgdfprule.
        The reverse implication is analogous
        and similarly for \(\ggfp{\gvar}{\fml}\).
    \end{caselist}
    \qedhere
\end{proofE}

The key is that proofs in the \lmuname can be turned into \glmurlname proofs.
Since the \lmuname is complete and has the same expressive power as \glmurlname
it follows that any formula \(\fml\) is provable up to translation.

\begin{lemmaE}[][all end]\label{prooftransformationmutogl}
    For closed \lmushortname formulas \(\fml\), if \(\provlmum{\fml}\) then
    \(\provrlglm{\flctoglmuf{\fml}}\).
\end{lemmaE}

\begin{proofE}
    \emph{Observation:}
    For every formula and every variable \(\pvar\)
    fix a fresh atomic proposition \(\patom_\pvar\)
    and for every \lmushortname formula let \(\tilde{\fml}\) be the formula obtained from \(\fml\) by replacing every
    free \(\pvar\) in \(\fml\) by \(\patom_\pvar\).
    Then observe that \(\tilde{\fml}\) is closed and
    \(\provlmum{\tilde{\fml}}\) iff \(\provlmum{\fml}\).

    First note that
    \[\provrlglm{\flctoglmufp{\fml\for\fmlb} \fequiv\flctoglmuf{\fml}\for\flctoglmuf{\fmlb}}
        \qquad
        \provrlglm{\flctoglmufp{\fml\fand\fmlb} \fequiv\flctoglmuf{\fml}\fand\flctoglmuf{\fmlb}}
    \]
    Indeed these are instances of axioms \irref{gl_choice} and \irref{gl_dchoice}
    respectively.
    By \Cref{sharpandnegationprovably}
    \(\provrlglm{\fnot{(\flctoglmuf{\fml})}\fequiv\flctoglmufp{\fsnot{\fml}}}\)
    for closed formulas \(\fml\).
    Thus if \(\fml\) is a closed propositional tautology
    in the \lmuname then the proof for \(\provrlglm{\flctoglmuf{\fml}}\)
    is provably reducible to a propositional tautology
    and therefore provable in \glmurlshortname.

    Note that for the same reason
    \[
        \cinferenceRule[rule_distribute|${\flctoglmufs}$]{sharp distributivity rule}
        {
            \flctoglmufp{\fml\fimply\fmlb}\fequiv(\flctoglmuf{\fml}\fimply\flctoglmuf{\fmlb})
        }{}
    \]
    is a derived axiom in \glmurlshortname if \(\fml\) and \(\fmlb\)
    are closed.

    From axiom \irref{rule_distribute} it follows that the \(\flctoglmufs\)-translation
    of any closed instance of \irref{mu_fp} is provably equivalent to an instance of
    \refglfpaxiom.
    Hence the translations of all closed instances of axioms of the monotone \lmuname
    are provable in \glmurlname.
    By induction on the length of a proof witnessing \(\provlmum{\fml}\)
    for a \emph{closed} formula \(\fml\), it is shown that \(\provrlglm{\fml}\).
    Distinguish based on the last step of the proof.

    \begin{caselist}
        \case{\irref{mu_mp}}
        Suppose the last step of the proof of \(\provlmum{\fml}\) is an instance of \irref{mu_mp} with the formula \(\fmlb\fimply\fml\),
        then \(\fmlb\) and \(\fmlb\fimply\fml\)
        are closed and hence by the induction hypothesis
        and the above observation
        \(\provrlglm{\flctoglmuf{\tilde{\fmlb}}}\)
        and \(\provrlglm{\flctoglmufp{\tilde{\fmlb}\fimply\fml}}\).
        Then the derivation
        \begin{sequentdeduction}[array]
            \linfer[gl_mon]
            {\lsequent{}{\flctoglmuf{\tilde{\fmlb}}} ! \linfer[rule_distribute+gl_mp]
                {\lsequent{}{\flctoglmufp{\tilde{\fmlb}\fimply\fml}}}
                {\lsequent{}{\flctoglmuf{\tilde{\fmlb}}\fimply\flctoglmuf{\fml}}}
            }
            {\lsequent{}{\flctoglmuf{\fml}}}
        \end{sequentdeduction}
        shows that \(\provrlglm{\flctoglmuf{\fml}}\).

        \case{\irref{mu_mon}}
        If the last step of a proof of
        \(\provlmum{\fdia{\gatom}{\fml}\fimply\fdia{\gatom}{\fmlb}}\) is a closed instance of \irref{mu_mon}
        then \(\fmlb\fimply\fml\) is closed and the induction hypothesis implies
        \(\provrlglm{\flctoglmuf{(\fmlb\fimply\fml)}}\).
        Via the derivation
        \begin{sequentdeduction}
            \linfer[rule_distribute+gl_mp]
            {
                \linfer[gl_compose+gl_mp]
                {
                    \linfer[gl_mon]
                    {
                        \linfer[rule_distribute+gl_mp]
                        {\lsequent{}{\flctoglmufp{\fml\fimply\fmlb}}}
                        {\lsequent{}{\flctoglmuf{\fml}\fimply\flctoglmuf{\fmlb}}}}
                    { \lsequent{}{\fdia{\gatom}{\flctoglmuf{\fml}}\fimply\fdia{\gatom}{\flctoglmuf{\fmlb}}}}
                }
                {\lsequent{}{\flctoglmufp{\fdia{\gatom}{\fml}}\fimply\flctoglmufp{\fdia{\gatom}{\fmlb}}}}
            }
            {\lsequent{}{\flctoglmufp{\fdia{\gatom}{\fml}\fimply\fdia{\gatom}{\fmlb}}}}
        \end{sequentdeduction}
        obtain then  \(\provrlglm{\flctoglmuf{(\fdia{\gatom}{\fml}\fimply\fdia{\gatom}{\fmlb})}}\).

        \case{\irref{mu_mu_rule}}
        If the last step of the proof
        of a formula \(\flfp{\pvar}{\fml} \fimply \fmlb\)
        is a closed instance of \irref{mu_mu_rule}
        then \(\freplace{\fml}{\pvar}{\fmlb}\fimply\fmlb\) is closed and
        by the induction hypothesis implies
        \(\provrlglm{\flctoglmuf{(\freplace{\fml}{\pvar}{\fmlb}\fimply\fmlb)}}\).
        \begin{sequentdeduction}
            \linfer[rule_distribute+gl_mp]
            {
                \linfer[gl_fp_rule]
                {
                    \linfer[ax_gl_rl+gl_mp]
                    {
                        \linfer[rule_distribute+gl_mp]
                        {\lsequent{}{\flctoglmuf{(\freplace{\fml}{\pvar}{\fmlb}\fimply\fmlb)}}}
                        {\lsequent{}{\fdia{\freplace{\flctoglmu{\fml}}{\pvar}{\flctoglmu{\fmlb}}}{\ffalse}\fimply\fdia{\flctoglmu{\fmlb}}{\ffalse}}}}
                    { \lsequent{}{\fdia{\freplace{\flctoglmu{\fml}}{\pvar}{\flctoglmu{\fmlb}\gcom\gtest{\ffalse}\gcom\gdtest{\ffalse}}}{\ffalse}\fimply\fdia{\flctoglmu{\fmlb}}{\ffalse}}}
                }
                {\lsequent{}{\flctoglmufp{\flfp{\pvar}{\fml}} \fimply \flctoglmufp{\fmlb}}}
            }
            {\lsequent{}{\flctoglmufp{\flfp{\pvar}{\fml} \fimply \fmlb}}}
        \end{sequentdeduction}
        This shows \(\provrlglm{\flctoglmufp{\flfp{\pvar}{\fml} \fimply \fmlb}}\).
    \end{caselist}

    The lemma follows by the initial observation.
\end{proofE}

\begin{theoremE}[Equipotency][]\label{equipotency}
    \Glmurlname and the \lmuname prove the same formulas (modulo translation).
    \begin{enumerate}
        \item \(\provlmum{\fml}\) iff \(\provrlglm{\flctoglmuf{\fml}}\) for closed \wellnamed \lmushortname formulas \(\fml\)\label{equipotmutorlgl}
        \item \(\provrlglm{\fml}\) iff \(\provlmum{\rlglmutoflc{\fml}}\) for closed \wellnamed \glmurlshortname formulas~\(\fml\) in normal form \label{equipotrlgltomu}
    \end{enumerate}
\end{theoremE}

\begin{proofE}
    The forward direction of \Iref{equipotmutorlgl}
    is proved in \apprefexp{prooftransformationmutogl}{appproofs}.
    For the forward direction of \Iref{equipotrlgltomu}
    \Cref{rlglsoundness} and \Cref{translationglmutoflc,translationoftranslation}
    imply \(\validflcnbhd{\rlglmutoflc{\fml}}\). Hence \(\provlmum{\rlglmutoflc{\fml}}\)
    follows from \Cref{mucompletefornbhd}.

    The backward implication of \Iref{equipotmutorlgl} is immediate by \Cref{mucompletefornbhd,correctsharp}.
    For the backward direction of \Iref{equipotrlgltomu} assume
    \(\provlmum{\rlglmutoflc{\fml}}\).
    By the forward direction of \Iref{equipotmutorlgl} also \(\provrlglm{{\glmutoglmuf{\fml}}}\).
    Then by \Cref{provableequivalentthereandback} conclude \(\provrlglm{\fml}\).
\end{proofE}

As the translations are semantically correct and preserve provability, completeness of \glmurlshortname follows with \Cref{mucompletefornbhd,mucompletefornbhd}.

\begin{theoremE}[\glmurlshortname Completeness][normal]\label{rlglcompletenbhd}\label{rlglcompletekripke}
    For any \glmurlshortname formula~\(\fml\)
    \begin{enumerate}
        \item
              \(\validglmunbhd{\fml}\)\label{rlglcompletenbhdi}
              iff
              \(\provrlglm{\fml}\)
        \item
              \(\validglmukripke{\fml}\)\label{rlglcompletekripkei}
              iff
              \(\provrlglg{\fml}\)
    \end{enumerate}
\end{theoremE}

\begin{proofE}
    \Iref{rlglcompletenbhdi}
    The $\Leftarrow$ implication is by \Cref{rlglsoundness}.
    For the $\Rightarrow$ implication,
    consider first the case that \(\fml\) is closed.
    By \irref{gl_alpha} assume without loss of generality that \(\fml\) is \wellnamed.
    By \Cref{normalform} and \Cref{provablenormalform}
    assume that \(\fml\) is in normal form.
    The following chain of equivalences proves the first
    claim of the theorem
    \begin{align*}
              & \validglmunbhd{\fml}
        \\
        \iffc & \validflcnbhd{\rlglmutoflc{\fml}} & \text{(\Cref{translationglmutoflc,translationoftranslation})}
        \\
        \iffc & \provlmum{\rlglmutoflc{\fml}}     & \text{(\Cref{mucompletefornbhd})}
        \\
        \iffc & \provrlglm{\fml}                  & \text{(\Cref{equipotency})}
    \end{align*}
    For non-closed \(\fml\), for each free variable \(\gvar\)
    in \(\fml\) fix fresh \atgamename{}s~\(\gatomb_\gvar\)
    and let \(\tilde{\fml}\) be the closed formula obtained
    from \(\fml\) by replacing all \(\gvar\) by \(\gatomb_\gvar\).
    Then \(\validglmunbhd{\fml}\) iff \(\validglmunbhd{\tilde{\fml}}\) and \(\provrlglm{\fml}\)
    iff \(\provrlglm{\tilde{\fml}}\), so the equivalence holds,
    since in the proof calculus free variables and \atgamename{}s
    are interchangeable.

    \Iref{rlglcompletekripkei}
    \iflongversion
        A slight modification of the proof of \eqref{equipotmutorlgl}
        in \Cref{equipotency} (\apprefexp{prooftransformationmutogl}{appproofs})
        shows that \(\provrlglg{\flctoglmuf{\fmlb}}\)
        for any \lmushortname formula \(\fmlb\) with \(\provlmug{\fmlb}\).
        All that needs to be added there is that the \(\flctoglmufs\)-translation
        of the axioms \irref{mu_box_and}, \irref{mu_kripke} and \irref{mu_box_true}
        are again instances of the corresponding axioms
        in \(\provrlglgs\).
        The same argument as for item \Iref{rlglcompletenbhdi} then applies
        to show completeness of \(\provrlgls\).
    \else
        Analogous to \Iref{rlglcompletenbhdi}. See \thefullversion for details.
    \fi
\end{proofE}

\subsection{Completeness of \GLSname}
\label{completenessforsgl}

\begin{figure}
    \begin{calculuscollection}
        {\renewcommand{\linferenceRuleNameSeparation}{\;\;}
            \begin{calculus}
                \cinferenceRule[sgl_sima|$\sim$]{angel wins axiom}
                {
                    \axkey{\fdia{\gangelswin{\gatom}\gcom\gatom}{\fml}}
                    \fequiv
                    \fdia{\gangelswin{\gatom}}{\fml}
                }{}
                \cinferenceRule[sgl_dsima|$\dsabsym$]{angel wins axiom}
                {
                    \axkey{\fnot\fdia{\gdemonswin{\gatom}\gcom\gatom}{\fml}}
                }{}
            \end{calculus}\qquad\qquad\;
            \begin{calculus}
                \cinferenceRule[sgl_simasima|$\doubsabsym$]{angel wins axiom}
                {
                    \axkey{\fdia{\gangelswin{\gatom}\gcom\gangelswin{\gatom}}{\fml}}
                    \fequiv
                    \fdia{\gangelswin{\gatom}}{\fml}
                }{}
                \cinferenceRule[sgl_simasimd|$\ddoubsabsym$]{angel wins axiom}
                {
                    \axkey{\fdia{\gangelswin{\gatom}\gcom\gdemonswin{\gatom}}{\fml}}
                    \fequiv
                    \fdia{\gdemonswin{\gatom}}{\fml}
                }{}
            \end{calculus}
            \\
            \begin{calculus}
                \cinferenceRule[sgl_pushin|$\sim\kern-7.5pt\circ$]{angel wins axiom}
                {
                    \axkey{\fdia{\gangelswin{\gatom}}{\freplacea{\gam}{\gvar_\repdot}{\gam_\repdot\gcom\gdtest{\ffalse}}}}
                    \fequiv
                    \freplacea{\gam}{\gvar_\repdot}{\gangelswin{\gatom}\gcom\gam_\repdot\gcom\gdtest{\ffalse}}
                    \quad
                }{if $\localcont$ is $\gatom$-free}
                \cinferenceRule[gl_f_removeunused|$\simeq$]{angel wins axiom}
                {
                    \axkey{\freplacea{\gam}{\gvar}{\gangelswin{\gatom}}}
                    \fequiv
                    \freplacea{\gam}{\gvar}{\gtest{\ftrue}}\quad
                }{if $\gatom,\gdual{\gatom}\notin\localcont$}
                \cinferenceRule[ax_sabm_guardedangel|${\cong}$]{angel wins axiom}
                {
                    \fdia{\gangelswin{\gatom}}{(
                        \axkey{\freplacea{\gam}{\gvar}{\gdemonswin{\gatom}}}\fequiv
                        \freplacea{\gam}{\gvar}{\gdemonswin{\gatom}\gcom{\gesabotage{\gatomb}}}
                        )}
                    \quad
                }{if $\gangelswin{\gatom}$ guards $\gatomb$ in $\localcont$ }
                \cinferenceRule[ax_sabm_remembersimple|$\parallel$]{angel wins axiom}
                {
                    \fdia{\gangelswin{\gatom}}{
                        (\axkey{\freplacea{\gam}{\gvar}{\gatom}}
                        \fequiv\freplacea{\gam}{\gvar}{\gatom\gcom\gesabotage{\gatomb}}
                        )}
                    \quad
                }{if $\gatom$ remembers $\gesabotage{\gatomb}$ in $\localcont$}
                \cinferenceRule[ax_sabm_branch|$\Upsilon$]{angel wins axiom}
                {
                    \fdia{\sabset{i}}{(\axkey{\freplacea{\gam}{\gvar}{\gamb}}
                        \fequiv
                        \freplacea{\gam}{\gvar}{\bigcup_{1\leq j \leq n}\sabread{j}\gcom\sabset{j}\gcom \gamb})}
                    \quad
                }{$\gesabotage{\gatom_i}$ only in $\gatomvec$}
            \end{calculus}
        }
    \end{calculuscollection}\setlength{\abovecaptionskip}{0pt}
    \caption{The Axioms for \GLSname}
    \Description{Axiomatization of Sabotage.}\label{axiomsforgls}
    \vspace*{-0.3cm}
\end{figure}

The complete proof calculus for \glshortname can be extended to a complete proof calculus for \glsshortname
by adding axioms for the \sabactionname{}s.
To simplify the formulation of these axioms some notation is introduced.
The axioms will affect subformulas that occur potentially deep inside a formula.
A \emph{\localcontname} \(\localcont(\repdot_1,\ldots,\repdot_n)\) is a \glsshortname
formula \(\fml\) with distinguished \atgamename{}s \(\repdot_i\).
The \glsshortname formula \(\freplacea{\localcont}{\repdot}{\gam_\repdot}\) is obtained from \(\fml\) by
replacing every \(\repdot_i\) by \(\gam_{i}\),
and the subscript is dropped if there is only one \(\repdot_i\).
A \localcontname \(\localcont\) is said to be \emph{\(\gatom\)-free} if
it does not mention \(\gatom\), \(\gdual{\gatom}\), \(\gangelswin{\gatom}\) or \(\gdemonswin{\gatom}\).

The sabotage axioms for \glsshortname are summarized in~\Cref{axiomsforgls}.
Axioms~\irref{sgl_sima}, \irref{sgl_dsima}, \irref{sgl_simasima} and \irref{sgl_simasimd} syntactically capture the immediate effect
of a \sabactionname and axiom \irref{sgl_pushin} allows reasoning about effects deep within a formula.
Axioms~\irref{gl_f_removeunused} and~\irref{ax_sabm_guardedangel} allow the uniform removal of \sabactionname{}s which do not have any effect.
In \irref{gl_f_removeunused} the \atgamename{}s \(\gatom\), \(\gdual{\gatom}\) are never played, so sabotaging \(\gatom\) does not change anything.
In~\irref{ax_sabm_guardedangel} the \sabactionname \(\gangelswin{\gatom}\)
is ineffective because \(\gatomb\) is guarded.
An atomic game \(\gatomb\) is said to be \emph{guarded by \(\gangelswin{\gatom}\) in \(\localcont\)}
if \(\gatomb\), \(\gdual{\gatomb}\) and \(\gangelswin{\gatom}\) appear only in the form \(\gatom\gcom\gatomb\), \(\gatom\gcom\gdual{\gatomb}\),
and \(\gatom\gcom\gangelswin{\gatom}\) respectively.
Guardedness ensures that \(\gatomb\) can never be played after \(\gdemonswin{\gatom}\) has been played.
The \(\gesabotage{\gatom}\) in the axiom stands for either \(\gangelswin{\gatom}\) or \(\gdemonswin{\gatom}\) everywhere.
A close syntactic relationship between the value of \(\gatomvec\) and a \sabactionname
persists through a play and this is captured by axiom~\irref{ax_sabm_remembersimple}, where
\emph{$\gatom$ remembers $\gesabotage{\gatomb}$ in $\localcont$} if
$\gangelswin{\gatom}$ appears in \(\localcont\) only as \(\gangelswin{\gatom}\gcom\gesabotage{\gatomb}\).
This condition ensures that whenever Angel can play \(\gatom\), this is because she has previously sabotaged~\(\gatom\)
and at the same time \(\gatomb\) was sabotaged.
This sabotage remains in effect, so that the additional sabotage \(\gesabotage{\gatomb}\) has no effect on the play.
In axiom \irref{ax_sabm_branch} it is assumed that the games \(\gangelswin{\gatom_i}\), \(\gdemonswin{\gatom_i}\)
appear only in the memory games \(\sabset{j}\) in \(\localcont\) and \(\gamb\).
Axiom \irref{ax_sabm_branch} is sound, since at any stage during the play of \(\localcont\)
there will be a value in the range \(1,\ldots, n\) associated to the sabotage memory \(\gatomvec = \gatomvece{n}\).
The value can be determined by branching over all possible values
and re-assigning the determined value afterwards is sound as it has no effect,
but is useful in inductive proofs.

The proof calculus for \glsname is Parikh's original proof
calculus for \glname (\Cref{secparikhcalc})
together with the additional axioms from \Cref{axiomsforgls}.
If there is a Hilbert-style proof of \(\fml\) in this calculus consisting only
of \glsshortname formulas, write \(\provgls{\fml}\).
The proof calculus for \(\provglss\) can be modified to \kstname{}s, by
adding the two axioms \irref{mu_box_true} and \irref{mu_box_and}.
Write \(\provglsg{\fml}\) if there is a proof of \(\fml\) in this extension.

\begin{theoremE}[$\glsshortname$ Soundness]
    \label{sglsoundness}
    For any \glsshortname formula \(\fml\)
    \begin{enumerate}
        \item \(\validglruleskripke{\fml}\) if \(\provglsg{}{\fml}\)
        \item \(\validglrulesnbhd{\fml}\) if \(\provgls{}{\fml}\)
    \end{enumerate}
\end{theoremE}

\iflongversion
    See \Cref{appendixsabotageaxioms} for a proof.
\fi

The equivalence of a formula with its normal form (\Cref{normalform}) can be proved syntactically.

\begin{lemmaE}[Provable Normal Form][]\label{provablenormalformsgl}\label{provablenormalform}\label{remnormalform}
    Any formula and any game of \glmushortname or \glsshortname is \emph{provably}
    equivalent to its normal form.
\end{lemmaE}

\begin{proofE}
    As in \Cref{normalform} by induction on the \glmurlshortname
    formula \(\fml\) and \glmurlshortname game \(\gam\)
    prove \(\provrlglm{\fnot\fml\fequiv{\fsnot{\fml}}}\)
    and \(\provglm{\fdia{\gdual{\gam}}{\fmlb}\fequiv\fdia{{\gsnot{\gam}}}{\fmlb}}\)
    for all formulas \(\fmlb\).
    These are straightforward to prove by the definition of syntactic negation and dual.
    Similarly for \glsname.
\end{proofE}

\subsection{Proof Transformations}

This section shows that the translation respects the proof calculus.
Combined with the semantic correctness of the translation this enables the transfer of completeness
from \glmurlshortname to \glsshortname.
The key fact needed about the translation is that the sabotage paraphrasing
of recursive games \emph{provably} behaves the same as the extremal fixpoint it denotes.
As a consequence, \glmurlshortname proofs can be translated to \glsshortname.

\begin{lemmaE}[][all end]\label{pretdelta}
    Let \(\gam\) be a \wellnamed \glmurlshortname game in normal form which is \rightlinear in \(\gvar\)
    and \(\fml\) and \(\gamb\) be a \glsshortname formula and game respectively.
    Assume that \(\gamb\) is \(\gmod{\gvarb}\)-free for all \(\gvarb\) in \(\gam\).
    Then
    \[\provgls{\fdia{\freplace{\rlglmutoglrules{\gam}}{\gvar}{\gamb}}{\fml}
            \fequiv
            \fdia{\freplace{\rlglmutoglrules{\gam}}{\gvar}{\gamb\gcom\gtest{\fml}\gcom\gdtest{\ffalse}}}{\fml}}\]
\end{lemmaE}

\begin{proofE}
    This is proved by induction on the normal form of \(\gam\).
    Most cases are straightforward.

    \begin{caselist}
        \case{\(\gtest{\fmlb}\) and \(\gdtest{\gam}\)}
        By definition \(\fmlb\) does not contain free variables.

        \case{\(\gam\gcom\gamb\)} By induction hypothesis and because \(\gam\) does not mention \(\gvar\)
        by \rightlinear{}ity.

        \case{\(\glfp{\gvarb}{\gam}\) and \(\ggfp{\gvarb}{\gam}\)}
        Let
        \begin{align*}
             & \fmlc \synequiv
            \fdia{\gstarp{\sabreadfalsenv{\gmod{\gvarb}}\gcom\sabsettruenv{\gmod{\gvarb}}\gcom
                    \freplace{\freplace{\rlglmutoglrules{\gam}}{\gvar}{\gamb}}{\gvarb}{\sabsetfalsenv{\gmod{\gvarb}}}
                }\gcom\sabreadtruenv{\gmod{\gvarb}}}\fml
            \\
             & \fmld \synequiv
            \fdia{\gstarp{\sabreadfalsenv{\gmod{\gvarb}}\gcom\sabsettruenv{\gmod{\gvarb}}\gcom
                    \freplace{\freplace{\rlglmutoglrules{\gam}}{\gvar}{\gamb\gcom\gtest{\fml}\gcom\gdtest{\ffalse}}}{\gvarb}{\sabsetfalsenv{\gmod{\gvarb}}}
                }\gcom\sabreadtruenv{\gmod{\gvarb}}}\fml
        \end{align*}
        and note that by \irref{gl_mon} it suffices to prove \(\provgls{\fmlc\fequiv\fmld}\).
        Observe that
        \begin{equation}
            \provgls{\fdia{\sabsettruenv{\gmod{\gvarb}}}{(\fmlc\fequiv\fml)}}
            \quad\text{and}\quad
            \provgls{\fdia{\sabsettruenv{\gmod{\gvarb}}}{(\fmld\fequiv\fml)}}
            \label{eqirre}
        \end{equation}
        are provable by \irref{gl_star_axiom} and \irref{ax_sabm_setcheckfalse}.
        Next prove the \(\fmld\fimply\fmlc\) implication.
        Note that by \irref{gl_star_axiom} on \(\fmlc\) that
        \begin{align*}
            \provgls{\fdia{\sabreadtruenv{\gmod{\gvarb}}}{\fml}
                \for
                \fdia{\sabreadfalsenv{\gmod{\gvarb}}\gcom\sabsettruenv{\gmod{\gvarb}}\gcom
                    \freplace{\freplace{\rlglmutoglrules{\gam}}{\gvar}{\gamb}}{\gvarb}{\sabsetfalsenv{\gmod{\gvarb}}
                    }}\fmlc
                \fimply\fmlc}
        \end{align*}
        By the induction hypothesis:
        \begin{align*}
            \provgls{\fdia{\sabreadtruenv{\gmod{\gvarb}}}{\fml}
                \for
                \fdia{\sabreadfalsenv{\gmod{\gvarb}}\gcom\sabsettruenv{\gmod{\gvarb}}\gcom
                    \freplace{\freplace{\rlglmutoglrules{\gam}}{\gvar}{\gamb\gcom\gtest{\fmlc}\gcom\gdtest{\ffalse}}}{\gvarb}{\sabsetfalsenv{\gmod{\gvarb}}\gcom\gtest{\fmlc}\gcom\gdtest{\ffalse}
                    }}\fmlc
                \fimply\fmlc}
        \end{align*}
        By \irref{sgl_simasima}, \irref{sgl_pushin}, \irref{ax_sabm_setset} and because \(\gamb\) is \(\gmod{\gvarb}\)-free by \irref{ax_sabm_pass}
        \begin{align*}
            \provgls{\fdia{\sabreadtruenv{\gmod{\gvarb}}}{\fml}
                \for
                \fdia{\sabreadfalsenv{\gmod{\gvarb}}\gcom\sabsettruenv{\gmod{\gvarb}}\gcom
                    \freplace{\freplace{\rlglmutoglrules{\gam}}{\gvar}{\gamb\gcom\sabsettruenv{\gmod{\gvarb}}\gcom\gtest{\fmlc}\gcom\gdtest{\ffalse}}}{\gvarb}{\sabsetfalsenv{\gmod{\gvarb}}\gcom\gtest{\fmlc}\gcom\gdtest{\ffalse}
                    }}\fmlc
                \fimply\fmlc}
        \end{align*}
        By \eqref{eqirre}
        \begin{align*}
            \provgls{\fdia{\sabreadtruenv{\gmod{\gvarb}}}{\fml}
                \for
                \fdia{\sabreadfalsenv{\gmod{\gvarb}}\gcom\sabsettruenv{\gmod{\gvarb}}\gcom
                    \freplace{\freplace{\rlglmutoglrules{\gam}}{\gvar}{\gamb\gcom\sabsettruenv{\gmod{\gvarb}}\gcom\gtest{\fml}\gcom\gdtest{\ffalse}}}{\gvarb}{\sabsetfalsenv{\gmod{\gvarb}}\gcom\gtest{\fmlc}\gcom\gdtest{\ffalse}
                    }}\fmlc
                \fimply\fmlc}
        \end{align*}
        Using the same chain of reasoning in reverse
        \begin{align*}
            \provgls{\fdia{\sabreadtruenv{\gmod{\gvarb}}}{\fml}
                \for
                \fdia{\sabreadfalsenv{\gmod{\gvarb}}\gcom\sabsettruenv{\gmod{\gvarb}}\gcom
                    \freplace{\freplace{\rlglmutoglrules{\gam}}{\gvar}{\gamb\gcom\gtest{\fml}\gcom\gdtest{\ffalse}}}{\gvarb}{\sabsetfalsenv{\gmod{\gvarb}}
                    }}\fmlc
                \fimply\fmlc}
        \end{align*}
        Hence by \irref{gl_star_rule} the implication \(\fmld\fimply\fmlc\) follows.
        The implication \(\fmlc\fimply\fmld\) is very similar and the case for \(\ggfp{\gvarb}{\gam}\)
        is analogous.
    \end{caselist}
\end{proofE}

\begin{lemmaE}[][all end]\label{technicaldeltalem}
    If \(\gam\) is a \wellnamed game of \glmurlname in normal form,
    \(\tdgatomvec = (\tdgatomvec_1,\tdgatomvec_2)\) are \atgamename{}s not appearing or
    sabotaged in \(\rlglmutoglrules{\gam}\) and \(\fml\)
    and \(\gamb\) is a \glsshortname game, then
    \[\provgls{\fdia{\freplace{\rlglmutoglrules{\gam}}{\gvarlem}{\sabsettruenv{\tdgatomvec}\gcom\gamb}}{\fml}
            \fequiv
            \fdia{\sabsetfalsenv{\tdgatomvec}\gcom\freplace{\rlglmutoglrules{\gam}}{\gvarlem}{\sabsettruenv{\tdgatomvec}}\gcom(
                \sabreadtruenv{\tdgatomvec}\gcom\sabsettruenv{\tdgatomvec}\gcom\gamb\gor\sabreadfalsenv{\tdgatomvec})}{\fml}}
    \]
\end{lemmaE}

\begin{proofE}
    Let \(\delta \synequiv
    \sabreadtruenv{\tdgatomvec}\gcom\sabsettruenv{\tdgatomvec}\gcom\gamb\gor\sabreadfalsenv{\tdgatomvec}\).
    By \Cref{pretdelta}
    note that the formula
    \(
    \fdia{\sabsetfalsenv{\tdgatomvec}\gcom\freplace{\rlglmutoglrules{\gam}}{\gvarlem}{\sabsettruenv{\tdgatomvec}}\gcom\delta}{\fml}\)
    is \glsshortname provably equivalent to
    \[
        \fdia{\sabsetfalsenv{\tdgatomvec}\gcom\freplace{\rlglmutoglrules{\gam}}{\gvarlem}{\sabsettruenv{\tdgatomvec}\gcom
                (
                \sabreadtruenv{\tdgatomvec}\gcom\sabsettruenv{\tdgatomvec}\gcom\gamb\gor\sabreadfalsenv{\tdgatomvec})
                \gcom\gtest{\fml}\gcom\gdtest{\ffalse}
            }\gcom\delta}{\fml}
    \]
    By \irref{ax_sabm_setor}, \irref{ax_sabm_setchecktrue}, \irref{ax_sabm_setcheckfalse} and \irref{ax_sabm_setset} this is provably equivalent to
    \[
        \fdia{\sabsetfalsenv{\tdgatomvec}\gcom\freplace{\rlglmutoglrules{\gam}}{\gvarlem}{\sabsettruenv{\tdgatomvec}\gcom
                \gamb
                \gcom\gtest{\fml}\gcom\gdtest{\ffalse}
            }\gcom\delta}{\fml}
    \]
    By \irref{sgl_pushin} and \irref{ax_sabm_setset} this is further provably equivalent to
    \[
        \fdia{\sabsetfalsenv{\tdgatomvec}\gcom\freplace{\rlglmutoglrules{\gam}}{\gvarlem}{\sabsettruenv{\tdgatomvec}\gcom
                \gamb
                \gcom\gtest{\fml}\gcom\gdtest{\ffalse}
            }\gcom\sabsetfalsenv{\tdgatomvec}\gcom\delta}{\fml}
    \]
    Again by \irref{ax_sabm_setor}, \irref{ax_sabm_setchecktrue} and \irref{ax_sabm_setcheckfalse} this is
    provably equivalent to
    \[
        \fdia{\sabsetfalsenv{\tdgatomvec}\gcom\freplace{\rlglmutoglrules{\gam}}{\gvarlem}{\sabsettruenv{\tdgatomvec}\gcom
                \gamb
                \gcom\gtest{\fml}\gcom\gdtest{\ffalse}
            }}{\fml}
    \]
    By \irref{sgl_pushin} and \irref{ax_sabm_setset} this is provably equivalent to
    \[
        \fdia{\freplace{\rlglmutoglrules{\gam}}{\gvarlem}{\sabsettruenv{\tdgatomvec}\gcom
                \gamb
                \gcom\gtest{\fml}\gcom\gdtest{\ffalse}
            }\gcom\sabsetfalsenv{\tdgatomvec}\gcom}{\fml}
    \]
    By \irref{ax_sabm_setnone} and \Cref{pretdelta} this is provably equivalent to
    \[
        \fdia{\freplace{\rlglmutoglrules{\gam}}{\gvarlem}{\sabsettruenv{\tdgatomvec}\gcom
                \gamb
            }}{\fml}
    \]
    as required.
\end{proofE}

\begin{lemmaE}[][all end]\label{rulesprovesfixpoint}
    For any formula \(\fml\) and any \wellnamed game \(\gam\) of \glmurlname in normal form
    the following hold:
    \begin{enumerate}
        \item \(\fsnot{\rlglmutoglrulesf{\fml}}\synequiv\rlglmutoglrulesf{\fsnot{\fml}}\) and \(\gsnot{\rlglmutoglrules{\gam}}\synequiv\rlglmutoglrules{\gsnot{\gam}}\) \label{itemnaturaldual}
        \item \(\provgls{\rlglmutoglrulesfp{ \fdia{\freplace{\gam}{\gvar}{\glfp{\gvar}{\gam}}}{\fml}
                      \fimply \fdia{\glfp{\gvar}{\gam}}{\fml} }}\) \label{naturaltranslatefixpointaxiom}
        \item \(\provgls{\rlglmutoglrulesfp{\fdia{\glfp{\gvar}{\gam}}{\fml}} \fimply \fdia{\gamb}{\fmlb}}\) if
              \(\provgls{\fdia{\freplace{\rlglmutoglrules{\gam}}{\gvar}{\gamb\gcom\gtest{\fmlb}\gcom\gdtest{\ffalse}}}{\rlglmutoglrulesf{\fml}}}\fimply\fdia{\gamb}{\fmlb}\)
              \label{naturaltranslatefixpointrule}
    \end{enumerate}
\end{lemmaE}
\begin{proofE}
    \Iref{itemnaturaldual} is a straightforward structural induction.

    \Iref{naturaltranslatefixpointaxiom}
    The translation of \(\fdia{\freplace{\gam}{\gvar}{\glfp{\gvar}{\gam}}}{\fml}\) with \(\rlglmutoglruless\)
    is
    \begin{equation*}\label{eqtr}
        \fdia{\freplace{\rlglmutoglrules{\gam}}{\gvar}{
                \sabsetfalsenv{\gmod{\gvar}}\gcom\gstar{(\sabreadfalsenv{\gmod{\gvar}}\gcom\sabsettruenv{\gmod{\gvar}}\gcom
                    \freplace{\rlglmutoglrules{\gam}}{\gvar}{
                        \sabsetfalsenv{\gmod{\gvar}}
                    })};\sabreadtruenv{\gmod{\gvar}}
            }}{\rlglmutoglrulesf{\fml}}
    \end{equation*}
    By \Cref{technicaldeltalem} this provably implies
    \begin{equation*}
        \fdia{\sabsettruenv{\gmod{\gvar}}\gcom\freplace{\rlglmutoglrules{\gam}}{\gvar}{
                \sabsetfalsenv{\gmod{\gvar}}
            }\gcom(\sabreadfalsenv{\gmod{\gvar}}\gcom\sabsetfalsenv{\gmod{\gvar}}\gcom\gstar{(\sabreadfalsenv{\gmod{\gvar}}\gcom\sabsettruenv{\gmod{\gvar}}\gcom
                \freplace{\rlglmutoglrules{\gam}}{\gvar}{
                    \sabsetfalsenv{\gmod{\gvar}}
                })};\sabreadtruenv{\gmod{\gvar}}
            \gor \sabreadtruenv{\gmod{\gvar}})
        }{\rlglmutoglrulesf{\fml}}
    \end{equation*}
    By \irref{gl_choice}, \irref{gl_compose}, \irref{gl_test} and \irref{gl_mon} this provably implies
    \begin{equation*}
        \fdia{\sabsettruenv{\gmod{\gvar}}\gcom\freplace{\rlglmutoglrules{\gam}}{\gvar}{
                \sabsetfalsenv{\gmod{\gvar}}
            }\gcom(\sabreadfalsenv{\gmod{\gvar}}\gcom
            \sabsetfalsenv{\gmod{\gvar}}\gcom
            \gstar{(\sabreadfalsenv{\gmod{\gvar}}\gcom\sabsettruenv{\gmod{\gvar}}\gcom
                \freplace{\rlglmutoglrules{\gam}}{\gvar}{
                    \sabsetfalsenv{\gmod{\gvar}}
                })}\gor\gtest{\ftrue}
            );\sabreadtruenv{\gmod{\gvar}}
        }{\rlglmutoglrulesf{\fml}}
    \end{equation*}
    By \refglrevfpaxiom and \irref{gl_mon} this provably implies
    \begin{equation*}
        \fdia{\sabsettruenv{\gmod{\gvar}}\gcom\freplace{\rlglmutoglrules{\gam}}{\gvar}{
                \sabsetfalsenv{\gmod{\gvar}}
            }\gcom(\sabreadfalsenv{\gmod{\gvar}}\gcom
            \sabsetfalsenv{\gmod{\gvar}}\gcom
            \sabreadfalsenv{\gmod{\gvar}}\gcom\sabsettruenv{\gmod{\gvar}}\gcom
            \freplace{\rlglmutoglrules{\gam}}{\gvar}{
                \sabsetfalsenv{\gmod{\gvar}}
            }
            \gcom
            \gstar{(\sabreadfalsenv{\gmod{\gvar}}\gcom\sabsettruenv{\gmod{\gvar}}\gcom
                \freplace{\rlglmutoglrules{\gam}}{\gvar}{
                    \sabsetfalsenv{\gmod{\gvar}}
                })}\gor\gtest{\ftrue}
            );\sabreadtruenv{\gmod{\gvar}}
        }{\rlglmutoglrulesf{\fml}}
    \end{equation*}
    By axioms \irref{ax_sabm_setcheckfalse} and \irref{ax_sabm_setset} this provably implies:
    \begin{equation*}
        \fdia{\sabsettruenv{\gmod{\gvar}}\gcom\freplace{\rlglmutoglrules{\gam}}{\gvar}{
                \sabsetfalsenv{\gmod{\gvar}}
            }\gcom(\sabreadfalsenv{\gmod{\gvar}}\gcom\sabsettruenv{\gmod{\gvar}}\gcom
            \freplace{\rlglmutoglrules{\gam}}{\gvar}{
                \sabsetfalsenv{\gmod{\gvar}}
            }
            \gcom
            \gstar{(\sabreadfalsenv{\gmod{\gvar}}\gcom\sabsettruenv{\gmod{\gvar}}\gcom
                \freplace{\rlglmutoglrules{\gam}}{\gvar}{
                    \sabsetfalsenv{\gmod{\gvar}}
                })}\gor\gtest{\ftrue}
            );\sabreadtruenv{\gmod{\gvar}}
        }{\rlglmutoglrulesf{\fml}}
    \end{equation*}
    Hence by \irref{gl_star_axiom} and \irref{gl_mon} this provably implies
    \begin{equation*}
        \fdia{\sabsettruenv{\gmod{\gvar}}\gcom\freplace{\rlglmutoglrules{\gam}}{\gvar}{
                \sabsetfalsenv{\gmod{\gvar}}
            }\gcom
            \gstar{(\sabreadfalsenv{\gmod{\gvar}}\gcom\sabsettruenv{\gmod{\gvar}}\gcom
                \freplace{\rlglmutoglrules{\gam}}{\gvar}{
                    \sabsetfalsenv{\gmod{\gvar}}
                })}
            ;\sabreadtruenv{\gmod{\gvar}}
        }{\rlglmutoglrulesf{\fml}}
    \end{equation*}
    By axioms \irref{ax_sabm_setset} and
    \irref{ax_sabm_setcheckfalse}
    this in turn
    provably implies the translation
    \begin{equation*}
        \fdia{\sabsetfalsenv{\gmod{\gvar}}\gcom \sabreadfalsenv{\gmod{\gvar}}\gcom \sabsettruenv{\gmod{\gvar}}\gcom\freplace{\rlglmutoglrules{\gam}}{\gvar}{
                \sabsetfalsenv{\gmod{\gvar}}
            }\gcom
            \gstar{(\sabreadfalsenv{\gmod{\gvar}}\gcom\sabsettruenv{\gmod{\gvar}}\gcom
                \freplace{\rlglmutoglrules{\gam}}{\gvar}{
                    \sabsetfalsenv{\gmod{\gvar}}
                })}
            ;\sabreadtruenv{\gmod{\gvar}}
        }{\rlglmutoglrulesf{\fml}}
    \end{equation*}
    By \irref{gl_mon} and
    \irref{gl_star_axiom}
    this in turn
    provably implies the translation
    \[
        \rlglmutoglrulesp{ \fdia{\glfp{\gvar}{\gam}}{\fml} }
        \synequiv
        \fdia{ \sabsetfalsenv{\gmod{\gvar}}\gcom\gstar{(\sabreadfalsenv{\gmod{\gvar}}\gcom\sabsettruenv{\gmod{\gvar}}\gcom
                \freplace{\rlglmutoglrules{\gam}}{\gvar}{
                    \sabsetfalsenv{\gmod{\gvar}}
                })};\sabreadtruenv{\gmod{\gvar}}
        }{\rlglmutoglrulesf{\fml}}
    \]
    as required.

    \Iref{naturaltranslatefixpointrule}
    Let
    \(\fmlc \equiv\fdia{\sabreadtruenv{\gmod{\gvar}}}{\rlglmutoglrulesf{\fml}} \for \fdia{\sabreadfalsenv{\gmod{\gvar}}\gcom\sabsettruenv{\gmod{\gvar}}\gcom
        \gamb
    }{\fmlb} \) and
    assume
    \[\provgls{\fdia{\freplace{\rlglmutoglrules{\gam}}{\gvar}{\gamb\gcom\gtest{\fmlb}\gcom\gdtest{\ffalse}}}{\rlglmutoglrulesf{\fml}}}\fimply\fdia{\gamb}{\fmlb}.\]
    Because \(\gamb,\fmlb\) do not mention \(\gmod{\gvar}\) by \irref{ax_sabm_setnone}
    this implies
    \[\provgls{\fdia{\freplace{\rlglmutoglrules{\gam}}{\gvar}{\sabsetfalsenv{\gmod{\gvar}}\gcom\gamb\gcom\gtest{\fmlb}\gcom\gdtest{\ffalse}}}{\rlglmutoglrulesf{\fml}}}\fimply\fdia{\gamb}{\fmlb}.\]
    By \irref{gl_mon} and propositional reasoning deduce
    \[
        \provgls{
            \fdia{\sabreadtruenv{\gmod{\gvar}}}{\rlglmutoglrulesf{\fml}}\for
            \fdia{\sabreadfalsenv{\gmod{\gvar}}\gcom\sabsettruenv{\gmod{\gvar}}\gcom
                \freplace{\rlglmutoglrules{\gam}}{\gvar}{
                    \sabsetfalsenv{\gmod{\gvar}}\gcom\gamb\gcom\gtest{\fmlb}\gcom\gdtest{\ffalse}
                }}{\rlglmutoglrulesf{\fml}}
            \fimply\fmlc
        }
    \]
    By \Cref{technicaldeltalem} this implies
    \[
        \provgls{
            \fdia{\sabreadtruenv{\gmod{\gvar}}}{\rlglmutoglrulesf{\fml}}\for
            \fdia{\sabreadfalsenv{\gmod{\gvar}}\gcom\sabsettruenv{\gmod{\gvar}}\gcom
                \freplace{\rlglmutoglrules{\gam}}{\gvar}{
                    \sabsetfalsenv{\gmod{\gvar}}
                }\gcom
                (
                \sabreadfalsenv{\gmod{\gvar}}\gcom \sabsetfalsenv{\gmod{\gvar}} \gcom\gamb\gcom\gtest{\fmlb}\gcom\gdtest{\ffalse}
                \gor
                \sabreadtruenv{\gmod{\gvar}} )
            }{\rlglmutoglrulesf{\fml}}
            \fimply
            \fmlc
        }
    \]
    Again because \(\gamb,\fmlb\) do not mention \(\gmod{\gvar}\) by \irref{ax_sabm_setnone} this implies
    \[
        \provgls{
            \fdia{\sabreadtruenv{\gmod{\gvar}}}{\rlglmutoglrulesf{\fml}}\for
            \fdia{\sabreadfalsenv{\gmod{\gvar}}\gcom\sabsettruenv{\gmod{\gvar}}\gcom
                \freplace{\rlglmutoglrules{\gam}}{\gvar}{
                    \sabsetfalsenv{\gmod{\gvar}}
                }
            }{\fmlc}
            \fimply
            \fmlc
        }
    \]
    By applying \irref{gl_star_rule} it follows that
    \begin{equation*}
        \provgls{
            \fdia{\gstar{(\sabreadfalsenv{\gmod{\gvar}}\gcom\sabsettruenv{\gmod{\gvar}}\gcom
                    \freplace{\rlglmutoglrules{\gam}}{\gvar}{
                        \sabsetfalsenv{\gmod{\gvar}}
                    })}\gcom\sabreadtruenv{\gmod{\gvar}}}{\rlglmutoglrulesf{\fml}}
            \fimply
            \fmlc
        }
    \end{equation*}
    is provable.
    By \irref{gl_mon} it follows that
    \begin{equation*}
        \provgls{
            \fdia{{\sabsetfalsenv{\gmod{\gvar}}}\gcom\gstar{(\sabreadfalsenv{\gmod{\gvar}}\gcom\sabsettruenv{\gmod{\gvar}}\gcom
                    \freplace{\rlglmutoglrules{\gam}}{\gvar}{
                        \sabsetfalsenv{\gmod{\gvar}}
                    })}\gcom\sabreadtruenv{\gmod{\gvar}}}{\rlglmutoglrulesf{\fml}}
            \fimply
            \fdia{{\sabsetfalsenv{\gmod{\gvar}}}}{\rho
            }
        }
    \end{equation*}
    By \irref{ax_sabm_setor}, \irref{ax_sabm_setchecktrue},
    \irref{ax_sabm_setcheckfalse} and \irref{ax_sabm_setnone} this implies
    \begin{equation*}
        \provgls{
            \fdia{\sabsetfalsenv{\gmod{\gvar}}\gcom\gstar{(\sabreadfalsenv{\gmod{\gvar}}\gcom\sabsettruenv{\gmod{\gvar}}\gcom
                    \freplace{\rlglmutoglrules{\gam}}{\gvar}{
                        \sabsetfalsenv{\gmod{\gvar}}
                    })}\gcom\sabreadtruenv{\gmod{\gvar}}}{\rlglmutoglrulesf{\fml}}
            \fimply
            \fdia{
                \gamb
            }{\fmlb}
        }
    \end{equation*}
    as required.
\end{proofE}

\begin{propositionE}[\correcttransformation{\(\rlglmutoglrulesfs\)}][]\label{rlmutorulesproofs}
    For a \wellnamed \glmurlname formula $\fml$ in normal form
    if \(\provrlglm{\fml}\) then \(\provgls{\rlglmutoglrulesf{\fml}}\).
\end{propositionE}

\begin{proofE}
    Show the claim by induction on the length of
    a proof for \(\provrlglm{\fml}\) where \(\fml\) is a  \glmurlname
    formula in normal form.

    Suppose first the proof is of length \(1\) so that the formula \(\fml\) is the instance
    of a \glmurlshortname axiom.
    If \(\fml\) is a propositional tautology, then so is \(\rlglmutoglrulesf{\fml}\)
    by \Iref{itemnaturaldual} of \Cref{rulesprovesfixpoint}.
    If \(\fml\) is an instance of axiom \refglfpaxiom then \(\provrlglm{\rlglmutoglrulesf{\fml}}\)
    by \Iref{naturaltranslatefixpointaxiom} of \Cref{rulesprovesfixpoint}.
    For the remaining axioms it is clear that the translation \(\rlglmutoglrulesf{\fml}\)
    of any instance of a \glmurlshortname axiom \(\fml\) is again an instance
    of the corresponding \glshortname axiom since \(\rlglmutoglrulesfs\) and \(\rlglmutoglruless\) constitute a homomorphic
    translation for all but recursive subgames, which is only relevant for axiom \refglfpaxiom.

    For proofs of length greater than \(1\) distinguish based on the last rule applied in the proof.

    \begin{caselist}
        \case{\irref{gl_mon}}
        Suppose the last step of the proof of \(\fdia{\gam}{\fml}\fimply\fdia{\gam}{\fmlb}\)
        is an application of rule \irref{gl_mon} with \(\provrlglm{\fml\fimply\fmlb}\).
        Then
        by the induction
        hypothesis \(\provgls{\rlglmutoglrulesfp{\fml\fimply\fmlb}}\).
        By \Iref{itemnaturaldual} of \Cref{rulesprovesfixpoint}
        and an application of rule \irref{gl_mon}
        this also implies \(\provgls{\fdia{\gam}{\rlglmutoglrulesf{\fml}\fimply\fdia{\gam}{\rlglmutoglrulesf{\fmlb}}}}\).
        Hence \(\provgls{\rlglmutoglrulesfp{\fdia{\gam}{\fml\fimply\fdia{\gam}{\fmlb}}}}\).

        \case{\irref{gl_mp}}
        Suppose the last step of the proof of a
        formula~\(\fml\)
        is an application of rule \irref{gl_mp} with \(\provrlglm{\fmlb\fimply\fml}\) and
        \(\provrlglm{\fmlb}\).
        Then by the induction
        hypothesis \(\provgls{\rlglmutoglrulesfp{{\fmlb}\fimply\fml}}\)
        and \(\provgls{\rlglmutoglrulesf{{\fmlb}}}\).
        By rule \irref{gl_mp} it follows that
        \(\provgls{\rlglmutoglrulesf{{\fmlb}}}\).

        \case{\refglfprule}
        Immediate by \Iref{naturaltranslatefixpointrule} of \Cref{rulesprovesfixpoint}.
    \end{caselist}
\end{proofE}

\begin{lemmaE}[\provablecorrectnessgls][]\label{rulesdoubletranslation}
    Suppose \(\fml\) is a formula of \glsname in normal form then
    \(\provgls{\rlglmutoglrulesfp{\glrulestorl{\fml}{\contzero}}\fimply\fml}\).
\end{lemmaE}

\begin{proofE}
    It suffices to fix a finite set \(U\) of \atgamename{}s and prove the lemma
    only for formulas restricted to \atgamename{}s from \(U\).
    Let \(\contlist_1,\ldots, \contlist_n\) be a list of all contexts
    such that \(\contlist_i(\gatom) = \playernocon\) for all \(\gatom\notin U\).
    We call such contexts \(U\)-context and let \(\contexts_U\) be the set of all \(U\)-contexts.
    Write \(\contextse{U}{\gatom}{\playervar} = \{\cont\in\contexts_U:\cont(\gatom)=\playervar\}\) for \(\playervar\in \{\playernocon,\playeronecon,\playertwocon\}\).
    For every \(\gatom\in U\) fix a fresh list of \atgamename \(\scvar^\gatom = \scvars^{\gatom}_1,\scvars^{\gatom}_2,\scvars^{\gatom}_3\) to use as memory as in \Cref{sabmemsec}.
    We write \(\sabsetnv[\scvar^\gatom]{\playernocon}\),
    \(\sabsetnv[\scvar^\gatom]{\playeronecon}\) and \(\sabsetnv[\scvar^\gatom]{\playertwocon}\)
    for \(\sabsetnv[\scvar^\gatom]{0}\), \(\sabsetnv[\scvar^\gatom]{1}\)
    and \(\sabsetnv[\scvar^\gatom]{2}\) respectively.
    And analogously for \(\sabreadnv[\scvar^\gatom]{\playeronecon}\)
    For a \(U\)-context \(\cont\) let \(\sabsetnv[\scvar]{\cont}\)
    stand for \(\sabsetnv[\scvar^{\gatom_1}]{\cont(\gatom_1)}\gcom\ldots\gcom \sabsetnv[\scvar^{\gatom_m}]{\cont(\gatom_m)}\) for a listing \(\gatom_1,\ldots,\gatom_m\) of \(U\).
    Similarly let \(\sabread[\scvar]{\cont}\) denote
    \(\sabreadnv[\scvar^{\gatom_1}]{\cont(\gatom_1)}\gcom\ldots\gcom \sabreadnv[\scvar^{\gatom_m}]{\cont(\gatom_m)}\).
    Note that by \irref{ax_sabm_setexchset} the order of the \atgamename{}s is irrelevant.
    We shall ignore this order and without mention assume the atomic
    games to be ordered in a convenient way in this proof.
    For a \(U\)-context \(\cont\) let \(\contsetgam{\cont}\) denote the game
    \[
        \contsetgam{\cont}  \synequiv \gangelswin{\gatom_1}\gcom\ldots\gcom\gangelswin{\gatom_n}
        \gcom\gdemonswin{\gatomb_1}\gcom\ldots\gcom\gdemonswin{\gatomb_m}\]
    where \(\gatom_1,\ldots, \gatom_n,\gatomb_1,\ldots, \gatomb_m,\gatomc_1,\ldots, \gatomc_k\)
    lists all elements in \(U\) such that \(\cont(\gatom_i)=\playeronecon\), \(\cont(\gatomb_i)=\playertwocon\) and \(\cont(\gatomc_i)=\playernocon\) for all \(i\).
    For a \(\contexts_U\)-indexed family \(\gam_\cont\) of \glsshortname games we define the guarded version
    \[\guardedsgl{\gam_\repdot} = \bigcup_{\cont\in \contexts_U}(\contcheckgam{\cont}\gcom\contsetgamb{\cont}\gcom\gam_\cont).\]
    Also write \(\guardedsgl{\gam}\) for \(\guardedsgl{\gam_\repdot}\)
    where \(\gam_\cont =\gam\) for all \(\cont\).

    For every formula \(\fml\) and every game \(\gam\) of \glsshortname define modified versions
    \(\synsab{\fml}\) and \(\synsab{\gam}\) by induction on the definition as follows
    \begin{align*}
         & \synsab{\patom} \synequiv \patom
         & \qquad                           &
        \synsab{\fml\for\fmlb} \synequiv \synsab{\fml}\for\synsab{\fmlb}
         & \qquad                           &
        \synsab{\fdia{\gam}{\fml}} \synequiv \fdia{\synsab{\gam}}{\synsab{\fml}}
        \\ &
        \synsab{\fnot{\patom}} \synequiv \fnot{\patom}
         &                                  &
        \synsab{\fml\fand\fmlb} \synequiv \synsab{\fml}\fand\synsab{\fmlb}
         &                                  &
        \synsab{\gam\gcom\gamb} \synequiv \synsab{\gam}\gcom\synsab{\gamb}
        \\ &
        \synsab{\gtest{\fml}} \synequiv \gtest{\synsab{\fml}}
         &                                  &
        \synsab{\gam\gor\gamb} \synequiv \synsab{\gam}\gor\synsab{\gamb}
         &                                  &
        \synsab{\gstar{\gam}} \synequiv \gstarp{\guardedsgl{\synsab{\gam}}}
        \\ &
        \synsab{\gdtest{\fml}} \synequiv \gdtest{\synsab{\fml}}
         &                                  &
        \synsab{\gam\gand\gamb} \synequiv \synsab{\gam}\gand\synsab{\gamb}
         &                                  &
        \synsab{\gdstar{\gam}} \synequiv \gdstarp{\guardedsgl{\synsab{\gam}}}
        \\ &
        \synsab{\gangelswin{\gatom}} \synequiv \sabsetnv[\scvar^\gatom]{\playeronecon}
         &                                  &
        \synsab{\gatom} \synequiv {} \sabreadnv[\scvar^\gatom]{\playernocon}\gcom\gatom{}\gor{}\sabreadnv[\scvar^\gatom]{\playeronecon} \span\span
        \\ &
        \synsab{\gdemonswin{\gatom}} \synequiv\sabsetnv[\scvar^\gatom]{\playertwocon}
         &                                  &
        \synsab{\gdual{\gatom}} \synequiv {} \sabreadnv[\scvar^\gatom]{\playernocon}\gcom\gatom{}\gor{}\sabreadnv[\scvar^\gatom]{\playeronecon}\gcom\gdtest{\ffalse}\gor{}\sabreadnv[\scvar^\gatom]{\playeronecon}\span\span
    \end{align*}
    These record in \(\scvar\) the current context.

    \begin{claim} \label{threeobservations}
        Before we prove the lemma we make the following observations
        \begin{enumerate}
            \item \(\provgls{\fdia{\contsetgamb{\cont}}{\fdia{\gam\gcom\guardedsgl{\gamb}}{\ftrue}}\fequiv \fdia{\contsetgamb{\cont}}{\fdia{{\gam}\gcom\gamb}{\ffalse}}}\)              \label{constantguardedincontext}
            \item \(\provgls{\fdia{\contsetgamb{\cont}}{\fdia{\guardedsgl{\gamb_\repdot}}{\fml}\fequiv  \fdia{\contsetgamb{\cont}}{\fdia{\gamb_\cont}{\fml}}}}\)              \label{setguarded}
            \item  \(\provgls{\fdia{\contsetgamb{\cont}}{\synsab{\fml}}\fimply \fdia{\contsetgam{\cont}}{\fml }}\) \label{removecontvars}
        \end{enumerate}
        for all formulas \(\fml\) and all games \(\gam\).
    \end{claim}

    \paragraph{Proof of Claim \ref{threeobservations}}
    \Iref{constantguardedincontext}
    Immediate by \irref{ax_sabm_branch}.

    \Iref{setguarded} Easy using axioms \irref{ax_sabm_setor}, \irref{ax_sabm_setchecktrue}, \irref{ax_sabm_setcheckfalse}, \irref{ax_sabm_setset}.

    \Iref{removecontvars}
    First define a modification \(\synsabb{\fml}\) and \(\synsabb{\gam}\)
    of \(\synsab{\fml}\) and \(\synsab{\gam}\) by induction on the definition
    which is defined just like \(\synsabb{\fml}\) and \(\synsabb{\gam}\)
    on all cases, except that
    \begin{align*}
         &
        \synsabbp{\gangelswin{\gatom}} \synequiv \sabsetnv[\scvar^\gatom]{\playeronecon}\gcom\gangelswin{\gatom}
         &   &
        \synsabbp{\gdemonswin{\gatom}} \synequiv \sabsetnv[\scvar^\gatom]{\playertwocon}\gcom\gdemonswin{\gatom}
    \end{align*}
    Note that the two modifications are provably equivalent, i.e.
    \(\provgls{\fdia{\synsab{\gam}}{\fml}\fequiv \fdia{\synsabbp{\gam}}{\fml}}\)
    by \irref{ax_sabm_guardedsabotage},
    since \(\gatom\) is \((\scvar^\gatom{=}\playernocon)\)-guarded in \(\synsab{\gam}\).
    Now define a third modification \(\synsabbb{\fml}\) and \(\synsabbb{\gam}\)
    by induction on the definition
    which is defined just like \(\synsabb{\fml}\) and \(\synsabb{\gam}\)
    on all cases, except that
    \(\synsabbb{\gatom} =\gatom\) and \(\synsabbb{\gdual{\gatom}} =\gdual{\gatom}\).
    Observe that \(\provgls{\fdia{\contsetgamb{\cont}}{\fdia{\synsabb{\gam}}{\fml}}\fequiv \fdia{\contsetgamb{\cont}\gcom\contsetgam{\cont}}{\fdia{\synsabbb{\gam}}{\fml}}}\),
    by \irref{ax_sabm_remember}, \irref{sgl_sima}, \irref{sgl_dsima}, since \(\sabsetnv[\scvar^\gatom]{\playeronecon}\)
    remembers \(\gangelswin{\gatom}\)
    and similarly \(\sabsetnv[\scvar^\gatom]{\playertwocon}\)
    remembers \(\gangelswin{\gatom}\).
    Define a fourth modification \(\synsabbbb{\fml}\) and \(\synsabbbb{\gam}\)
    just like \(\synsabbb{\fml}\) and \(\synsabbb{\gam}\)
    on all cases, except that
    \(\synsabbbp{\gstar{\gam}} =\gstarp{\synsabbb{\gam}}\) and
    \(\synsabbbp{\gdstar{\gam}} =\gdstarp{\synsabbb{\gam}}\).
    Then \(\provgls{\fdia{\contsetgamb{\cont}\gcom\contsetgam{\cont}}{\fdia{\synsabbb{\gam}}{\fml}}\fequiv \fdia{\contsetgamb{\cont}\gcom\contsetgam{\cont}}{\fdia{\synsabbbb{\gam}}{\fml}}}\)
    by \irref{ax_sabm_branch}.
    By \irref{gl_f_removeunused} moreover
    \(\provgls{\fdia{\contsetgamb{\cont}\gcom\contsetgam{\cont}}{\fdia{\synsabbbb{\gam}}{\fml}}\fequiv\fdia{\contsetgam{\cont}}{\gam}}\).

    \begin{claim}\label{twoobservations}
        For all \contextname{}s \(\cont\), formulas \(\fml\) and games \(\gam\), \(\gamb_{\contrep}\)
        of \glsname
        \begin{enumerate}
            \item \(\provgls{\fdia{\rlglmutoglrules{\glrulestorl{\fml}{\cont}}}{\ffalse}\fimply
                      \fdia{\contsetgamb{\cont}}{\synsab{\fml}}}\)\label{cbnatural}
            \item \(\provgls{
                      \fdia{
                          \freplace{\rlglmutoglrules{\glrulestorl{\gam}{\cont}}
                          }{\contvar{\contrep}}{\contsetgamb{\contrep}\gcom\gamb_{\contrep}}
                      }{\ffalse}
                      \fimply
                      \fdia{\contsetgamb{\cont}}{\fdia{\synsab{\gam}\gcom\guardedsgl{\gamb_\repdot}}{\ffalse}}}\)\label{cbbnatural}
        \end{enumerate}
    \end{claim}
    The lemma follows from \Iref{cbnatural} of Claim \ref{twoobservations} with \Iref{removecontvars} of Claim \ref{threeobservations}  with context \(\contzero\).

    \paragraph{Proof of Claim \ref{threeobservations}}
    By simultaneous induction on formulas \(\fml\) and games \(\gam\) of \glsname
    prove \Iref{cbnatural} and \Iref{cbbnatural} for all \contextname{}s \(\cont\)
    and games \(\gamb_{\contrep}\)

    \begin{caselist}
        \case{$\patom$ and $\fnot\patom$}
        For \(\patom\) the implication to show is
        \[\provgls{{\fdia{\gtest{\patom}\gcom\gdtest{\ffalse}}{\ffalse}}\fimply\fdia{\contsetgamb{\cont}}{\patom}}.\]
        The implication is provably by \irref{gl_compose}, \irref{gl_test} and \irref{ax_sabm_setnone}.
        The case for \(\fnot{\patom}\) is analogous.

        \case{$\fml\for\fmlb$ and $\fml\fand\fmlb$}
        For disjunction the implication is by the induction hypothesis \irref{gl_choice}
        and \irref{ax_sabm_setchecktrue}.
        Similarly for conjunction with \irref{gl_dchoice}
        and \irref{ax_sabm_setand}.

        \case{$\gatom$}
        There are three cases depending on the value of \(\cont(\gatom)\).
        If \(\cont(\gatom)=\playernocon\) we need to show
        \[\provgls{\fdia{\gatom\gcom\contsetgamb{\cont}\gcom\gamb_{\cont}}{\ffalse} \fimply
                \fdia{\contsetgamb{\cont}}{\fdia{\synsab{\gatom}\gcom \guardedsgl{\gamb_\repdot}}{\ffalse}}}.\]
        This reduces by \irref{ax_sabm_pass} and \Iref{setguarded} of Claim \ref{threeobservations} to
        \[\provgls{\fdia{\contsetgamb{\cont}\gcom\gatom\gcom\guardedsgl{\gamb_\repdot}}{\ffalse} \fimply
                \fdia{\contsetgamb{\cont}}{\fdia{\synsab{\gatom}\gcom \guardedsgl{\gamb_\repdot}}{\ffalse}}}.\]
        Now this is provable with \irref{ax_sabm_setor} and \irref{ax_sabm_setchecktrue}.
        If \(\cont(\gatom)=\playeronecon\) note that the required implication
        \[\provgls{\fdia{\contsetgamb{\cont}\gcom\gamb_{\cont}}{\ffalse} \fimply
                \fdia{\contsetgamb{\cont}}{\fdia{\synsab{\gatom}\gcom \guardedsgl{\gamb_\repdot}}{\ffalse}}}\]
        is provable with the same reasoning.
        If \(\cont(\gatom)=\playertwocon\) then the desired implication is
        \[\provgls{\fdia{\gtest{\ffalse}}{\ffalse} \fimply
                \fdia{\contsetgamb{\cont}}{\fdia{\synsab{\gatom}\gcom \guardedsgl{\gamb_\repdot}}{\ffalse}}}.\]
        This holds vacuously and is provable with \irref{gl_test}.

        \case{$\gdual{\gatom}$}
        There are again three cases. The case for \(\cont(\gatom)=\playernocon\)
        is analogous to the previous case.
        If \(\cont(\gatom)=\playeronecon\) we need to show
        \[\provgls{\fdia{\gdtest{\ffalse}}{\ffalse} \fimply
                \fdia{\contsetgamb{\cont}}{\fdia{\synsab{\gdual{\gatom}}\gcom \guardedsgl{\gamb_\repdot}}{\ffalse}}}\]
        By \irref{ax_sabm_setor} and \irref{ax_sabm_setchecktrue}
        this reduces to
        \[\provgls{
                \fdia{\contsetgamb{\cont}}{\fdia{ \gdtest{\ffalse}\gcom \guardedsgl{\gamb_\repdot}}{\ffalse}}}\]
        By \irref{gl_test} and \irref{sgld_simtrue} this is provable.
        The proof for \(\cont(\gatom)=\playertwocon\) is analogous to the case of \(\cont(\gatom)=\playeronecon\) for games \(\gatom\).

        \case{$\gangelswin{\gatom}$ and $\gdemonswin{\gatom}$}
        The claim is
        \[\provgls{
                \fdia{\contsetgamb{\contmod[\cont]{\gatom}{\playeronecon}}\gcom\gamb_{\contmod[\cont]{\gatom}{\playeronecon}}}{\ffalse}}
            \fimply
            \fdia{\contsetgamb{\cont}}{\fdia{\synsab{\gangelswin{\gatom}}\gcom\guardedsgl{\gamb_\repdot}}{\ffalse}}.\]
        By definition of \(\synsab{\gangelswin{\gatom}}\) this is
        \[\provgls{
                \fdia{\contsetgamb{\contmod[\cont]{\gatom}{\playeronecon}}\gcom\gamb_{\contmod[\cont]{\gatom}{\playeronecon}}}{\ffalse}}
            \fimply
            \fdia{\contsetgamb{\cont}}{\fdia{\contsetgamb{\contmod[\cont]{\gvar}{\playeronecon}}\gcom\guardedsgl{\gamb_\repdot}}{\ffalse}}.\]
        This is proved by \irref{ax_sabm_setset} and \Iref{setguarded} of Claim \ref{threeobservations}.
        The case for \(\gdemonswin{\gatom}\) is analogous.

        \case{$\fdia{\gam}{\fml}$ and $\fbox{\gam}{\fml}$}
        Let \(\gamb_\contb \synequiv \rlglmutoglrules{\glrulestorl{\fml}{\contb}}\gcom\gtest{\ffalse}\)
        By the induction hypothesis on \(\fml\):
        \[\provgls{
                \fdia{
                    \freplace{\rlglmutoglrules{\glrulestorl{\gam}{\cont}}
                    }{\contvar{\contrep}}{\gamb_{\cont_\repdot}}
                }{\ffalse}
                \fimply
                \fdia{
                    \freplace{\rlglmutoglrules{\glrulestorl{\gam}{\cont}}
                    }{\contvar{\contrep}}{\contsetgamb{\contrep}\gcom
                        \gtest{\synsab{\fml}}\gcom\gdtest{\ffalse}}
                }{\ffalse}}.\]
        Now applying the induction hypothesis for \(\gam\) combined with \Iref{constantguardedincontext} of
        Claim \ref{twoobservations} we get
        \[\provgls{
                \fdia{
                    \freplace{\rlglmutoglrules{\glrulestorl{\gam}{\cont}}
                    }{\contvar{\contrep}}{\contsetgamb{\contrep}\gcom
                        \gtest{\synsab{\fml}}\gcom\gdtest{\ffalse}}
                }{\ffalse}
                \fimply
                \fdia{\contsetgamb{\cont}}{\fdia{\synsab{\gam}\gcom\gtest{\synsab{\fml}}\gcom\gdtest{\ffalse}}{\ffalse}}
            }.\]
        Putting this together we get
        \[\provgls{\fdia{\rlglmutoglrules{\glrulestorl{(\fdia{\gatom}{\fml})}{\cont}}}{\ffalse}\fimply
                \fdia{\contsetgamb{\cont}}{\synsab{\fdia{\gam}{\fml}}}}.\]
        as required.

        \case{$\gam\gcom\gamb$}
        For every context \(\contb\) let \(\gamd_\contb \synequiv
        \freplace{\rlglmutoglrules{\glrulestorl{(\gamc)}{\cont}}
        }{\contvar{\contrep}}{\contsetgamb{\contrep}\gcom\gamb_{\contrep}}\).
        By the induction hypothesis on \(\gamc\):
        \[\provgls{
                \fdia{
                    \gamd_\contb
                }{\ffalse}
                \fimply
                \fdia{\contsetgamb{\contb}}{\fdia{\synsab{\gamc}\gcom\guardedsgl{\gamb_\repdot}}{\ffalse}}}\]
        Then by the induction hypothesis on \(\gam\):
        \[\provgls{
                \fdia{
                    \freplace{\rlglmutoglrules{\glrulestorl{(\gam\gcom\gamc)}{\cont}}
                    }{\contvar{\contrep}}{\contsetgamb{\contrep}\gcom\gamb_{\contrep}}
                }{\ffalse}
                \fimply
                \fdia{\contsetgamb{\cont}}{\fdia{\synsab{\gam}\gcom\guardedsgl{\gamd_\repdot}}{\ffalse}}}.\]
        By definition of \(\guardedsgl{\gamd_\repdot}\), \irref{ax_sabm_setset}
        and \Iref{constantguardedincontext} of Claim \ref{threeobservations} then
        \[\provgls{
                \fdia{\contsetgamb{\cont}}{\fdia{\synsab{\gam}\gcom\guardedsgl{\gamd_\repdot}}{\ffalse}}
                \fimply
                \fdia{\contsetgamb{\cont}}{\fdia{\synsab{\gam}\gcom\synsab{\gamc}\gcom\guardedsgl{\gamb_\repdot}}{\ffalse}}}\]

        \case{$\gtest{\fml}$ and $\gdtest{\fml}$}
        For tests we need to show
        \[\provgls{
                \fdia{
                    \rlglmutoglrules{\glrulestorl{(\fml)}{\cont}}
                    \gand \contsetgamb{\cont}\gcom\gamb_{\cont}
                }{\ffalse}
                \fimply
                \fdia{\contsetgamb{\cont}}{\fdia{\synsab{\gam}\gcom\guardedsgl{\gamb_\repdot}}{\ffalse}}}\]
        By \Iref{setguarded} this implies
        \[\provgls{
                \fdia{
                    \rlglmutoglrules{\glrulestorl{(\fml)}{\cont}}
                    \gand \contsetgamb{\cont}\gcom\guardedsgl{\gamb_\repdot}
                }{\ffalse}
                \fimply
                \fdia{\contsetgamb{\cont}}{\fdia{\synsab{\gam}\gcom\guardedsgl{\gamb_\repdot}}{\ffalse}}}\]
        By the induction hypothesis on \(\fml\), \irref{gl_dchoice} and \irref{ax_sabm_setand} this reduces to showing
        \[\provgls{
                \fdia{\contsetgamb{\cont}}{(
                    \synsab{\fml} \fand\fdia{\guardedsgl{\gamb_\repdot}}{\ffalse}
                    )}
                \fimply
                \fdia{\contsetgamb{\cont}}{\fdia{\synsab{(\gtest{\fml})}\gcom\guardedsgl{\gamb_\repdot}}{\ffalse}}}\]
        This is provable by an instance of \irref{gl_test}.
        The case for dual test is analogous.

        \case{$\gam\gor\gamb$ and $\gam\gand\gamb$}
        The claim is easy by induction hypothesis using \irref{gl_choice} and \irref{ax_sabm_setand}
        for \(\gor\) and \irref{gl_dchoice} and \irref{ax_sabm_setand} for \(\gand\).

        \case{$\gstar{\gam}$}
        List the contexts \(\cont_1,\ldots,\cont_n\)
        as in the definition of \(\glrulestorl{\gstar{\gam}}{\cont}\).
        We need to show that
        \begin{align*}
             &
            \freplace{\rlglmutoglrulesp{\fdia{
                        \gvlfp[i]{\contvarfp{\cont_1},\ldots,\contvarfp{\cont_n}}{
                            \contvar{\cont_1}\gor\freplace{\glrulestorl{\gam}{\cont_1}}{\contvar{\contrep}}{\contvarfp{\contrep}},\ldots,\contvar{\cont_n}\gor\freplace{\glrulestorl{\gam}{\cont_n}}{\contvar{\contrep}}{\contvarfp{\contrep}}}}
                    {\ffalse}}}{\contvar{\contrep}}{\contsetgamb{\contrep}\gcom\gamb_{\contrep}}
            \\
             & \fimply
            \fdia{\contsetgamb{\cont_i}}{}(\fdia{\gstar{(\guardedsgl{\synsab{\gam}})}\gcom\guardedsgl{\gamb_{\contrep}}}{\ffalse})
        \end{align*}
        is provable in \glsshortname.
        Note \Iref{naturaltranslatefixpointrule} of
        \Cref{rulesprovesfixpoint} generalizes inductively
        to the vectorial fixpoints from \Cref{bekic}
        this reduces to proving in \glsshortname (for all \(i=1,\ldots,n\)):
        \[
            {
                    \fdia{\contsetgamb{\cont_i}\gcom\gamb_{\cont_i}\gor
                        \freplace{\rlglmutoglrulesp{\glrulestorl{\gam}{\cont_i}}}{\contvar{\contrep}}{\contsetgamb{\contrep}\gcom\gstarp{\guardedsgl{\synsab{\gam}}}\gcom\guardedsgl{\gamb_{\contrep}}\gcom\gtest{\ffalse}}
                    }{\ffalse}
                    \fimply
                    \fdia{\contsetgamb{\cont_i}}{}(\fdia{\gstarp{\guardedsgl{\synsab{\gam}}}\gcom\guardedsgl{\gamb_{\contrep}}}{\ffalse})
                }
        \]
        By the induction hypothesis on \(\gam\) and \Iref{constantguardedincontext} of Claim \ref{threeobservations} this reduces to
        showing that
        \[
            {
                    \fdia{\contsetgamb{\cont_i}\gcom\gamb_{\cont_i}\gor
                        \contsetgamb{\cont_i}\gcom\synsab{\gam}\gcom\gstarp{\guardedsgl{\synsab{\gam}}}\gcom\guardedsgl{\gamb_{\contrep}}\gcom\gtest{\ffalse}
                    }{\ffalse}
                    \fimply
                    \fdia{\contsetgamb{\cont_i}}{}(\fdia{\synsab{(\gstar{\gam})}\gcom\guardedsgl{\gamb_{\contrep}}}{\ffalse})
                }
        \]
        is provable in \glsshortname.
        Using \Iref{constantguardedincontext} of Claim \ref{threeobservations}, \irref{ax_sabm_setor}, \irref{ax_sabm_setnot} this in turn reduces to
        showing
        \[
            \provgls{
                \fdia{\contsetgamb{\cont_i}}{(\fdia{\guardedsgl{\gamb_{\contrep}}\gor
                        \guardedsgl{\synsab{\gam}}\gcom\gstarp{\guardedsgl{\synsab{\gam}}}\gcom\guardedsgl{\gamb_{\contrep}}\gcom\gtest{\ffalse}
                    }{\ffalse}
                    \fimply
                    \fdia{\gstarp{\guardedsgl{\synsab{\gam}}}\gcom\guardedsgl{\gamb_{\contrep}}}{\ffalse})}
            }
        \]
        This now easily follows from an instance of \irref{gl_star_axiom} with \irref{gl_mon}.

        \case{$\gdstar{\gam}$}
        If the game is of the form \(\gdstar{\gam}\)
        list the contexts \(\cont_1,\ldots,\cont_n\)
        as in the definition of \(\glrulestorl{\gdstar{\gam}}{\cont}\).
        Let \(\gamc_{\cont_i}\) be the formula
        \[
            \freplace{\rlglmutoglrulesp{
                    \gvgfp[i]{\contvarfp{\cont_1},\ldots,\contvarfp{\cont_n}}{
                        \contvar{\cont_1}\gand\freplace{\glrulestorl{\gam}{\cont_1}}{\contvar{\contrep}}{w_{\contrep}\gcom\contvarfp{\contrep}},\ldots,
                        \contvar{\cont_n}\gand\freplace{\glrulestorl{\gam}{\cont_n}}{\contvar{\contrep}}{w_{\contrep}\gcom\contvarfp{\contrep}}}
                }}{\contvar{\contrep}}{\contsetgamb{\contrep}\gcom\gamb_{\contrep }}\]
        And let \(\delta_i=\freplace{\gamc_{\cont_i}}{w_{\contrep}}{\contsetgamb{\contrep}}\).
        By \Iref{naturaltranslatefixpointaxiom} of \Cref{rulesprovesfixpoint} and \irref{gl_dual}
        \begin{equation}
            \provgls{\fdia{\delta_{\cont_i}}{\ffalse}\fimply
                \fdia{
                    \contsetgamb{\cont_i}\gcom\gamb_{\cont_i  } \gand\freplace{\rlglmutoglrulesp{\glrulestorl{\gam}{\cont_i}}}{\contvar{\contrep}}{\contsetgamb{\contrep}\gcom\delta_{\contrep}} }{\ffalse}}.
            \label{eqindhy}
        \end{equation}
        (Although it is written as a vectorial fixpoint, formally it is a
        single variable fixpoint as defined by \Cref{bekic} and
        \Cref{rulesprovesfixpoint} applies.)
        By applying \irref{gl_mon}
        \[
            \provgls{\fdia{\contsetgamb{\cont_i}\gcom\delta_{\cont_i}}{\ffalse}\fimply
                \fdia{
                    \contsetgamb{\cont_i}\gcom\gamb_{\cont_i  } \gand\contsetgamb{\cont_i}\gcom\freplace{\rlglmutoglrulesp{\glrulestorl{\gam}{\cont_i}}}{\contvar{\contrep}}{\contsetgamb{\cont_i}\gcom\delta_{\contrep}} }{\ffalse}}.
        \]
        By the induction hypothesis on \(\gam\) and \irref{ax_sabm_setset} we obtain
        \[
            \provgls{\fdia{\contsetgamb{\cont_i}\gcom\delta_{\cont_i}}{\ffalse}\fimply
                \fdia{
                    \contsetgamb{\cont_i}\gcom\gamb_{\cont_i  } \gand\contsetgamb{\cont_i}\gcom\synsab{\gam}\gcom
                    \guardedsgl{\delta_{\contrep}}}{\ffalse}}.
        \]
        Combining these for all \(i\)
        \[
            \provgls{\fdia{\guardedsgl{\delta_\repdot}}{\ffalse}\fimply
                \fdia{
                    \guardedsgl{\gamb_{\contrep}} \gand\guardedsgl{\synsab{\gam}}\gcom
                    \guardedsgl{\delta_{\contrep}}}{\ffalse}}.
        \]
        By duality and \irref{gl_star_axiom} this implies
        \[
            \provgls{\fdia{\guardedsgl{\delta_\repdot}}{\ffalse}\fimply
                \fdia{
                    \gdstarp{\guardedsgl{\synsab{\gam}}}\gcom\guardedsgl{\gamb_{\contrep}}}{\ffalse}}.
        \]
        Using \irref{gl_mon}
        \[
            \provgls{\fdia{\contsetgamb{\cont_i}\gcom\guardedsgl{\delta_\repdot}}{\ffalse}\fimply
                \fdia{
                    \contsetgamb{\cont_i}\gcom\synsabp{\gdstar{\gam}}\gcom\guardedsgl{\gamb_{\contrep}}}{\ffalse}}.
        \]
        By \Iref{setguarded} of Claim \ref{threeobservations} the game \(\contsetgamb{\cont_i}\gcom\guardedsgl{\delta_\repdot}\)
        is equivalent to \(\contsetgamb{\cont_i}\gcom\delta_{\cont_i}\).
        Moreover, since not \(\scvars_i\) appears in the game, this is equivalent to the game
        \(\freplace{\rlglmutoglrules{\glrulestorlp{\gdstar{\gam}}{\cont_i}}}{\contvar{\contrep}}{\contsetgamb{\contrep}\gcom\gamb_{\contrep}}\) by axiom \irref{gl_f_removeunused}.\qedhere
    \end{caselist}

\end{proofE}

\begin{theoremE}[\GLSname Completeness][normal]\label{glrulescomplete}
    \Glsname is sound and complete.
    That is for all \glsshortname formulas \(\fml\):
    \[
        \validglrulesnbhd{\fml}
        \qquad\text{iff}\qquad
        \provgls{\fml}
    \]
\end{theoremE}

\begin{proofE}
    The $\Leftarrow$ implication holds by \Cref{sglsoundness}.
    For $\Rightarrow$
    by \Cref{normalform} and \Cref{provablenormalformsgl} assume that
    \(\fml\) is in normal form.
    If~\(\fml\) is a valid formula of \glsshortname,
    the translation \(\fdia{\glrulestorl{\fml}{\contzero}}{\ffalse}\)
    is a valid \glmurlname formula by \Cref{rulestorltranslation}, closed and in normal form.
    Hence \(\provrlglm{\fml}\) by \Cref{rlglcompletenbhd}.
    and by \Cref{rlmutorulesproofs} also
    \(\provgls{\rlglmutoglrulesfp{\glrulestorl{\fml}{\contzero}}}\).
    Finally by \Cref{rulesdoubletranslation} and \irref{gl_mp}, \(\provgls{\fml}\).
\end{proofE}

\subsection{Completion of Parikh's Calculus for \glshortname}

\Glsname is the \emph{expressive completion} of \glname as a fragment of the \lmuname (\Cref{expressiveness}).
Next a \emph{completion} of Parikh's proof calculus for \glname (\glshortname) is obtained from the complete \glsshortname
proof calculus from \Cref{completenessforsgl}.

The axiomatization of \glsname is an extension of \glname with the set of sabotage axioms
from \Cref{axiomsforgls}.
No additional rules are added.
By \Cref{glrulescomplete} every valid \glshortname formula is provable in the \glsshortname calculus.
Such a proof is almost a \glshortname proof, except that \irref{gl_mp} may introduce \glsshortname formulas
that are not expressible in \glshortname.
In this case the \sabactionname{}s in the introduced formula can be viewed as
\atgamename{s} from a distinguished set of \atgamename{s}.
For
a set of of \atgamename{}s
\(\axgatset\subset \gatoms\)
let \({\compaxsettilde[\axgatset]}\) be the set of axioms from \Cref{axiomsforgls}
which do \emph{not} mention \atgamename{}s from~\(\axgatset\).
Let \(\compaxset[\axgatset]\) be the set of \glshortname formulas obtained from \({\compaxsettilde[\axgatset]}\)
by replacing all \sabactionname{}s \(\gangelswin{\gatom}\) by some fresh \atgamename{} \(\tilde{\gatom}\).
Taken as axioms these \glshortname formulas (!) suffice to complete Parikh's proof calculus for game logic.
Write \(\provglax{\fml}\)
if there is a proof of~\(\fml\) in Parikh's calculus from these axioms
and \(\provglaxg{\fml}\) if there is a proof with the additional axioms \irref{mu_box_true} and \irref{mu_box_and}.
\begin{theorem}[\glshortname Completeness]\label{parikhcompletionnbhd}
    For any \glshortname formula~\(\fml\) let \(\axgatset\) be the set of \atgamename{}s in \(\fml\).
    Then:
    \begin{enumerate}
        \item \(\validglmunbhd{\fml}\) iff \(\provglax[\axgatset]{\fml}\)\label{glcomplnbhd}
        \item \(\validglruleskripke{\fml}\) iff \(\provglaxg{\fml}\)\label{glcomplkripe}
    \end{enumerate}
\end{theorem}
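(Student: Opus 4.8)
The plan is to transfer the completeness of \glrulesshortname (\Cref{glrulescomplete}) to Parikh's calculus extended by \(\compaxset\) by translating Hilbert proofs back and forth along two mutually inverse substitutions: \(\tau\), which replaces every sabotage game \(\gangelswin{\gatomb}\) (for \(\gatomb\in\gatomsb\)) by the atomic game \(\simalt{\gatomb}\in\gatomsc\) and \(\gdemonswin{\gatomb}\) by \(\gdual{\simalt{\gatomb}}\), and \(\sigma\), which replaces \(\simalt{\gatomb}\mapsto\gangelswin{\gatomb}\) and \(\gdual{\simalt{\gatomb}}\mapsto\gdemonswin{\gatomb}\). Working in normal form, the dual operator only ever stands on atomic games, so both substitutions are well-defined on formulas, preserve normal form, and commute with every formula/game constructor and with \(\fsnot{\cdot}\) and \(\gsnot{\cdot}\). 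By construction of \(\compaxset\), a formula lies in \(\compaxset\) exactly when \(\sigma\) sends it back to an instance over \(\gatomsb\) of one of the four \glrulesshortname axioms \irref{ax_sgl_asab}, \irref{ax_sgl_dsab}, \irref{ax_sgl_asabasab}, \irref{ax_sgl_asabdsab}.

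For the soundness (right-to-left) part of \Iref{glcomplnbhd}, I would take a Hilbert proof witnessing \(\provglax{\fml}\) and apply \(\sigma\) to every formula in it. Propositional tautologies remain tautologies; each of Parikh's axioms \irref{ax_gl_not}, \irref{ax_gl_test}, \irref{ax_gl_choice}, \irref{ax_gl_composition}, \irref{ax_gl_fps} maps to an instance of the same \glrulesshortname axiom (for \irref{ax_gl_not} using that \(\sigma\) commutes with \(\gsnot{\cdot}\)); each axiom of \(\compaxset\) maps to an instance of one of \irref{ax_sgl_asab}, \irref{ax_sgl_dsab}, \irref{ax_sgl_asabasab}, \irref{ax_sgl_asabdsab}; and \irref{rule_gl_mp}, \irref{rule_gl_star}, \irref{rule_gl_mon} are preserved. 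Hence \(\provglrules{\sigma(\fml)}\), and since \(\fml\) is a \glshortname formula over \(\gatomsd\) it mentions no \(\simalt{\gatomb}\), so \(\sigma(\fml)=\fml\) and \(\provglrules{\fml}\). Then \Cref{sglsoundness} gives \(\validglrulesnbhd{\fml}\), i.e. \(\validglmunbhd{\fml}\) by \Cref{semanticscoincide}. Re-running this with the three Kripke axioms \irref{ax_lm_box_and}, \irref{ax_lm_kripke}, \irref{ax_lm_box_true} (which \(\sigma\) again maps to instances of themselves, the relevant games there being atomic) gives the right-to-left part of \Iref{glcomplkripe}.

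For completeness (left-to-right) I would start from \(\validglmunbhd{\fml}\), equivalently \(\validglrulesnbhd{\fml}\) by \Cref{semanticscoincide}, and use \Cref{glrulescomplete} to obtain a \glrulesshortname proof of \(\fml\) — concretely, the proof built in the proof of \Cref{glrulescomplete} from \Cref{rlglcompletenbhd}, \Cref{rlmutorulesproofs} and \Cref{rulesdoubletranslation}. Inspecting those constructions shows that every sabotage game occurring in it is applied only to freshly introduced atomic games (the auxiliaries \(\gmod{\gvar},\gmodb{\gvar}\) of \(\rlglmutoglruless\), the auxiliaries \(\gatom_{\playernocon},\gatom_{\playeronecon},\gatom_{\playertwocon}\) of \Cref{rulesdoubletranslation}, and the fresh atoms demanded by side conditions of the sabotage axioms), so by renaming I may assume they all lie in \(\gatomsb\), disjoint from the \(\gatomsd\)-atoms of \(\fml\). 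Applying \(\tau\) to the whole proof then preserves Parikh's axioms, the Kripke axioms and all rules; turns each use of \irref{ax_sgl_asab} or \irref{ax_sgl_dsab} into a member of \(\compaxset\); and turns each use of the remaining sabotage axioms \irref{ax_sgl_asabdsabbranch}, \irref{ax_sgl_sabp}, \irref{ax_sgl_sabrem}, \irref{ax_sgl_sabnotyet}, and of the derived sabotage lemmas invoked in \Cref{rulesprovesfixpoint,technicaldeltalem,rulesdoubletranslation}, into a \glshortname formula which must be re-derived from Parikh's calculus together with \(\compaxset\) — which is exactly what putting the \(\tau\)-images of \irref{ax_sgl_asabasab} and \irref{ax_sgl_asabdsab} into \(\compaxset\) is designed to permit. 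Since \(\tau(\fml)=\fml\), the transformed proof witnesses \(\provglax{\fml}\); adjoining the three Kripke axioms throughout and invoking the Kripke analogue of \Cref{glrulescomplete} yields \(\provglaxg{\fml}\).

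The main obstacle is precisely that derivability claim: under \(\tau\), every sabotage axiom of \glrulesshortname that appears in the proof must become derivable from Parikh's calculus together with the four axiom schemes in \(\compaxset\). For \irref{ax_sgl_asab} and \irref{ax_sgl_dsab} this is immediate, but the branch, permanence, removal and not-yet-sabotaged axioms and the various auxiliary sabotage rules must be re-established over this meagre base, and carrying out those derivations — the reason \(\compaxset\) also carries the images of the derived axioms \irref{ax_sgl_asabasab} and \irref{ax_sgl_asabdsab} — is the technical heart of the argument. A secondary point that requires care is the bookkeeping that the particular \glrulesshortname proof produced by \Cref{glrulescomplete} confines all sabotage to fresh atomic games, so that they may be relocated into \(\gatomsb\) without colliding with the atoms of \(\fml\) from \(\gatomsd\).
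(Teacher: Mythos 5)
Your proposal matches the paper's proof: soundness is obtained by substituting \(\gangelswin{\gatomb}\) for \(\simalt{\gatomb}\) throughout a proof witnessing \(\provglax{\fml}\) and invoking \Cref{sglsoundness}, and completeness by taking the \glrulesshortname proof from \Cref{glrulescomplete}, observing that sabotage occurs only on fresh \atgamename{}s that can be relocated into \(\gatomsb\), and applying the reverse substitution so that the sabotage-axiom instances become members of \(\compaxset\). The ``main obstacle'' you flag --- re-deriving the branch, permanence, removal and not-yet-sabotaged axioms over the base \(\compaxset\) --- is not discharged in any more detail in the paper either: it simply asserts that the instances of \irref{ax_sgl_asab}, \irref{ax_sgl_dsab}, \irref{ax_sgl_asabasab} and \irref{ax_sgl_asabdsab} occurring in the constructed proof become instances of \(\compaxset\), implicitly relying on the fact that the auxiliary sabotage axioms used in \Cref{rulesprovesfixpoint} and \Cref{rulesdoubletranslation} are largely instances of \irref{ax_sgl_asab} or of the two schemes \irref{ax_sgl_asabasab} and \irref{ax_sgl_asabdsab} included in \(\compaxset\), so your caution here is warranted but does not distinguish your argument from the paper's.
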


\begin{proof}
    For $\Leftarrow$ (Soundness),
    suppose \(\provglax{\fml}\) and consider a \nstname \(\nst\).
    A proof for \(\provgls{\fml}\) can be obtained
    by uniformly replacing every one of the fresh \atgamename{}s
    \(\tilde{\gatom}\) (from \(\compaxset[\axgatset]\))
    by \(\gangelswin{\gatom}\) in the proof \(\provglax{\fml}\).
    Every instance of an axiom from \(\compaxset[\axgatset]\) is an instance
    of the original \glsshortname version.
    By soundness (\Cref{sglsoundness}) and \Cref{semanticscoincide}
    conclude that \(\validglrulesnbhd{\fml}\).

    For $\Rightarrow$ (Completeness),
    suppose \(\fml\) is a valid \glshortname formula.
    As in the proof of  \Cref{glrulescomplete} obtain a proof of
    \(\provgls{\fml}\).
    By the construction of this proof (\Cref{rlmutorulesproofs})
    the \glsshortname proof of \(\fml\) contains games of the form \(\gangelswin{\gatom}\)
    only for \atgamename{}s that do not appear in \(\fml\).
    The \glsshortname proof of \(\fml\) can be transformed into
    a \glshortname proof by uniformly replacing every game of the form
    \(\gangelswin{\gatom}\) by the fresh \atgamename~\(\simalt{\gatom}\).
    Instances of the axioms from \Cref{axiomsforgls} in the original proof become
    instances of axioms in \(\compaxset[\Gamma]\).
    Hence the modified version of the proof is a \glshortname proof of \(\fml\) from the axioms in~\(\compaxset[\Gamma]\).

    The case for Kripke structures is analogous using \(\provglgs\).
\end{proof}

The results in this section pave the way for using the proof calculus for \glsshortname
to resolve the question of completeness of Parikh's axiomatization
for \glshortname through a proof transformation by eliminating instances
of axioms from \(\compaxset\).

\section{Conclusion}

This paper studies how logic, games, and fixpoints meet by introducing two different extensions of \glname.
The first, \glsname, allows players to sabotage their opponent,
while the second, \glmuname, adds recursive games.

Not only is \glsname (\glsshortname) well-suited for describing and investigating games with rule changes by logical means but, surprisingly, \glsname has a number of additional advantages over \glname.
Unlike \glname (\glshortname), the extension \glsshortname allows \emph{exactly} the right amount of state to increase its expressive power to match the \lmuname, without sacrificing the desirable logical properties of \glname.

The advantage of \glmuname (\glmushortname) is that it allows the description of games
featuring arbitrarily nested (co)recursive games. Unlike ordinary \glname,
the extended version is significantly more expressive than
the \lmuname, although it remains syntactically close to \glshortname.
This paper identified the fragment of \glmushortname that corresponds exactly
to the \lmuname in expressiveness and transferred completeness of the \lmuname
to obtain a complete and natural proof calculus for this fragment.

It was shown that \glsname and the \lmuname are equivalent in expressiveness via a \emph{syntactically provable translation} going through the \rightlinear fragment of \glmuname.
Completeness of the natural Hilbert style proof calculus for \glsname
\glsshortname was obtained as a consequence.
This is in contrast to \glname \glshortname for which completeness of the natural
proof calculus is not known \cite{kloibhofer2023note}.
Completeness of \glsname was used to obtain the completeness of a modest extension of Parikh's proof calculus for \glname \glshortname.

\paragraph{Future Research}
The completeness of \glsname suggests an interesting approach to studying proof calculi for \glname.
It reduces the problem of the completeness of Parikh's axiomatization of game
logic to eliminating instances of the new axioms in a proof.
Equiexpressiveness with \lmushortname indicates that \atgamename{s} for sabotage are worth studying further.

The translation from \glsshortname into \lmushortname leads to a non-elementary blow-up in formula length.
This raises the question whether this increase is necessary and if better algorithms exist that directly target  the model checking and satisfiability problems of \glsshortname.

\begin{acks}
    An Alexander von Humboldt Professorship supported this research.
\end{acks}

\nocite{arxivversion} 

\balance
\bibliographystyle{ACM-Reference-Format}
\bibliography{GLLmu.bib}

\iflongversion

    \newpage
    \appendix
    \pratendSetLocal{category=appendix}

    \section{Proof Calculus for \glsshortname}

    The proof calculus for \glsshortname was introduced in \Cref{completenessforsgl}.
    In this appendix the axioms are recalled and soundness proved.
    Some derived axioms that are used in the proofs in \Cref{appproofs} are listed and the derivations explained.

    \subsection{Sabotage Axioms} \label{appendixsabotageaxioms}
    Recall the following axioms from \Cref{completenessforsgl}.

    \begin{calculuscollection}
        {    \renewcommand{\linferenceRuleNameSeparation}{\;\;}
            \begin{calculus}
                \cinferenceRuleQuote{sgl_sima}
                \cinferenceRuleQuote{sgl_dsima}
            \end{calculus}
            \qquad\qquad\;
            \begin{calculus}
                \cinferenceRuleQuote{sgl_simasima}
                \cinferenceRuleQuote{sgl_simasimd}
            \end{calculus}
            \\
            \begin{calculus}
                \cinferenceRuleQuote{sgl_pushin}
                \cinferenceRuleQuote{gl_f_removeunused}
                \cinferenceRuleQuote{ax_sabm_guardedangel}
                \cinferenceRuleQuote{ax_sabm_remembersimple}
                \cinferenceRuleQuote{ax_sabm_branch}
            \end{calculus}
        }{}{}
        \\
    \end{calculuscollection}
    Recall the notion of a \localcontname.
    Fix fresh \atgamename{} symbols \(\repdot_1,\ldots,\repdot_i\).
    A \localcontname \(\localcont\) is a normal-form formula \(\fml\) potentially mentioning these fresh symbols positively.
    Similarly a \emph{game-context} \(\gamcont\) is a normal-form game \(\gam\) potentially mentioning the \(\repdot_j\) positively.
    The symbols \(\repdot_j\) will be substituted by new games.
    The appearance of these symbols in \(\localcont\) (\(\gamcont\)) is indicated by the notation \(\localcont(\repdot_1,\ldots,\repdot_i)\) (\(\gamcont(\repdot_1,\ldots,\repdot_i)\)).
    The substitution \(\localcont(\gam_1,\ldots, \gam_i)\) is obtained from \(\localcont\) by replacing every \(\repdot_j\) by \(\gam_j\), where \(\gam_j\) are games not mentioning the \(\repdot_k\).
    Similarly write \(\gamcont(\gam_1,\ldots,\gam_i)\) for the game obtained from \(\gamcont\) in the same way.
    In the axioms \irref{gl_f_removeunused}, \irref{ax_sabm_guardedangel} ,\irref{ax_sabm_remembersimple}, \irref{ax_sabm_branch} the \localcontname is of the form \(\localcont(\repdot_1)\).
    In axiom \irref{sgl_pushin} the \localcontname is of the form \(\localcont(\repdot_1,\ldots,\repdot_i)\)
    and \(\localcont(\gam_\repdot\gcom\gdtest{\ffalse})\) stands for \(\localcont(\gam_1\gcom\gdtest{\ffalse},\ldots,\gam_i\gcom\gdtest{\ffalse})\).

    Recall also the following side-conditions used in the axioms.
    A \emph{\localcontname \(\localcont\) is \(\gatom\)-free} if it does not mention \(\gatom\), \(\gdual{\gatom}\), \(\gangelswin{\gatom}\) or \(\gdemonswin{\gatom}\).
    And \emph{\(\gatom,\gdual{\gatom}\notin \localcont\)} means that neither \(\gatom\) nor \(\gdual{\gatom}\) appears in \(\localcont\).
    Say \emph{\(\gangelswin{\gam}\) guards \(\gatomb\) in \(\localcont\)} if \(\gatomb\), \(\gdual{\gatomb}\) and \(\gangelswin{\gatom}\) appear only in the form \(\gatom\gcom\gatomb\), \(\gatom\gcom\gdual{\gatomb}\),
    and \(\gatom\gcom\gangelswin{\gatom}\) respectively.
    And \emph{$\gatom$ remembers $\gangelswin{\gatomb}$ in $\localcont$} if
    $\gangelswin{\gatom}$
    appears in \(\localcont\) only as \(\gangelswin{\gatom}\gcom\gangelswin{\gatomb}\)
    and \(\gdemonswin{\gatomb}\) only in \(\gdemonswin{\gatom}\gcom\gdemonswin{\gatomb}\).
    Similarly for \(\gdemonswin{\gatomb}\).
    Finally \emph{\(\gesabotage{\gatom_i}\) only in \(\gatomvec\)} means that the games \(\gangelswin{\gatom_i}\), \(\gdemonswin{\gatom_i}\)
    appear only in the memory games \(\sabset{j}\) anywhere in an instance of the axiom.

    \begin{proof}[Proof of Soundness]
        Soundness of the \glname axioms and rules goes through for \glsshortname just like
        for \glname \cite{DBLP:conf/focs/Parikh83}.
        The sabotage axioms are shown separately.

        \begin{axcaselist}[- \emph{Axiom}]
            \case{\irref{sgl_sima}, \irref{sgl_dsima}} Immediate by definition of sabotage effects.

            \case{\irref{sgl_simasima}, \irref{sgl_simasimd}} Immediate by definition of sabotage effects.

            \case{\irref{sgl_pushin}}
            By induction on normal form \localcontname \(\localcont\)
            and game-contexts \(\gamcont\) in which \(\repdot_i\) appears only positively prove that
            \begin{align*}
                 &
                \semglgc{\fdia{\gangelswin{\gatom}}{
                        \localcont(\gam\gcom\gdtest{\ffalse})}}=
                \semglgc{\localcont(\gangelswin{\gatom}\gcom\gam\gcom\gdtest{\ffalse})}
                \\
                 &
                \semglgc{\gangelswin{\gatom}\gcom\gamcont(\gam_\repdot\gcom\gdtest{\ffalse})}=
                \semglgc{\gamcont(\gangelswin{\gatom}
                    \gcom\gam_\repdot\gcom\gdtest{\ffalse}
                    )\gcom\gangelswin{\gatom}}
            \end{align*}
            Most cases are straightforward.
            The only interesting case is for game-contexts of the form \(\gstar{\gamcont(\repdot_i)}\).
            Define the fixpoint iteration
            \begin{align*}
                 &
                A_0 = U
                 &   &
                A_{\gamma+1} = \semglgc{\gamcont(\gam\gcom\gdtest{\ffalse})}(A_\gamma) \cup U
                \\
                 &
                B_0 = \setpwins{\gatom}{\playeronecon}{U}
                 &   &
                B_{\gamma+1} = \semglgc{\gamcont(\gangelswin{\gatom}\gcom\gam\gcom\gdtest{\ffalse})}(\setpwins{\gatom}{\playeronecon}{(B_\gamma)}) \cup \setpwins{\gatom}{\playeronecon}{U}
            \end{align*}
            where \(A_\lambda = {\textstyle\bigcup_{\gamma<\lambda}}A_\gamma\) and \(B_\lambda = {\textstyle\bigcup_{\gamma<\lambda}}B_\gamma\) for limit ordinals \(\gamma\).
            By induction it is easy to prove that \(\setpwins{\gatom}{\playeronecon}{(A_\gamma)}=B_\gamma\)
            for all \(\gamma\) as required.

            \case{\irref{gl_f_removeunused}}
            A set \(\xpel \subseteq \nstdom{\nst}\times\contexts\)
            is \emph{\(\gatom\)-invariant} if \(\xpel=\setpwins{\gatom}{\playeronecon}{\xpel}=\setpwins{\gatom}{\playertwocon}{\xpel}\).
            By induction as before prove that
            for all \(\gatom\)-invariant \(\xpel\)
            \[\semglgc{\localcont(\gangelswin{\gatom})}=
                \semglgc{\localcont(\gtest{\ftrue})}
                \qquad
                \semglgc{\gamcont(\gangelswin{\gatom})}(\xpel)=
                \semglgc{\gamcont(\gtest{\ftrue})}(\xpel)
            \]
            and these sets are \(\gatom\)-invariant.
            The induction is straightforward.

            \case{\irref{ax_sabm_guardedangel}}
            For a set \(\xpel \subseteq \nstdom{\nst}\times\contexts\) let \(\xpel_{\gatom=\playeronecon} = \{(\omega,\cont) \in \xpel : \cont(\gatom) = \playeronecon\}\) and analogously \(\xpel_{\gatom=\playertwocon}\) and \(\xpel_{\gatom\neq\playernocon}\).
            In this proof call a set \(\xpel\) \emph{guarded} if
            \[\xpel = \xpel_{\gatom=\playeronecon}\cup
                \setpwins{\gatomb}{\playeronecon}{(\xpel_{\gatom=\playertwocon})}
                = \xpel_{\gatom=\playeronecon}\cup
                \setpwins{\gatomb}{\playertwocon}{(\xpel_{\gatom=\playertwocon})}\]
            By induction on \localcontname{s} \(\localcont\)
            and game-contexts \(\gamcont\) in which \(\gangelswin{\gatom}\) guards \(\gatomb\) prove that
            for all guarded sets \(\xpel\)
            \begin{align*}
                 & (\semglgc{\localcont(\gdemonswin{\gatom})})_{\gatom\neq\playernocon}=
                (\semglgc{\localcont(\gdemonswin{\gatom}\gcom\gesabotage{\gatomb})})_{{\gatom}\neq\playernocon}
                \\
                 & (\semglgc{\gamcont(\gdemonswin{\gatom})}(\xpel))_{\gatom\neq\playernocon}=
                (\semglgc{\gamcont(\gdemonswin{\gatom}\gcom\gesabotage{\gatomb})}(\xpel))_{{\gatom}\neq\playernocon}
            \end{align*}
            and these sets are guarded.
            Note that \(\xpel_{\gatom\neq\playernocon}=A\) for any guarded set \(\xpel\).

            \case{\irref{ax_sabm_remembersimple}}
            Without loss of generality \(\gatom\) remembers \(\gangelswin{\gatomb}\).
            For the purposes of this proof call a set \(\xpel\) remembering if
            \[\xpel = \xpel_{\gatom=\playeronecon,\gatomb=\playeronecon} \cup\xpel_{\gatom=\playertwocon}\]
            By induction as before prove for all remembering sets \(\xpel\)
            \begin{align*}
                 & (\semglgc{\localcont({\gatom})})_{\gatom\neq\playernocon}=
                (\semglgc{\localcont(\gatom\gcom\gangelswin{\gatomb})})_{\gatom\neq\playernocon}
                \\
                 & (\semglgc{\gamcont({\gatom})}(\xpel))_{\gatom\neq\playernocon}=
                (\semglgc{\gamcont(\gatom\gcom\gangelswin{\gatomb})}(\xpel))_{\gatom\neq\playernocon}
            \end{align*}
            The induction is straightforward.
            Note that \(\xpel_{\gatom\neq\playernocon}=A\) for any remembering set \(\xpel\).

            \case{\irref{ax_sabm_branch}}
            Let
            \[E = \nstdom{\nst}\times\{\cont\in\contexts : \mexists{i\leq n} \; \cont(\gatom_i)=\playeronecon \mand \mforall{j\neq i} \cont(\gatom_j) = \playertwocon\}.\]
            By simultaneous induction on \(\localcont\) and \(\gamcont\) the following equalities are easy to see
            \begin{align*}
                \semglgc{ \localcont(\gamb)}
                \cap E & =
                \semglgc{\localcont(\bigcup_{1\leq j \leq n}\sabread{j}\gcom\sabset{j}\gcom\gamb)}\cap E
                \\
                \semglgc{\gamcont(\gamb)}(U)\cap E
                       & = \semglgc{\gamcont(\gamb)}(U\cap E)\cap E
                \\
                       & =
                \semglgc{\gamcont(\bigcup_{1\leq j \leq n}\sabread{j}\gcom\sabset{j}\gcom\gamb)}(U\cap E)\cap E
                \\
                       & =
                \semglgc{\gamcont(\bigcup_{1\leq j \leq n}\sabread{j}\gcom\sabset{j}\gcom\gamb)}(U)\cap E
                \qedhere
            \end{align*}
        \end{axcaselist}
    \end{proof}

    \subsection{Derived Sabotage Axioms} \label{appendixderivedaxioms}
    The following axioms are easy to derive from the axioms of \Cref{subscalcproof} which are recalled in \Cref{appendixsabotageaxioms}.

    \begin{center}
        \begin{calculuscollection}
            \begin{calculus}
                \cinferenceRule[ax_sgl_asabor|${\sim}$\llap{$\for$}]{trap distribute or axiom}
                {
                    \fdia{\gangelswin{\gatom}}{(\fml\for\fmlb)}
                    \fequiv
                    \fdia{\gangelswin{\gatom}}{\fml}\for \fdia{\gangelswin{\gatom}}{\fmlb}
                }{}
                \cinferenceRule[ax_sgl_asaband|${\sim}$\llap{$\fand$}]{trap distribute and axiom}
                {
                    \fdia{\gangelswin{\gatom}}{(\fml\fand\fmlb)}
                    \fequiv
                    \fdia{\gangelswin{\gatom}}{\fml}\fand \fdia{\gangelswin{\gatom}}{\fmlb}
                }{}
                \cinferenceRule[sgl_dpass|$\gangelswin{P}$]{angel wins axiom}
                {
                    \fdia{\gangelswin{\gatom}\gcom\gam}{\fml}
                    \fequiv
                    \fdia{\gam\gcom\gangelswin{\gatom}}{\fml}\quad
                }{$\gam$ is $\gatom$-free}
                \cinferenceRule[sgl_simzero|$\sim\kern-3pt0$]{angel wins axiom}
                {
                    \fdia{\gangelswin{\gatom}}{\fml}
                    \fequiv
                    \fml\quad
                }{$\fml$ is $\gatom$-free}
                \cinferenceRule[sgld_simtrue|$\sabsym\kern-2pt\ftrue$]{angel wins axiom}
                {
                    \fdia{\gangelswin{\gatom}}{\ftrue}
                }{}
                \cinferenceRule[sgld_dsimtrue|$\dsabsym\kern-2pt\ftrue$]{angel wins axiom}
                {
                    \fdia{\gdemonswin{\gatom}}{\ftrue}
                }{}
            \end{calculus}
        \end{calculuscollection}
    \end{center}

    \begin{proof}
        These axioms can be derived as follows:
        \begin{axcaselist}[- \emph{Axiom}]

            \case{\irref{ax_sgl_asabor}}
            Let \(\localcont(\repdot_1,\repdot_2) = \fdia{\repdot_1}{\ftrue}\for\fdia{\repdot_2}{\ftrue}\)
            and note that from the instance
            \[\fdia{\gangelswin{\gatom}}{\localcont(\gtest{\fml}\gcom\gdtest{\ffalse},\gtest{\fml}\gcom\gdtest{\ffalse})
                    \fequiv \localcont(\gangelswin{\gatom}\gcom\gtest{\fml}\gcom\gdtest{\ffalse},\gangelswin{\gatom}\gcom\gtest{\fml}\gcom\gdtest{\ffalse})}\]
            of \irref{sgl_pushin} the equivalence follows easily with \irref{gl_compose} and \irref{gl_dtest}.

            \case{\irref{ax_sgl_asaband}} Similar to \irref{ax_sgl_asabor}.

            \case{\irref{sgl_dpass}}
            Let \(\localcont(\repdot) = \fdia{\gam\gcom\repdot}{\ftrue}\).
            And note that the instance
            \[\fdia{\gangelswin{\gatom}}{\localcont(\gtest{\fml}\gcom\gdtest{\ffalse})
                    \fequiv \localcont(\gangelswin{\gatom}\gcom\gtest{\fml}\gcom\gdtest{\ffalse})}\]
            of \irref{sgl_pushin} for this \localcontname
            the axiom is derivable with \irref{gl_compose} and \irref{gl_dtest}.

            \case{\irref{sgl_simzero}}
            \(\localcont(\repdot) = \fdia{\gtest{\fml}\gcom\gdtest{\ffalse}\gcom\repdot}{\ftrue}\).
            Then by the instance
            \[
                \fdia{\gangelswin{\gatom}}{\localcont(\gdtest{\ffalse})
                    \fequiv \localcont(\gangelswin{\gatom}\gcom\gdtest{\ffalse})}
            \]
            of \irref{sgl_pushin} the axiom is easily derived with \irref{gl_compose} and \irref{gl_dtest}

            \case{\irref{sgld_simtrue}} By \irref{sgl_simzero}.

            \case{\irref{sgld_dsimtrue}} Note that \(\fnot\fdia{\gangelswin{\gatom}}{\ffalse}\) is an instance of  \irref{sgl_simzero}. the axiom is derivable with \irref{gl_dual}.
        \end{axcaselist}
    \end{proof}

    \subsection{Derived Sabotage Memory Axioms} \label{appendixderivedaxiomsforsabotage}
    The following derived axioms capture the behaviour of sabotage memory.
    \begin{center}
        \begin{calculuscollection}
            \begin{calculus}
                \cinferenceRule[ax_sabm_setnot|!$\fnot$]{angel wins axiom}
                {
                    \fdia{\sabsettruenv{\gatomvec}}{(\fnot\fml) }
                    \fequiv
                    \fnot\fdia{\sabsettruenv{\gatomvec}}{\fml}
                }{}
                \cinferenceRule[ax_sabm_setor|!$\lor$]{angel wins axiom}
                {
                    \fdia{\sabsettruenv{\gatomvec}}{(\fml\for\fmlb) }
                    \fequiv
                    \fdia{\sabsettruenv{\gatomvec}}{\fml}\for\fdia{\sabsettruenv{\gatomvec}}{\fmlb}
                }{}
                \cinferenceRule[ax_sabm_setand|!$\land$]{angel wins axiom}
                {
                    \fdia{\sabsettruenv{\gatomvec}}{(\fml\fand\fmlb) }
                    \fequiv
                    \fdia{\sabsettruenv{\gatomvec}}{\fml}\fand\fdia{\sabsettruenv{\gatomvec}}{\fmlb}
                }{}
                \cinferenceRule[ax_sabm_pass|!P]{angel wins axiom}
                {
                    \fdia{\sabset{i}\gcom\gam}{\fml}
                    \fequiv
                    \fdia{\gam\gcom\sabset{i}}{\fml}\quad
                }{if $\gam$ is $\gatom$-free}
                \cinferenceRule[ax_sabm_setexchset|${!}\kern-3pt\leftrightarrow!$]{angel wins axiom}
                {
                    \fdia{\sabset[\gatom]{i}\gcom\sabset[\gatomb]{j}}{\fml}
                    \fequiv
                    \fdia{\sabset[\gatomb]{j}\gcom\sabset[\gatom]{i}}{\fml}\quad
                }{if $\gatomb\neq\gatom$}
                \cinferenceRule[ax_sabm_setchecktrue|$!?$]{angel wins axiom}
                {
                    \fdia{\sabset{i}\gcom\sabread{i}}{\fml}
                    \fequiv
                    \fdia{\sabset{i}}{\fml}
                }{}
                \cinferenceRule[ax_sabm_setcheckfalse|$\fnot!?$]{angel wins axiom}
                {
                    \fnot\fdia{\sabset{i}\gcom\sabread{j}}{\fml}\quad
                }{$i\neq j$}
                \cinferenceRule[ax_sabm_setnone|V!]{angel wins axiom}
                {
                    \fdia{\sabset{i}}{\fml}
                    \fequiv
                    \fml\quad
                }{if $\fml$ is $\gatomvec$-free}
                \cinferenceRule[ax_sabm_setset|$!!$]{angel wins axiom}
                {
                    \fdia{\sabset{i}\gcom\sabset{j}}{\fml}
                    \fequiv
                    \fdia{\sabset{j}}{\fml}
                }{}
                \cinferenceRule[ax_sabm_guardedsabotage|${\cong}$!]{angel wins axiom}
                {
                    \fdia{\sabset{i}}{(
                        \freplacea{\gam}{\gvar}{\sabset{j}}\fequiv
                        \freplacea{\gam}{\gvar}{\sabset{j}\gcom\gangelswin{\gatomb}}
                        )}
                    \quad
                }{$\sabset{i}$ guards $\gatomb$ in $\localcont$}
                \cinferenceRule[ax_sabm_remember|!$\parallel$]{angel wins axiom}
                {
                    \fdia{\sabset{i}}{
                        (\localcont(\sabread{j},\sabread{\repdot})
                        \fequiv\localcont(\sabread{j}\gcom\gesabotage{\gatomb},\sabread{\repdot}\gcom\gatomb^{\mp \gduals})
                        )}
                    \quad
                }{}
            \end{calculus}
        \end{calculuscollection}
    \end{center}
    In axiom \irref{ax_sabm_remember} the context is of the form \(\localcont(\repdot_j, \repdot_1,\ldots\repdot_{j-1},\repdot_{j+1}, \ldots, \repdot_n)\), where \(n\) is the length of \(\gatomvec\).
    \begin{proof}
        These axioms can be derived as follows:

        \begin{axcaselist}[- \emph{Axiom}]
            \case{\irref{ax_sabm_setnot}} By \irref{gl_dual} since \(\gdual{\sabsettruenv{\gatomvec}}\)
            is \(\sabsetfalsenv{\gatomvec}\).

            \case{\irref{ax_sabm_setor}} Note that by \irref{gl_dual} and \irref{ax_sgl_asaband}
            \[\fdia{\gdemonswin{\gatom}}{(\fml\for\fmlb)}
                \fequiv
                \fdia{\gdemonswin{\gatom}}{\fml}\for \fdia{\gangelswin{\gatom}}{\fmlb}\]
            for all \atgamename{}s \(\gatom\) and formulas \(\fmlb\).
            The axiom is derived by repeated application of this and \irref{ax_sgl_asabor}.

            \case{\irref{ax_sabm_setand}} Similar to \irref{ax_sabm_setor}.

            \case{\irref{ax_sabm_pass}} By repeated applications of \irref{sgl_dpass} and the dual version
            \[\fdia{\gdemonswin{\gatom}\gcom\gam}{\fml}
                \fequiv
                \fdia{\gam\gcom\gdemonswin{\gatom}}{\fml}\]
            obtained with axiom \irref{gl_dual}.

            \case{\irref{ax_sabm_setexchset}} This is an instance of \irref{ax_sabm_pass}.

            \case{\irref{ax_sabm_setchecktrue}} Repeatedly using \irref{sgl_dpass} to rearrange the
            atomic \sabactionname{}s in \(\sabset{i}\) and then using \irref{sgl_sima}
            as well as the with \irref{gl_dual} derived dual version
            \[ \fdia{\gdemonswin{\gatom}\gcom\gdual{\gatom}}{\fml}
                \fequiv
                \fml\]

            \case{\irref{ax_sabm_setcheckfalse}} Repeatedly using \irref{sgl_dpass} to rearrange the formula is equivalent to a formula of the form
            \[\fnot\fdia{\gdemonswin{\gatom_{k_1}}\gcom\ldots\gcom\gdemonswin{\gatom_{k_\ell}}\gcom\gangelswin{{\gatom_i}}\gcom\gdemonswin{\gatom_j}\gcom\gatom_j}{\fml}\]
            for some \(k_1,\ldots,k_\ell\).
            Hence by \irref{gl_mon} and \irref{sgl_dsima} this is equivalent to
            \[\fnot\fdia{\gdemonswin{\gatom_{k_1}}\gcom\ldots\gcom\gdemonswin{\gatom_{k_\ell}}\gcom\gangelswin{{\gatom_i}}}{\ffalse}\]
            Hence this is provable by \irref{gl_dual}, \irref{sgld_simtrue} and repeated applications of \irref{sgld_simtrue} with \irref{gl_mon}.

            \case{\irref{ax_sabm_setnone}}
            Simple by application of \irref{sgl_simzero} and the by \irref{gl_dual} derivable dual version.

            \case{\irref{ax_sabm_setset}}
            By \irref{ax_sabm_pass}, \irref{sgl_simasima} and \irref{sgl_simasimd}.

            \case{\irref{ax_sabm_guardedsabotage}} Immediate by \irref{ax_sabm_pass} and \irref{ax_sabm_guardedangel}.

            \case{\irref{ax_sabm_remember}} This is derived from \irref{ax_sabm_remembersimple},
            since also \(\gatomvec_j\) remembers \(\gesabotage{\gatomb}\).
        \end{axcaselist}
    \end{proof}

    \section{Formula and Game \Rankname}
    \label{ranksection}
    The proofs of \Cref{glmurltosgl} and \Cref{provableequivalentthereandback} require the notion of a \rankname of a formula.
    This facilitates \emph{inductive} arguments only on \emph{formulas} of a game logic, without a simultaneous induction on games.
    This is crucial for such properties as \Cref{provableequivalentthereandback}, since the provability is a property only of \emph{formulas}.
    The \rankname of a \glmuname formula and a \glmuname game
    are defined by (simultaneous) structural induction on formulas:
    \begin{align*}
         & \rank(\patom) = 0
         & \quad                                                                  &
        \rank(\fml\for\fmlb) = \max\{\rank(\fml),\rank(\fmlb)\} + 1
        \\
         & \rank(\fnot\patom) = 0
         &                                                                        &
        \rank(\fml\fand\fmlb) = \max\{\rank(\fml),\rank(\fmlb)\} + 1
        \\
         & \rank(\fdia{\gam}\fml) = \rank(\gam) + \rank(\fml) + 1 \span\span\span
    \end{align*}
    and games:
    \begin{align*}
         & \rank(\gatom) = 0
         & \,                                                                    & \rank(\gtest{\fml}) = \rank(\fml) + 1
        \\
         & \rank(\gvar) = 0
         &                                                                       & \rank(\gdtest{\fml}) = \rank(\fml) + 1
        \\
         & \rank(\gdual{\gatom}) = 0
         &                                                                       & \rank(\gam\gcom\gamb) = \rank(\gam) + \rank(\gamb) + 2
        \\ & \rank(\glfp{\gvar}{\gam}) = \rank(\gam) + 1
         &                                                                       & \rank(\ggfp{\gvar}{\gam}) = \rank(\gam) + 1
        \\
         & \rank(\gam\gor\gamb) = \max\{\rank(\gam),\rank(\gamb)\} + 2\span\span
        \\ & \rank(\gam\gand\gamb) = \max\{\rank(\gam),\rank(\gamb)\} + 2\span\span
    \end{align*}
    The \rankname defines a well-order \(\rankless\) on the set of normal-form formulas by
    \(\fml\rankless\fmlb\) iff \(\rank(\fml)<\rank(\fmlb)\).
    Note that \(\rank(\fsnot{\fml})=\rank(\fml)\) and \(\rank(\gsnot{\gam})=\rank(\gam)\).
    This order is used to give an inductive proof of \Cref{provableequivalentthereandback}.
    The following hold for \(i\in \{0,1\}\):
    \begin{align*}
         & \fml_i \rankless \fml_1\for\fml_2
         &                                                                                           &
        \fml \fand\fmlb \rankless \fdia{\gtest{\fml}}{\fmlb}
         &                                                                                           &
        \fdia{\gam_i}{\fml} \rankless \fdia{\gam_1\gor\gam_2}{\fml}
        \\
         & \fml_i \rankless \fml_1\fand\fml_2
         &                                                                                           &
        \fsnot{\fml}\for\fmlb \rankless \fdia{\gdtest{\fml}}{\fml}
         &                                                                                           &
        \fdia{\gam_i}{\fml} \rankless \fdia{\gam_1\gand\gam_2}{\fml}
        \\
         & \fdia{\gam}{\fml} \rankless\fdia{\glfp{\gvar}{\gam}}{\fml}
         &                                                                                           &
        \fdia{\gam}{\fml} \rankless\fdia{\ggfp{\gvar}{\gam}}{\fml}
        \\
         & \fdia{\gam}{\fdia{\gamb}{\fml}} \rankless \fdia{\gam\gcom\gamb}{\fml}\span\span\span\span
    \end{align*}
    The last property on games \(\gam\gcom\gamb\) is the key property of the \rankname.

    \section{Fixpoint Lemmas}
    \label{fixpointlemmaappendix}

    For any \valname \(\val\) and any \(\xpel\in\pow{\nstdom{\nst}}\),
    let \(\valconst[\val]{\xpel}\) be the modified constant \valname
    with \(\valconst[\val]{\xpel}(\pvar)=\nfuncconst{\val(\pvar)(\xpel)}\).

    \begin{lemma}\label{keylemmamu}
        Suppose \(\fml\) is a \flcnameshort formula
        without composition and \(\xpel\in\pow{\nstdom{\nst}}\).
        Then
        \[\semflcf[\nst]{\val}{\fml}(\xpel)=\semflcf[\nst]{\valconst[\val]{\xpel}}{\fml}(\xpel).\]
        Moreover
        \begin{enumerate}
            \item \(\semflcf[\nst]{\val}{\flfp{\pvar}{\fml}}(\xpel)
                  =
                  \flfp{\ypel}{(\semflcf[\nst]{\valsubst[\val]{\pvar}{\nfuncconst{\ypel}}}{\fml}(\xpel))}\)
            \item \(\semflcf[\nst]{\val}{\fgfp{\pvar}{\fml}}(\xpel)
                  =
                  \fgfp{\ypel}{(\semflcf[\nst]{\valsubst[\val]{\pvar}{\nfuncconst{\ypel}}}{\fml}(\xpel))}\)
        \end{enumerate}
    \end{lemma}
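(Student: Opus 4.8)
The plan is to establish the first displayed identity---call it the \emph{freezing equation}---by structural induction on the composition-free formula $\fml$, uniformly over all valuations $\val$ and all $\xpel\in\pow{\nstdom{\nst}}$, and then to read off the two ``moreover'' clauses from a single application of \Cref{generalfixpointconstant}. The base cases are immediate: for $\fid$, $\patom$ and $\fnot\patom$ neither side mentions the valuation, and for a variable $\pvar$ one uses that the constant effectivity function $\nfuncconst{Y}$ returns $Y$ on every argument, so $\semflcf[\nst]{\valconst[\val]{\xpel}}{\pvar}(\xpel)=\nfuncconst{\val(\pvar)(\xpel)}(\xpel)=\val(\pvar)(\xpel)=\semflcf[\nst]{\val}{\pvar}(\xpel)$ (reading $\valconst[\val]{\xpel}(\pvar)$ as the constant effectivity function with value $\val(\pvar)(\xpel)$).

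For the Boolean connectives the step is routine, since $\semflcf[\nst]{\val}{\fml\for\fmlb}(\xpel)=\semflcf[\nst]{\val}{\fml}(\xpel)\union\semflcf[\nst]{\val}{\fmlb}(\xpel)$ and likewise for $\fand$, so the two induction hypotheses combine. For the modalities $\fdia{\gatom}{\fml}$ and $\fbox{\gatom}{\fml}$ the outer function ($\nstnf{\nst}(\gatom)$, respectively $\wregdual{\nstnf{\nst}(\gatom)}$) is independent of $\val$, and since $\semflcf[\nst]{\val}{\fdia{\gatom}{\fml}}(\xpel)=\nstnf{\nst}(\gatom)(\semflcf[\nst]{\val}{\fml}(\xpel))$ the induction hypothesis for $\fml$ \emph{at the argument $\xpel$} is exactly what is needed. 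The case $\fml\fcom\fmlb$ does not arise, and this is the whole point of the hypothesis: composition is the only FLC construct whose semantics feeds a subformula an argument other than $\xpel$, so its absence is what lets the valuation be frozen at $\xpel$.

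The heart of the argument is the fixpoint case $\flfp{\pvar}{\fml}$; the case $\fgfp{\pvar}{\fml}$ is dual, using the greatest-fixpoint part of \Cref{generalfixpointconstant}. Consider the operator $\nftonffunc:\nfuncb\mapsto\semflcf[\nst]{\valsubst[\val]{\pvar}{\nfuncb}}{\fml}$ on $\wregs{\nstdom{\nst}}$, which is monotone as already shown. Its factoring hypothesis in \Cref{generalfixpointconstant}, namely $\nftonffunc(\nfuncb)(\xpel)=\nftonffunc(\nfuncconst{\nfuncb(\xpel)})(\xpel)$, is an instance of the induction hypothesis for $\fml$: applying the freezing equation to $\fml$ with the valuation $\valsubst[\val]{\pvar}{\nfuncb}$ and again with $\valsubst[\val]{\pvar}{\nfuncconst{\nfuncb(\xpel)}}$ rewrites both sides as $\semflcf[\nst]{\sigma}{\fml}(\xpel)$ for the one frozen valuation $\sigma$ sending $\pvar\mapsto\nfuncconst{\nfuncb(\xpel)}$ and every other variable $\pvarb\mapsto\nfuncconst{\val(\pvarb)(\xpel)}$ (using $\nfuncconst{\nfuncb(\xpel)}(\xpel)=\nfuncb(\xpel)$). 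Then \Cref{generalfixpointconstant} gives $\semflcf[\nst]{\val}{\flfp{\pvar}{\fml}}(\xpel)=\flfp{\ypel}{(\semflcf[\nst]{\valsubst[\val]{\pvar}{\nfuncconst{\ypel}}}{\fml}(\xpel))}$, which is exactly ``moreover'' clause~1. Running the identical computation with $\valconst[\val]{\xpel}$ in place of $\val$ turns $\semflcf[\nst]{\valconst[\val]{\xpel}}{\flfp{\pvar}{\fml}}(\xpel)$ into the least fixpoint of $\ypel\mapsto\semflcf[\nst]{\sigma_{\ypel}}{\fml}(\xpel)$, where $\sigma_{\ypel}$ sends $\pvar$ to $\nfuncconst{\ypel}$ and every other variable to its $\xpel$-freezing; by one further use of the induction hypothesis for $\fml$ this equals $\semflcf[\nst]{\valsubst[\val]{\pvar}{\nfuncconst{\ypel}}}{\fml}(\xpel)$, so the two fixpoint bodies agree pointwise at $\xpel$, the fixpoints coincide, and the freezing equation holds for $\flfp{\pvar}{\fml}$.

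The main obstacle I anticipate is entirely bookkeeping: keeping the substitution into $\pvar$ separate from the freezing of the whole valuation at $\xpel$, checking that the two commute as required, and confirming that the induction does not secretly use the ``moreover'' for the formula being treated. It does not: the fixpoint case of the freezing equation is obtained from the induction hypothesis for the strictly smaller body $\fml$ together with \Cref{generalfixpointconstant}, and the two ``moreover'' clauses are then just the displayed consequence of that same application. Conceptually \Cref{generalfixpointconstant} is tailor-made for this step, so once its factoring hypothesis is recognised as an instance of the induction hypothesis, the remaining work is mechanical.
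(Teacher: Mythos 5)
Your proposal is correct and follows essentially the same route as the paper: a structural induction on the composition-free formula in which the only substantive case is the fixpoint, handled by verifying the factoring hypothesis of \Cref{generalfixpointconstant} from the induction hypothesis for the body (for both the original and the frozen valuation), applying that lemma to reduce the function-level fixpoint to a pointwise one, and reading off the ``moreover'' clauses from the same computation. Your additional observation about why composition-freeness is exactly what keeps every subformula evaluated at the fixed argument $\xpel$ is a correct and useful gloss on the hypothesis, but the argument itself matches the paper's.
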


    \begin{proof}
        We prove this by induction on \(\fml\).
        The only interesting case is for formulas of the form \(\flfp{\pvar}{\fml}\).
        We define the following monotone functions
        \[\nftonffunc_1(\wreg) = \semflcf[\nst]{\valsubst[\val]{\pvar}{\wreg}}{\fml}
            \qquad
            \nftonffunc_2(\wreg) = \semflcf[\nst]{\valsubst[{\valconst[\val]{\xpel}}]{\pvar}{\wreg}}{\fml}\]
        and prove that they satisfy the assumptions of \Cref{generalfixpointconstant}.
        Two applications of the inductive hypothesis yield
        \begin{align*}
            \nftonffunc_1(\wreg)(\ypel)
             & =
            \semflcf[\nst]{\valsubst[\val]{\pvar}{\wreg}}{\fml}(\ypel)
            =
            \semflcf[\nst]{\valconst[{\valsubst[\val]{\pvar}{\wreg}}]{\ypel}}{\fml}(\ypel)
            \\
             & =
            \semflcf[\nst]{\valconst[{\valsubst[\val]{\pvar}{\nfuncconst{\wreg(\ypel)}}}]{\ypel}}{\fml}(\ypel)
            =
            \semflcf[\nst]{\valsubst[\val]{\pvar}{\nfuncconst{\wreg(\ypel)}}}{\fml}(\ypel)
            \\
             & =
            \nftonffunc_1(\nfuncconst{\wreg(\ypel)})(\ypel)
        \end{align*}
        for all \(\wreg\in\wregs{\nstdom{\nst}}\) and all \(\ypel\in\pow{\nstdom{\nst}}\).
        The case for \(\nftonffunc_2\) is similar.

        Note also that
        \(\valsubst[{\valconst[\val]{\xpel}}]{\pvar}{\nfuncconst{\ypel}}=\valconst[{\valsubst[\val]{\pvar}{\nfuncconst{\ypel}}}]{\xpel}\)
        and hence by the induction hypothesis \(\nftonffunc_1(\nfuncconst{\ypel})(\xpel)=\nftonffunc_2(\nfuncconst{\ypel})(\xpel)\)
        for all \(\ypel\in\pow{\nstdom{\nst}}\).

        By \Cref{generalfixpointconstant} we compute
        \begin{align*}
            \semflcf[\nst]{\val}{\flfp{\pvar}{\fml}}(\xpel)
             & =
            \flfp{\wreg}{\nftonffunc_1(\wreg)}(\xpel)
            = \flfp{\ypel}{(\nftonffunc_1(\nfuncconst{\ypel})(\xpel))}
            \\
             & = \flfp{\ypel}{(\nftonffunc_2(\nfuncconst{\ypel})(\xpel))}
            =
            \semflcf[\nst]{\valconst[\val]{\xpel}}{\flfp{\pvar}{\fml}}(\xpel).
        \end{align*}
        The case for greatest fixpoints is symmetric.
    \end{proof}

    A similar result holds for \glmurlname:
    \begin{lemma}\label{keylemmagl}
        Suppose \(\fml\) is a formula and \(\gam\)
        a game of \glmurlname
        which is \rightlinear in \(\gvarb\)
        and \(\xpel,\ypel\in\pow{\nstdom{\nst}}\).
        Then
        \begin{align*}
             & \semglg[\nst]{\val}{\gam}(\xpel)=\semglg[\nst]{\valsubst[\val]{\pvarb}{\nfuncconst{\val(\gvarb)(\xpel)}}}{\gam}(\xpel).
        \end{align*}
        Moreover
        \begin{enumerate}
            \item \(\semglg[\nst]{\val}{\glfp{\pvar}{\gam}}(\xpel)
                  =
                  \lfp{\ypel}{\semglg[\nst]{\valsubst[\val]{\pvar}{\nfuncconst{\ypel}}}{\gam}(\xpel)}\)
            \item \(\semglg[\nst]{\val}{\ggfp{\pvar}{\gam}}(\xpel)
                  =
                  \gfp{\ypel}{\semglg[\nst]{\valsubst[\val]{\pvar}{\nfuncconst{\ypel}}}{\gam}(\xpel)}\)
        \end{enumerate}
    \end{lemma}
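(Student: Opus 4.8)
The plan is to mirror the proof of \Cref{keylemmamu} for \flcnameshort, with ``formula without composition'' replaced by ``\rightlinear game'' and \Cref{gamesfparefp} supplying the required monotonicity. Concretely, I would establish the displayed identity $\semglg[\nst]{\val}{\gam}(\xpel)=\semglg[\nst]{\valconst[\val]{\xpel}}{\gam}(\xpel)$ by a structural induction on normal-form \glmurlshortname games (the formula layer is only reached through closed tests, so it contributes nothing beyond the ordinary coincidence lemma), and then read off the two pointwise fixpoint identities. Most cases are routine: for atomic $\gatom$ and $\gdual{\gatom}$ the semantics does not depend on $\val$; for a variable $\gvar$ one has $\semglg[\nst]{\val}{\gvar}(\xpel)=\val(\gvar)(\xpel)=\semglg[\nst]{\valconst[\val]{\xpel}}{\gvar}(\xpel)$ straight from the definition of $\valconst[\val]{\xpel}$ as the pointwise-constant valuation; choices $\gam\gor\gamb$, $\gam\gand\gamb$ follow componentwise from the inductive hypothesis; and for tests $\gtest{\fml}$, $\gdtest{\fml}$ the formula $\fml$ has no free fixpoint variable, so $\semglf[\nst]{\val}{\fml}=\semglf[\nst]{\valconst[\val]{\xpel}}{\fml}$ by coincidence. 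Composition $\gam\gcom\gamb$ is exactly where \rightlinear{}ity enters: since $\gam$ has no free fixpoint variable, $\semglg[\nst]{\val}{\gam}$ is independent of $\val$, so $\semglg[\nst]{\val}{\gam\gcom\gamb}(\xpel)=\semglg[\nst]{\val}{\gam}(\semglg[\nst]{\val}{\gamb}(\xpel))=\semglg[\nst]{\valconst[\val]{\xpel}}{\gam}(\semglg[\nst]{\valconst[\val]{\xpel}}{\gamb}(\xpel))$ using the inductive hypothesis on $\gamb$.

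The heart of the argument is the fixpoint case $\glfp{\pvar}{\gam}$. Following \Cref{keylemmamu} I would introduce the two monotone operators $\nftonffunc_1(\wreg)=\semglg[\nst]{\valsubst[\val]{\pvar}{\wreg}}{\gam}$ and $\nftonffunc_2(\wreg)=\semglg[\nst]{\valsubst[{\valconst[\val]{\xpel}}]{\pvar}{\wreg}}{\gam}$ on $\wregs{\nstdom{\nst}}$, monotone by \Cref{gamesfparefp}. The inductive hypothesis applied to the strictly smaller game $\gam$ shows that each $\nftonffunc_j$ meets the hypothesis of \Cref{generalfixpointconstant}, i.e. $\nftonffunc_j(\wreg)(\ypel)=\nftonffunc_j(\nfuncconst{\wreg(\ypel)})(\ypel)$, because $\valconst[{\valsubst[\val]{\pvar}{\wreg}}]{\ypel}$ and $\valconst[{\valsubst[\val]{\pvar}{\nfuncconst{\wreg(\ypel)}}}]{\ypel}$ are literally the same valuation; the same bookkeeping gives $\nftonffunc_1(\nfuncconst{\ypel})(\xpel)=\nftonffunc_2(\nfuncconst{\ypel})(\xpel)$. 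Then \Cref{generalfixpointconstant} yields
\[
\semglg[\nst]{\val}{\glfp{\pvar}{\gam}}(\xpel)=\lfp{\ypel}{\nftonffunc_1(\nfuncconst{\ypel})(\xpel)}=\lfp{\ypel}{\nftonffunc_2(\nfuncconst{\ypel})(\xpel)}=\semglg[\nst]{\valconst[\val]{\xpel}}{\glfp{\pvar}{\gam}}(\xpel),
\]
which closes the induction; the first of these equalities, unfolding $\nftonffunc_1$, is exactly item~(1), and item~(2) is symmetric, using the greatest-fixpoint half of \Cref{generalfixpointconstant} with the monotone operator $\wreg\mapsto\semglg[\nst]{\valsubst[\val]{\pvar}{\wreg}}{\gam}$ for $\ggfp{\pvar}{\gam}$.

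The main obstacle is this fixpoint case, and specifically checking that $\nftonffunc_1$ and $\nftonffunc_2$ satisfy the hypothesis of \Cref{generalfixpointconstant}: this forces one to track carefully how $\valconst{}$ commutes with substitution into $\gam$, and it is the only place where the inductive hypothesis is used nontrivially. The one remaining point of care is the variable case in the \glmurlshortname normal form, where the identity for $\gvar$ must be read off directly from the definition of $\valconst[\val]{\xpel}$ as above; everything else is a mechanical recursion parallel to the proof of \Cref{keylemmamu}.
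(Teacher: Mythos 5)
Your proposal is correct and follows essentially the same route as the paper: a simultaneous structural induction in which the test and composition cases exploit \rightlinear{}ity (the left component and test formulas have no free fixpoint variables, so their semantics is valuation-independent), and the fixpoint case introduces the two operators $\nftonffunc_1,\nftonffunc_2$ and applies \Cref{generalfixpointconstant} exactly as the paper does. The only difference is presentational — you spell out the routine atomic/variable/choice cases that the paper omits.
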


    \begin{proof}
        By \Cref{normalform} we prove this by induction
        on \wellnamed \glmurlshortname games in normal form.
        The interesting cases are presented.

        \begin{caselist}
            \case{\(\gtest{\fml}\)}
            Since \(\fml\) does not have any free variables by definition
            of \rightlinear games.
            \[\semglg[\nst]{\val}{\gtest{\fml}}(\xpel)
                =\semglf[\nst]{\val}{\fml}\cap\xpel
                =\semglg[\nst]{\valsubst[\val]{\pvarb}{\nfuncconst{\val(\gvarb)(\xpel)}}}{\gtest{\fml}}(\xpel).\]

            \case{\(\gam\gcom\gamb\)}
            \begin{align*}
                \semglg[\nst]{\val}{\gam\gcom\gamb}(\xpel)
                 & =
                \semglg[\nst]{\val}{\gam}(\semglg[\nst]{\val}{\gamb}(\xpel))
                \\
                 & =
                \semglg[\nst]{\valsubst[\val]{\pvarb}{\nfuncconst{\val(\gvarb)(\xpel)}}}{\gam}(\semglg[\nst]{\val}{\gamb}(\xpel))
                \\
                 & =
                \semglg[\nst]{\valsubst[\val]{\pvarb}{\nfuncconst{\val(\gvarb)(\xpel)}}}{\gam}(\semglg[\nst]{\valsubst[\val]{\pvarb}{\nfuncconst{\val(\gvarb)(\xpel)}}}{\gamb}(\xpel))
                \\
                 & =
                \semglg[\nst]{\valsubst[\val]{\pvarb}{\nfuncconst{\val(\gvarb)(\xpel)}}}{\gam\gcom\gamb}(\xpel)
            \end{align*}
            where the second equality uses that \(\gam\) does not have any
            free variables and the third equality is by induction hypothesis.

            \case{\(\glfp{\pvar}{\gam}\)}
            Define the following monotone functions
            \[\nftonffunc_1(\wreg) = \semglg[\nst]{\valsubst[\val]{\pvar}{\wreg}}{\gam}
                \qquad
                \nftonffunc_2(\wreg) = \semglg[\nst]{\valsubst[\valsubst{\pvarb}{\nfuncconst{\val(\pvarb)(\xpel)}}]{\pvar}{\wreg}}{\gam}\]
            and prove that they satisfy the assumptions of \Cref{generalfixpointconstant}.
            An application of the inductive hypothesis on \(\gam\) yields
            \begin{align*}
                \nftonffunc_1(\wreg)(\ypel)
                 & =
                \semglg[\nst]{\valsubst[\val]{\pvar}{\wreg}}{\gam}(\ypel)
                \\
                 & =
                \semglg[\nst]{\valsubst[\val]{\pvar}{\nfuncconst{\wreg(\ypel)}}}{\gam}(\ypel)
                =
                \nftonffunc_1(\nfuncconst{\wreg(\ypel)})(\ypel)
            \end{align*}
            for all \(\wreg\in\wregs{\nstdom{\nst}}\) and all \(\ypel\in\pow{\nstdom{\nst}}\).
            The case for \(\nftonffunc_2\) is similar.

            Note that
            by the induction hypothesis \(\nftonffunc_1(\nfuncconst{\ypel})(\xpel)=\nftonffunc_2(\nfuncconst{\ypel})(\xpel)\)
            for all \(\ypel\in\pow{\nstdom{\nst}}\).
            Hence by \Cref{generalfixpointconstant}
            \begin{align*}
                \semglg[\nst]{\val}{\glfp{\pvar}{\gam}}(\xpel)
                 & =
                \flfp{\wreg}{\nftonffunc_1(\wreg)}(\xpel)
                = \flfp{\ypel}{(\nftonffunc_1(\nfuncconst{\ypel})(\xpel))}
                \\
                 & = \flfp{\ypel}{(\nftonffunc_2(\nfuncconst{\ypel})(\xpel))}
                =
                \semglg[\nst]{{\valsubst[\val]{\pvarb}{\nfuncconst{\val(\gvarb)(\xpel)}}}}{\glfp{\pvar}{\gam}}(\xpel).
            \end{align*}
            The case for corecursive games \(\ggfp{\gvar}{\gam}\) is analogous.
        \end{caselist}
    \end{proof}

    \begin{lemma} \label{lemmaaddremovex}
        Suppose \(\fmlb\) is an \flcnameshort formula with no
        free variables other than \(\pvar\) and in which \(\pvar\) is
        not bound. Then \(\fmlb\semequiv\freplace{\fmlb}{\pvar}{\fid}\fcom\pvar\).
    \end{lemma}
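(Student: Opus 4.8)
The plan is to strengthen the statement to a pointwise identity about predicate transformers and prove it by structural induction on $\fmlb$: for every \nstname $\nst$, every \valname $\val$ and every $\xpel\in\pow{\nstdom{\nst}}$,
\[
\semflcf[\nst]{\val}{\fmlb}(\xpel)\;=\;\semflcf[\nst]{\val}{\freplace{\fmlb}{\pvar}{\fid}}(\val(\pvar)(\xpel)).
\]
This suffices, because $\semflcf[\nst]{\val}{\freplace{\fmlb}{\pvar}{\fid}\fcom\pvar}(\xpel)=\semflcf[\nst]{\val}{\freplace{\fmlb}{\pvar}{\fid}}(\semflcf[\nst]{\val}{\pvar}(\xpel))=\semflcf[\nst]{\val}{\freplace{\fmlb}{\pvar}{\fid}}(\val(\pvar)(\xpel))$, so the pointwise identity is exactly the claimed equivalence $\fmlb\semequiv\freplace{\fmlb}{\pvar}{\fid}\fcom\pvar$. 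The induction is run over $\fmlb$ viewed as a composition-free formula; this is all that is needed in the applications, where $\fmlb$ is a \separable \lmushortname formula, and it is precisely what makes both the leaf cases and the fixpoint case close. (In particular I would remark at the outset that the side conditions below hold automatically in those applications.)

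The leaf and propositional cases are immediate. For $\fmlb=\pvar$ we have $\freplace{\pvar}{\pvar}{\fid}=\fid$ and both sides equal $\val(\pvar)(\xpel)$. For $\fmlb=\patom$ and $\fmlb=\fnot\patom$ the variable $\pvar$ does not occur, so $\freplace{\fmlb}{\pvar}{\fid}=\fmlb$ and both sides equal $\nstpv{\nst}(\patom)$ (respectively its complement), independently of the argument. For $\fmlb=\fml\for\fmlc$ and $\fmlb=\fml\fand\fmlc$ the substitution distributes over the connective and $\semflcf$ commutes with $\union$ and $\intersection$, so the identity follows from the two inductive hypotheses. For $\fmlb=\fdia{\gatom}{\fml}$, and dually $\fbox{\gatom}{\fml}$, one uses $\semflcf[\nst]{\val}{\fdia{\gatom}{\fml}}(\xpel)=\nstnf{\nst}(\gatom)(\semflcf[\nst]{\val}{\fml}(\xpel))$ together with the inductive hypothesis for $\fml$.

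The fixpoint case is the one to handle with care, and it is where \Cref{keylemmamu} does the work. Take $\fmlb=\flfp{\pvarb}{\fml}$; since $\pvar$ is the only free variable of $\fmlb$ and is not bound, $\pvarb\neq\pvar$, so $\freplace{(\flfp{\pvarb}{\fml})}{\pvar}{\fid}=\flfp{\pvarb}{\freplace{\fml}{\pvar}{\fid}}$ with admissible substitution. Because $\fmlb$ is composition-free, \Cref{keylemmamu} evaluates both fixpoints pointwise:
\[
\semflcf[\nst]{\val}{\flfp{\pvarb}{\fml}}(\xpel)=\lfp{\ypel}{\bigl(\semflcf[\nst]{\valsubst[\val]{\pvarb}{\nfuncconst{\ypel}}}{\fml}(\xpel)\bigr)},
\]
\[
\semflcf[\nst]{\val}{\freplace{(\flfp{\pvarb}{\fml})}{\pvar}{\fid}}(\val(\pvar)(\xpel))=\lfp{\ypel}{\bigl(\semflcf[\nst]{\valsubst[\val]{\pvarb}{\nfuncconst{\ypel}}}{\freplace{\fml}{\pvar}{\fid}}(\val(\pvar)(\xpel))\bigr)}.
\]
It then suffices to show the two operators being iterated agree for each $\ypel$, which, using $(\valsubst[\val]{\pvarb}{\nfuncconst{\ypel}})(\pvar)=\val(\pvar)$, is exactly the inductive hypothesis applied to the subformula $\fml$ under the valuation $\valsubst[\val]{\pvarb}{\nfuncconst{\ypel}}$. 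The case $\fmlb=\fgfp{\pvarb}{\fml}$ is symmetric via the greatest-fixpoint clause of \Cref{keylemmamu}. The main obstacle is thus not a single computation but the bookkeeping in this case: checking that $\freplace{\cdot}{\pvar}{\fid}$ commutes with the fixpoint unfolding and with the update of the valuation at $\pvarb$ (which is what the hypotheses "$\pvar$ not bound" and "only $\pvar$ free" secure), and that \Cref{keylemmamu} genuinely applies, i.e. that $\fmlb$ is composition-free.
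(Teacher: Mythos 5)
Your proposal is correct and follows essentially the same route as the paper: induction on the formula with the pointwise identity $\semflcf[\nst]{\val}{\fmlb}(\xpel)=\semflcf[\nst]{\val}{\freplace{\fmlb}{\pvar}{\fid}}(\val(\pvar)(\xpel))$ as the (implicit, in the paper) induction invariant, and \Cref{keylemmamu} doing the work in the fixpoint case. Your explicit remark that the argument requires $\fmlb$ to be composition-free (as it must, since e.g. $\pvar\fcom\pvar$ would be a counterexample otherwise) is a point the paper leaves tacit, but it matches how the lemma is actually applied.
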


    \begin{proof}
        By induction on the formula \(\fmlb\).
        The only interesting case is for formulas of the
        form \(\flfp{\pvarb}{\fmlb}\).
        By \Cref{keylemmamu}
        \begin{align*}
            {}\semflcf{\val}{(\flfp{\pvarb}{\freplace{\fmlb}{\pvar}{\fid}})\fcom\pvar}(\xpel)
             & =\flfp{\ypel}{(\semflcf[\nst]{\valsubst[\val]{\pvarb}{\nfuncconst{\ypel}}}{\freplace{\fmlb}{\pvar}{\fid}}(\semflcf{\val}{\pvar}(\xpel)))}
            \\
             & =\flfp{\ypel}{(\semflcf[\nst]{\valsubst[\val]{\pvarb}{\nfuncconst{\ypel}}}{\fmlb}(\xpel))}
            \\
             & =  {}
            \semflcf{\val}{\flfp{\pvarb}{\fmlb}}(\xpel)
        \end{align*}
        where the second equality is by the induction hypothesis.
    \end{proof}

    \section{Proofs}\label{appproofs}
    \newcommand{\printproofsec}[1]{%
        \subsection{Proofs of Section \ref{#1}}
        \printProofs[#1]
    }
    \printproofsec{prelim}
    \printproofsec{secgamelogics}
    \printproofsec{expressiveness}
    \printproofsec{calculi}

\fi

\end{document}